\newtheorem{theorem}{Theorem}%[section]
\newtheorem{corollary}{Corollary}%[section]
\newtheorem{lemma}{Lemma}%[section]
\newtheorem{proposition}{Proposition}%[section]
\long\gdef\blind#1#2{\ifbld{#1}\else{#2}\fi} %% Text for blinding shown in black
\newif\ifbld \bldfalse
\newcommand{\vect}[1]{\mbox{\boldmath $ #1$}}
\def\bbR{{\mathbb R}}
\def\bbN{{\mathbb N}}
\def\E{{\mathbb E}}
\def\MM{ {\mathcal{M}} }
\def\MT{ {\mathcal{M}_T} }
\def\J{ {\mathcal{J}} }
\def\b{\boldsymbol{\beta}}
\def\e{\boldsymbol{\epsilon}}
\newcommand{\Emptyset}{\text{\o}}
\def\Mnull{{\cal M}_{\Emptyset}}
\def\g{\boldsymbol{\gamma}}
\def\eg{\boldsymbol{\eta}_{\MM}}
\def \s2{\sigma^2}
\def\X{\mathbf{X}}
\def\x{\mathbf{x}}
\def\Y{\mathbf{Y}}
\def\one{\mathbf{1}}
\def\zero{\mathbf{0}}
\newcommand{\I}{\mathbf{I}}
\renewcommand{\P}{\mathcal{P}}
\def\uv{\mathbf{u}}
\def\W{\mathbf{W}}
\def\vv{\mathbf{v}}
\def\N{\textsf{N}}
\def\BF{\textsf{BF}}
\def\diag{\textsf{diag}}
\def\rank{\rho}
\def\convd{\stackrel{D}{\longrightarrow}} %% convergence in distribution
\title{Mixtures of $g$-Priors in Generalized Linear Models} \date{}
\author{Yingbo Li\footnote{Yingbo Li, Department of Mathematical Sciences, Clemson University, Clemson,
      SC 29634 (e-mail: {\tt carolli13@gmail.com})}  and Merlise
    A.~Clyde\footnote{Merlise A. Clyde is Professor and Chair,
      Department of Statistical Science, Duke University, Durham, NC
      27708 (e-mail: {\tt clyde@duke.edu})}  }
\begin{document}

\maketitle

%\externaldocument{appendix}

%%%%%%%%%%%%%%%%%%%%%%%%%%%%%%%%%%%%%%%%%%%%%%%%%%%%
% abstract
%%%%%%%%%%%%%%%%%%%%%%%%%%%%%%%%%%%%%%%%%%%%%%%%%%%%
\begin{abstract}

  Mixtures of Zellner's $g$-priors have been studied extensively in
  linear models and have been shown to have numerous desirable
  properties for Bayesian variable selection and model averaging.
  Several extensions of $g$-priors to Generalized Linear Models (GLMs)
  have been proposed in the literature; however, the choice of prior
  distribution of $g$ and resulting properties for inference have
  received considerably less attention. In this paper, we unify
  mixtures of $g$-priors in GLMs by assigning the truncated Compound
  Confluent Hypergeometric (tCCH) distribution to 
  %the complementary shrinkage factor
  $1/(1+g)$, which encompasses as special cases several mixtures of
  $g$-priors in the literature, such as the hyper-$g$, Beta-prime,
  truncated Gamma, incomplete inverse-Gamma, benchmark, robust,
  hyper-$g/n$, and intrinsic priors.  Through an integrated Laplace
  approximation, the posterior distribution of $1/(1+g)$ is in turn a
  tCCH distribution, and approximate marginal likelihoods are thus
  available analytically, leading to ``Compound Hypergeometric
  Information Criteria'' for model selection.  We discuss the local
  geometric properties of the $g$-prior in GLMs and show how the
  desiderata for model selection proposed by Bayarri et al, such as
  asymptotic model selection consistency, intrinsic consistency, and
  measurement invariance may be used to justify the prior and specific
  choices of the hyper parameters.  We illustrate inference using
  these priors and contrast them to other approaches  via
  simulation and real data examples.  \blind{An {\tt R} package on
    CRAN is available to implement the methodology.}{ The methodology
    is implemented in the {\tt R} package {\tt BAS} and freely
    available on {\tt CRAN}.}

\end{abstract}

\noindent%
{\it Keywords:}  Bayesian model selection, 
Bayesian model averaging, variable selection, linear regression, hyper-$g$ priors

%\newpage\tableofcontents %% to show outline, too be removed later

%%%%%%%%%%%%%%%%%%%%%%%%%%%%%%%%%%%%%%%%%%%%%%%%%%%%%
%\input{intro.tex}

%%%%%%%%%%%%%%%%%%%%%%%%%%%%%%%%%%%%%%%%%%%%%%%%%%%
\section{Introduction} %%  Section 1
%%%%%%%%%%%%%%%%%%%%%%%%%%%%%%%%%%%%%%%%%%%%%%%%%%%
Subjective elicitation of prior distributions for variable selection,
quickly becomes intractable as the number of potential
variables $p$ increases, motivating objective or conventional prior
distributions for default usage \citep{Berger_Pericchi_2001}.  In the
context of linear models, Zellner's $g$-prior and
mixtures of $g$-priors have witnessed widespread use due to
computational tractability, consistency, invariance, and other
desiderata \citep{Liang_etal_2008, Bayarri_etal_2012, Ley_Steel_2012},
leading to the preference of these priors over many other
conventional prior distributions \citep{Forte_et_al_2016}.

\citet{Zellner_1983, Zellner_1986} proposed the $g$-prior as a simple
partially informative distribution in Gaussian regression models
$\Y = \X \b + \e$, $\e \sim \N(0, \sigma^2 \mathbf{I}_n)$.  Through
the use of imaginary responses at the observed design matrix $\X$, he
obtained a conjugate Gaussian prior distribution
$\b \mid \sigma^2 \sim \N(\vect{b}_0, g \sigma^2 (\X^T\X)^{-1})$, with
an informative mean $\vect{b}_0$, but having a covariance matrix that
was a scaled version of the covariance matrix of the maximum
likelihood estimator\footnote{We follow the now standard notation,
  however, in Zellner's papers the prior covariance appears as
  $(\sigma^2/g) (\X^T\X)^{-1}$}, $g \sigma^2 (\X^T\X)^{-1}$.  This
greatly simplified elicitation to two quantities: the prior mean
$\vect{b}_0$ of the regression coefficients, for which practitioners
often had prior beliefs, and the scalar $g$ which controlled both
shrinkage towards the prior mean and dispersion of the posterior
covariance through the shrinkage factor $g/(1+g)$.

Using Zellner's $g$-prior for Bayesian variable selection (BVS) and
Bayesian model averaging (BMA) requires specification of the
hyper parameters $\vect{b}_0$ and $g$ for each of the $2^p$ submodels,
indexed by $\MM \in \{0, 1\}^{p}$,
\begin{equation} \label{eq:normal-model}
  \Y = \one_n \alpha + \X_\MM \b_\MM + \e
\end{equation}
where $\mathbf{1}_n$ is a
column vector of ones of length $n$, $\alpha$ is the intercept,
$\X_\MM$ is a model specific design matrix with $p_\MM$ columns 
assumed to be full rank, and $\b_\MM$ is the vector of length $p_\MM$
of the non-zero regression coefficients in model $\MM$.
The most common formulation of
Zellner's $g$-prior for BMA/BVS \citep{Fernandez_etal_2001,Liang_etal_2008}
assigns an
independent Jeffreys prior to $\alpha$ and $\sigma^2$ \vspace{-12pt}
\begin{align}
p(\alpha) & \propto 1, \label{eq:alpha-prior}\\
p(\sigma^2) & \propto 1/\sigma^2, \label{eq:sigma-prior} 
\intertext{and a $g$-prior using the centered predictors $(\I_n -
              \mathcal{P}_{\one_n})\X_\MM$ of the form }
\b_\MM \mid \sigma^2, g, \MM & \sim 
\text{N}\left( \zero_{p_\MM}, g \sigma^2 (\X_\MM^{T}(\I_n - \mathcal{P}_{\one_n})
                               \X_\MM)^{-1}  \right), \label{eq:gprior}
\end{align}
where $\mathcal{P}_{\one_n} = \one_n(\one_n^T\one_n)^{-1} \one_n^T$ is
the orthogonal projection on the space spanned by the column vector
$\one_n$.  To justify the use of the improper reference priors on the
intercept and variance \eqref{eq:alpha-prior}-\eqref{eq:sigma-prior},
it is often assumed the columns of the design matrix $\X_\MM$ must be
orthogonal to $\one_n$ so that
the expected Fisher Information is block diagonal.  In that case
$\X_\MM^{T}(\I_n - \mathcal{P}_{\one_n}) \X_\MM$ in \eqref{eq:gprior}
reduces to the more familiar
$\X_\MM^T\X_\MM$. \citet{Bayarri_etal_2012}, however, argue that
measurement invariance, which leads to \eqref{eq:alpha-prior} and
\eqref{eq:sigma-prior}, combined with predictive matching (discussed
in more detail later) lead to the form of the $g$-prior above without
the explicit centering of the predictors in the sampling model
\eqref{eq:normal-model}. Both model parameterizations, however, lead
to equivalent posterior distributions through a change of variables.

It is well known that the choice of $g$ affects  shrinkage in
estimation/prediction as well as posterior probabilities of models, 
with various approaches being put forward to determine a $g$
with desirable properties. Independent of \citeauthor{Zellner_1983},
\citet{Copas_1983, Copas_1997} arrived at $g$-priors in linear and
logistic regression by considering shrinkage of maximum likelihood
estimators (MLEs) to improve prediction and estimation, as in
James-Stein estimators, proposing empirical Bayes estimates of the
shrinkage factor to improve frequentist properties of the estimators.
Related to \citeauthor{Copas_1983}, \citet{Foster_George_1994} considered
risk and expected loss in selecting $g$, \citet{George_Foster_2000}
derived global empirical Bayes estimators, while \citet{Hansen_Yu_2003}
derived model specific local empirical Bayes estimates of $g$ from an
information theory perspective.  \citet{Fernandez_etal_2001} studied
consistency of BMA under $g$-priors in linear models, recommending 
$g= \max(p^2, n)$, which lead to Bayes
factors that behave like BIC when $g = n$ or the Risk Inflation
Criterion \citep{Foster_George_1994} when $g=p^2$.

Mixtures of $g$-priors, obtained by specifying a prior distribution on
the hyper parameter $g$ in \eqref{eq:gprior}, include the Cauchy prior
of \citet{Zellner_Siow_1980}, the hyper-$g$ and related hyper-$g/n$
priors \citep{Liang_etal_2008, Cui_George_2008}, the Beta-prime prior
\citep{Maruyama_George_2011}, the robust prior
\citep{Bayarri_etal_2012}, and the intrinsic prior
\citep{Casella_Moreno_2006, Womack_etal_2014}, among others. 
Mixtures of $g$-priors not only inherit the desirable measurement
invariance property from the $g$-prior but under a range of hyper
parameters also resolve the information paradox
\citep{Liang_etal_2008} and Bartlett's paradox \citep{Bartlett_1957,
  Lindley_1968} that occur with a fixed $g$, meanwhile leading to
asymptotic consistency for model selection and estimation and other
attractive theoretical  properties
\citep{Liang_etal_2008, Maruyama_George_2011, Bayarri_etal_2012,
  Feldkircher_Zeugner_2009, Celeux_etal_2012,
  Ley_Steel_2012, Feldkircher_2012, Fouskakis_Ntzoufras_2013}.
Furthermore, by yielding exact or analytic expressions for marginal
likelihoods in tractable forms, these mixtures of $g$-priors enjoy
most of the computational efficiency of the original $g$-prior,
permitting efficient computational algorithms for stochastic search of
the posterior distribution over the model space
\citep{Clyde_etal_2011}.

For generalized linear models (GLMs), many variants of $g$-priors have
been proposed in the literature, including \citet{Copas_1983, Copas_1997,
  Kass_Wasserman_1995, Hansen_Yu_2003, Rathbun_Fei_2006,
  Marin_Robert_2007, Wang_George_2007, Fouskakis_etal_2009,
  Gupta_Ibrahim_2009, Bove_Held_2011, Hanson_etal_2014,
  Perrakis_etal_2015,Held_etal_2015, Fouskakis_etal_2016}, with
current methods favoring adaptive estimates of $g$ via mixtures of
$g$-priors or empirical Bayes estimates of $g$.  While these priors
have a number of desirable properties, no consensus on an objective
prior has emerged for GLMs.  The seminal paper of
\citet{Bayarri_etal_2012} takes an alternative approach and explores
whether a consensus of criteria or desiderata that any objective prior
should satisfy can  be used  to identify an objective
prior, leading to their recommendation of the ``robust'' prior in Gaussian
variable selection problems.  In this article, we view $g$-priors in
GLMs through this lens seeing if the desiderata can essentially
determine an objective prior in GLMs for practical use.
 
%% hyper prior on $g$
The remainder of the article is arranged as follows.
In Section \ref{section:g-prior_GLM}, we begin by reviewing $g$-priors 
in GLMs and corresponding (approximate)
Bayes factors, and the closely
related Bayes factors based on  test statistics 
\citep{Johnson_2005, Johnson_2008, Hu_Johnson_2009,
  Held_etal_2015}. As tractable expressions are generally unavailable
in GLMs, we focus attention on using an integrated Laplace
approximation and show that $g$-priors based on observed
information lead to distributions that are closed under sampling
(conditionally conjugate).  
To unify results with linear models and $g$-priors in GLMs, in Section
\ref{section:mix_g} we introduce the truncated Compound Confluent
Hypergeometric distribution \citep{Gordy_1998a}, a flexible
generalized Beta distribution, which encompasses current mixtures of $g$-priors
as special cases.   This
leads to a new family of ``Compound Hypergeometric Information
Criteria'' or CHIC.
In Section \ref{section:criteria} we review the desiderata for model selection priors
of \citet{Bayarri_etal_2012} and use them to establish theoretical properties of
the CHIC family, which provides general
recommendations for hyper parameters.  In Section \ref{sec:examples}, we
study the BVS and BMA performance of the CHIC $g$-prior with various 
hyper parameters, using
simulation studies and  the GUSTO-I data
\citep{Steyerberg_2009, Held_etal_2015}. Finally in Section
\ref{section:conclusion}, we summarize recommendation and discuss
directions for future research.

% \input{gprior.tex}

%%%%%%%%%%%%%%%%%%%%%%%%%%%%%%%%%%%%%%%%%%%%%%%%%%%%%
\section{$g$-Priors in  Generalized Linear Models}\label{section:g-prior_GLM} %% Section 2
%%%%%%%%%%%%%%%%%%%%%%%%%%%%%%%%%%%%%%%%%%%%%%%%%%%%%
To begin we define notation and assumptions for the generalized linear
models (GLMs) under consideration.
GLMs arise from distributions within the exponential family
\citep{McCullagh_Nelder_1989}, with density
\begin{equation}\label{eq:likelihood}
p(Y_i) =  
\exp\left\{ \frac{Y_i  \theta_i -b\left(\theta_i \right)}{a(\phi_0)} + c(Y_i, \phi_0) \right\},
\quad  i = 1, \ldots, n,
\end{equation}
where $a(\cdot), b(\cdot)$ and $c(\cdot, \cdot)$ are specific functions that determine 
the distribution.
The mean and variance for each observation $Y_i$ can be written 
as $\mathbb{E}(Y_i) =  b^{\prime}(\theta_i)$ and 
$\mathbb{V}(Y_i) =a(\phi_0) b^{\prime\prime}(\theta_i)$, respectively,
where $b^{\prime}(\cdot)$ and $b^{\prime\prime}(\cdot)$ are the first and second derivatives of $b(\cdot)$.
In \eqref{eq:likelihood},  $Y_1, \ldots, Y_n$ are independent but not
identically distributed,
as  their corresponding canonical parameters $\theta_1, \ldots, \theta_n$ 
are linked with the predictors via $\theta_i = \theta(\eta_{\MM, i})$, 
where $\eta_{\MM, i}$ is the $i$-th entry of the linear predictor 
\begin{equation}
  \label{eq:eta}
  \boldsymbol\eta_\MM =  \mathbf{1}_n \alpha + \mathbf{X}_\MM \boldsymbol\beta_\MM  \end{equation}
under model $\MM$, providing the ``linear model''.  Under this
parameterization, the canonical link corresponds to  the identity function for $\theta(\cdot)$.

To begin, we will assume that the scale parameters are fixed, with
$a(\phi_0) = \phi_0/w_i$ with known $\phi_0$ and $w_i$, a weight that
may vary with the observation.  This includes popular GLMs such as
binary and Binomial regression, Poisson regression, and
heteroscedastic normal linear model with known variances.  Later in
Section \ref{section:mix_g}, we will relax the assumption of known
$\phi_0$ to illustrate the connections between the prior distributions
developed here and existing mixtures of $g$-priors in normal linear
models with unknown precision $\phi_0 = 1/\sigma^2$, and extend
results to consider GLMs with over-dispersion.

Unless specified otherwise, we assume that the design matrix $\X$
under the full model has full column rank $p$ and the column space
$C(\X)$ does not contain $\mathbf{1}_n$.  Furthermore, we assume that
the true model, $\MT$, is included in the $2^p$ models under
consideration.  Under $\MT$, true values of the intercept and
regression coefficients are denoted by $\alpha_{\MT}^*, \b_{\MT}^*$.
For a model $\MM$, if $\X_\MM$ contains all columns of $\X_\MT$
(including the case that $\MM = \MT$), we say $\MM \supset \MT$,
otherwise, $\MM \not \supset \MT$.  The MLEs
$\hat{\alpha}_\MM, \hat{\b}_\MM$ are assumed to exist and are
unique. Under standard regularity conditions provided in the
supplementary materials Appendix \ref{section:assumptions}, MLEs are
consistent and asymptotically normal.  In Section \ref{sec:no_MLE} we
will relax the conditions to consider non-full rank design matrices.

In BVS/BMA, posterior
probabilities of models are critical components for posterior
inference, which  
in the context of $g$-priors, may be expressed as
\[
p(\MM \mid \mathbf{Y}, g) = \frac{ p( \mathbf{Y} \mid \MM, g)~ \pi(\MM)}
{\sum_{\MM^{\prime}}p( \mathbf{Y} \mid \MM^{\prime}, g)~ \pi(\MM^{\prime})},
\]
where $\pi(\MM)$ is the prior probability of model $\MM$, and 
\begin{equation}\label{eq:marlik_def}
p(\mathbf{Y} \mid \MM, g) 
=  \iint  p(\mathbf{Y} \mid \alpha, \boldsymbol\beta_\MM, \MM)
             p(\alpha)p(\boldsymbol\beta_\MM \mid \MM, g) 
             ~d\alpha~d\boldsymbol\beta_\MM
\end{equation}
is the marginal likelihood of model $\MM$.  In normal linear
regression, $g$-priors yield closed form marginal likelihoods, which
permits quick posterior probability computation and efficient model
search, by avoiding the time-consuming procedure to sample $\alpha$
and $\boldsymbol\beta_\MM$.  When the likelihood is non-Gaussian,
normal priors no longer are conjugate, however Laplace approximations
to the likelihood \citep{Tierney_Kadane_1986, Tierney_etal_1989}
combined with normal priors such as $g$-priors may be used to achieve
computational efficiency such as in Integrated Nested Laplace
approximations \citep{Rue_etal_2009, Held_etal_2015}.

%%%%%%%%%%%%%%%%%%%%%%%%%%%%%%%%%%%%%%%%%%%%%%%%%%%%%%

%%%%%%%%%%%%%%%%%%%%%%%%%%%%%%%%%%%
\subsection{$g$-Priors in Generalized Linear Models}\label{section:g-prior_literature}

There have been several variants of $g$-priors suggested for GLMs,
starting with \citet{Copas_1983} who proposed a normal prior centered
at zero, with a covariance based on a scaled version of the
inverse expected Fisher information evaluated at the MLE of $\alpha$
and $\b = \zero$.  Under a large sample normal approximation for the
distributions of the MLEs, this leads to conjugate updating and closed
form expressions for Bayes factors.  Unlike Gaussian models, however,
both the observed information $\mathcal{J}_n(\b_\MM)$, which is the
negative Hessian matrix of the log likelihood, and the expected Fisher
information
$\mathcal{I}_n(\b_\MM) = \mathbb{E}[\mathcal{J}_n(\b_\MM)]$, depend on
the parameters $\alpha$ and $\b$, leading to alternative $g$-priors
based on whether the expected information 
\citep{Kass_Wasserman_1995, Hansen_Yu_2003, Marin_Robert_2007,
  Fouskakis_etal_2009, Gupta_Ibrahim_2009, Bove_Held_2011,
  Hanson_etal_2014} or observed information \citep{Wang_George_2007}
is adopted;
they are equal under canonical links when evaluated at the same values.  As
these information matrices depend on $\b_\MM$,  the asymptotic
covariance is typically evaluated at either $\b_\MM = \zero$ or at the model
specific MLE. For expected information,
$\mathcal{I}_n(\b_\MM) = \mathbf{X}_\MM^T
\mathcal{I}_n(\boldsymbol\eta_\MM) \mathbf{X}_\MM$,
with $\mathcal{I}_n(\boldsymbol\eta_\MM)$ a diagonal matrix
whose $i$-th diagonal entry under model $\MM$ is
$\mathcal{I}(\eta_{\MM,i}) = -\mathbb{E}\left[ \partial^2 \log p(Y_i
  \mid \eta_i, \MM)/ \partial \eta_i^2 \right]$,
  for $i = 1, \ldots, n$.  When $\b_\MM = \zero$, all $\eta_{i} = \alpha$
 under all models, and $\mathcal{I}_n(\boldsymbol\eta_\MM)$
 is equal to  $\mathbf{I}_n / c$ where $1/c = \mathcal{I}(\eta) = -\mathbb{E}\left[ \partial^2 \log p(Y
  \mid \eta, \Mnull)/\partial \eta^2 \right]$ is the unit information under the null
model. The resulting 
  $g$-priors have precision matrices that are multiples of
  $\mathbf{X}_\MM^T \mathbf{X}_\MM$ as in the Gaussian case.

%\begin{equation}\label{eq:g-prior_SBH2011_Ibeta} 
%\frac{1}{c}\X_\MM^T (\I_n - \mathcal{P}_{\one_n}) \X_\MM 
%=	\left[\X_\MM^T (\I_n - \mathcal{P}_{\one_n})\right] 
%	\mathcal{I}_n(\hat{\boldsymbol\eta}_{\Mnull})%
%	\left[(\I_n - \mathcal{P}_{\one_n})\X_\MM\right]
%=  \mathcal{I}_n(\b_\MM = \zero)
%\end{equation}
%where $\hat{\boldsymbol\eta}_{\Mnull} = \hat{\alpha}_{\Emptyset}\one_n$ is
%the MLE under the null model.

  Similar in spirit to Zellner's derivation of the $g$-prior, priors
  based on imaginary data have been developed in the context of GLMs
  by \citet{Bedrick_etal_1996, Chen_Ibrahim_2003, Bove_Held_2011,
    Perrakis_etal_2015, Fouskakis_etal_2016} among others.  In
  general, these do not lead to normal prior distributions and
  typically require MCMC methods to sample both parameters and models
  for BVS and BMA.
%Since conjugate priors for non-Gaussian GLMs lack closed forms,
  The $g$-prior introduced by \citet{Bove_Held_2011} and later
  modified by \citet{Held_etal_2015} adopts a large sample approximation  to justify a normal density:
\begin{equation}\label{eq:g-prior_SBH2011} 
\b_\MM \mid g, \MM   \sim 
\text{N}\left( \mathbf{0}, ~ g c ( \X_\MM^T (\I_n - \mathcal{P}_{\one_n}) \X_\MM)^{-1} \right)
\end{equation}
where imaginary samples are generated from the null model $\Mnull$ 
and the constant $c$ is 
inverse of the unit information given above evaluated at
the MLE of $\alpha$ under $\Mnull$. 
For the normal linear regression,  $c =
\sigma^2$ recovers the usual $g$-prior.

Under large sample approximations to the likelihood, the $g$-prior in
\eqref{eq:g-prior_SBH2011} permits conjugate updating, however, unlike
the Gaussian case, evaluating the resulting Bayes factors that contain
ratios of information matrix determinants among others can increase
computational complexity, and thus negates some of the advantages that
made the $g$-prior so popular in linear models.  Classic asymptotic
theory suggests that $\mathcal{I}_n(\b_\MM)$ measures the large sample
precision of $\b_\MM$, while $\mathcal{J}_n(\b_\MM)$ is recommended as
a more accurate measurement of the same quantity
\citep{Efron_Hinkley_1978}.  When the true model $\MT \neq \Mnull$,
evaluating information matrices at the MLE $\hat{\b}_\MM$
\citep{Hansen_Yu_2003, Wang_George_2007} may better capture the large
sample covariance structures of $\b_\MM$ and the local geometry under
model $\MM$.  On the other hand, using large sample approximations to
imaginary data generated from $\MM$ leads to a prior distribution for
$\b_\MM$ that is not centered at zero, and therefore will not satisfy
the predictive matching criterion of \citet{Bayarri_etal_2012}.

Next, we propose a $g$-prior that incorporates the local geometry at
the MLE with the objective of providing a prior that satisfies the
model selection desiderata, provides analytic expressions that permit
deeper understanding of their theoretical properties, and leads to
computationally efficient algorithms under large sample approximations
to likelihoods.

%%%%%%%%%%%%%%%%%%%%%%%%%%%%%%%%%%%
\subsection{Local Information Metric $g$-Prior }\label{section:g-prior}

The invariance and predictive matching criteria in
\citet{Bayarri_etal_2012} lead to adoption of \eqref{eq:alpha-prior}-\eqref{eq:sigma-prior}
for location-scale families.  Although the Poisson and Bernoulli families are
not location-scale families, it is desirable that the prior/posterior
distribution for $\eg$ is invariant under any location changes in
the design matrix $\X_\MM$.  In the following proposition, 
we will use the uniform prior in \eqref{eq:alpha-prior} 
and a second order Taylor expansion of the likelihood
as a starting point, for deriving the (approximate)
integrated likelihood for $\b_\MM$  and subsequent prior distribution
for $\b_\MM$.

\begin{proposition}\label{proposition:marlik_beta}
For any model $\MM$, with a uniform prior $p(\alpha) \propto 1$, 
the marginal likelihood of $\b_\MM$ under model $\MM$ is proportional to
\begin{align}
&	p(\Y \mid \b_\MM, \MM) =
	\int p(\Y \mid \alpha, \b_\MM, \MM)p(\alpha) d\alpha \nonumber \\
\propto
~ &
	p\left(\Y \mid \hat{\alpha}_\MM, \hat{\b}_\MM, \MM\right)
	\mathcal{J}_n(\hat{\alpha}_\MM)^{-\frac{1}{2}}
	\exp\left\{ -\frac{1}{2}\left(\b_\MM - \hat{\b}_\MM\right)^T  
	\mathcal{J}_n(\hat{\b}_\MM)
	 \left(\b_\MM - \hat{\b}_\MM\right) 
	\right\}, \label{eq:marlik_beta}
\end{align}
where the approximation \eqref{eq:marlik_beta} is precise $O(n^{-1})$,
and the observed information of $\boldsymbol\eta_\MM$, $\alpha$, and $\b_\MM$ at the MLEs 
$\hat{\eta}_{\MM,i} = \hat{\alpha}_\MM + \mathbf{x}_{\MM, i}^T\hat{\boldsymbol\beta}_\MM$ are
\begin{align}
{\mathcal{J}}_n(\hat{\boldsymbol\eta}_\MM) 
& = 	\emph{\diag} (d_i) \text{ where } d_i = -Y_i \, \theta^{\prime\prime}(\hat{\eta}_{\MM, i}) 
	+ (b \circ \theta)^{\prime\prime}(\hat{\eta}_{\MM, i}) \text{
  for }
	i = 1, \ldots, n,  
	 \label{eq:obs_info}\\
\mathcal{J}_n(\hat{\alpha}_\MM)
&	= \mathbf{1}_n^T \mathcal{J}_n(\hat{\boldsymbol\eta}_\MM) \mathbf{1}_n, \label{eq:J_alpha}\\
\mathcal{J}_n(\hat{\b}_\MM)
&	= \mathbf{X}_\MM^T (\mathbf{I}_n -  \P_{\mathbf{1}_n})^T
	\mathcal{J}_n(\hat{\boldsymbol\eta}_\MM) (\mathbf{I}_n -  \P_{\mathbf{1}_n})\mathbf{X}_\MM 
	\label{eq:J_beta},
\end{align}
respectively, and 
\begin{eqnarray} \label{eq:projection}
\P_{\mathbf{1}_n} = \mathbf{1}_n
\left( \mathbf{1}_n^T \mathcal{J}_n(\hat{\boldsymbol\eta}_{\MM})
\mathbf{1}_n\right)^{-1} \mathbf{1}_n^T
\mathcal{J}_n(\hat{\boldsymbol\eta}_{\MM})
\end{eqnarray}
is the orthogonal projection onto the span $\mathbf{1}_n$ under the information
$\mathcal{J}_n(\hat{\boldsymbol\eta}_\MM)$ inner product,
$\uv^T\mathcal{J}_n(\hat{\boldsymbol\eta}_\MM)\vv$ for $\uv, \vv \in \bbR^n$. 
\end{proposition}
The proof of Proposition \ref{proposition:marlik_beta} is given in the
supplementary material Appendix \ref{PROOFproposition:marlik_beta}.

The approximate marginal likelihood in \eqref{eq:marlik_beta} is
proportional to a normal kernel of $\b_\MM$ with 
a precision (inverse covariance matrix) that is equal to the marginal
observed information $\mathcal{J}_n(\hat{\b}_\MM)$ and is a function of
the ``centered'' predictors, 
\begin{equation}\label{eq:Xc}
\mathbf{X}_{\MM}^c \stackrel{\triangle}{=} (\I_n - \P_{\one_n})\mathbf{X}_\MM,
\end{equation}
where the column means for centering are weighted averages
$\bar{\x}_{\J, j} = \sum_i d_i x_{ij}/\sum_i d_i$, with the weights 
proportional to $d_i$ in \eqref{eq:obs_info}. 
For non-Gaussian GLMs, $d_i$'s are not equal, and hence 
this centering step is different from the conventional 
procedure that uses the column-wise arithmetic average.

This leads to the following  proposal for a $g$-prior
under all models $\MM$ 
\begin{equation}\label{eq:g-prior_GLM1} 
\boldsymbol\beta_\MM \mid \MM, g   \sim 
\text{N}\left( \mathbf{0}, ~ g\cdot \mathcal{J}_n( \hat{\boldsymbol\beta}_\MM)^{-1} \right).
\end{equation}
The advantage of \eqref{eq:g-prior_GLM1} is two-fold: geometric
interpretability through local orthogonality, which will be
illustrated next, and computational efficiency in Bayes factor
approximation (see Section \ref{sec:marlik}). Note that we may
reparameterize the model \eqref{eq:eta}
\begin{equation} \label{eq:eta-centered}
  \eg = \one_n \alpha + \X_{\MM}^c \b_\MM
\end{equation}
where (with apologies for abuse of notation) $\alpha$ is the intercept
in the centered parameterization.  Under this centered parameterization
and with  $p(\alpha) \propto 1$, the observed information
at the MLEs is block diagonal, and leads to the same marginal
likelihood as in \eqref{eq:marlik_beta}. 

In hypothesis testing, where parameter $\b$ is tested against a null
value $\b_0$ with a nuisance parameter $\alpha$, \citet{Jeffreys_1961}
argues that when the Fisher information is block diagonal for all
values of $\b$ and $\alpha$, improper uniform priors on $\alpha$ can
be justified.  This global orthogonality, however, rarely holds
outside of normal models \citep{Cox_Reid_1987}.  Under a local
alternative hypothesis where the true value of $\b$ is in an
$O(n^{-1/2})$ neighborhood of $\b_0$, \citet{Kass_Vaidyanathan_1992}
show that Bayes factors are not sensitive to prior choices on the
nuisance parameter under a weaker condition of null orthogonality,
where $\mathcal{I}_n(\alpha, \b_0)$ is block diagonal for all $\alpha$
under the null hypothesis.  In particular, under null orthogonality,
the logarithm of the Bayes factor under the unit information prior for
$\b$ can be approximated by BIC with an error of $O_p(n^{-1/2})$
\citep{Kass_Wasserman_1995}.  For GLMs, the $g$-prior
\eqref{eq:g-prior_SBH2011} implies null orthogonality under the
centered reparameterization from $\X_\MM$ to
$(\I_n - \P_{\one_n}) \X_\MM$.
%, but also local orthogonality at $\b_\MM= \hat{\b}_\MM$.

For variable selection, if the true value $\b_\MT^*$ does not lie in an
$O(n^{-1/2})$ neighborhood of the
null value, \citet{Kass_Vaidyanathan_1992} point out that the Bayes
factor will likely be decisive and for practical purposes the accuracy
of BIC does not matter. For model averaging, however, we may wish to
have more precise estimates of Bayes Factors in the posterior
probabilities. For estimation, local orthogonality 
at the MLE, as in the $g$-prior in \eqref{eq:g-prior_GLM1},  
captures the large sample geometry of the likelihood
parameters $(\alpha, \b_\MM)$  better than null orthogonality, and as we will
see, greatly simplifies posterior derivations and theoretical
calculations, and reduces computational complexity.

\citet{Bayarri_etal_2012} note that orthogonalization is
not required for adopting a uniform prior on $\alpha$, but instead the
criteria of predictive matching and location invariance are used to
justify the choice. Integration with respect to an improper uniform
measure on $\alpha$ leads to a marginal likelihood involving a
``centered'' $\X$ that is locally orthogonal to the column of ones
under the information inner product and invariant under any location
changes for the columns of $\X$. The uniform prior on the
intercept in either parameterization with the $g$ prior
\eqref{eq:g-prior_GLM1} leads to equivalent posterior
distributions on $\eg$.   For ease of exposition, however, we
will adopt the centered parameterization in \eqref{eq:eta-centered}
for the remainder of the article, and drop the superscript $c$ for
simplification of notation when there is no ambiguity.

%%%%%%%%%%%%%%%%%%%%%%%%%%%%%%%%%%%
\subsection{Posterior Distributions of Parameters}\label{section:shrinkage}

%% approximate conditional posterior for g-prior 
Under the $g$-prior \eqref{eq:g-prior_GLM1} on $\b_\MM$
and a uniform prior \eqref{eq:alpha-prior} on $\alpha$  for the
centered parameterization \eqref{eq:eta-centered},
asymptotic limiting distribution theory \citep[pp.\
287]{Bernardo_Smith_2000} under a Laplace approximation
yields the approximate  posterior distributions conditional on $\MM$ as
\begin{align}
\label{eq:betapost} \boldsymbol\beta_{\MM} \mid \mathbf{Y}, \MM, g &  \convd  
  \text{N}\left( \frac{g}{1+g}\ \hat{\boldsymbol\beta}_{\MM},\ \frac{g}{1+g} 
  \mathcal{J}_n(\hat{\boldsymbol\beta}_{\MM})^{-1} \right),\\ 
\label{eq:alphapost}\alpha \mid \mathbf{Y}, \MM & \convd
  \text{N}\left( 
    \hat{\alpha}_\MM,\    
    \mathcal{J}_n(\hat{\alpha}_\MM)^{-1}\right),
\end{align}
where the symbol $\convd$ indicates convergence in distribution, and
$\hat{\alpha}_\MM$ and $\hat{\boldsymbol\beta}_{\MM}$ are MLEs of
$\alpha$ and $\beta_\MM$ respectively under model $\MM$.  Due to local
orthogonality, the posterior distributions of $\b_\MM$ and $\alpha$
are asymptotically independent. Furthermore, for large $n$, the
asymptotic marginal posterior distribution of $\alpha$ is proper,
although its prior distribution is improper.  Similar results are
obtained by \citet{Held_etal_2015} under the assumption that
$\mathcal{I}_n (\hat{\alpha}_\MM, \hat{\b}_\MM)$ equals the block
diagonal matrix
$\mathcal{I}_n(\alpha, \boldsymbol\beta_\MM=\mathbf{0})$, which
approximates the expected information when $\b_\MM$ is in a
neighborhood of zero.

%% discuss shrinkage
The conditional posterior mean of $\boldsymbol\beta_\MM$ is shrunk
from the MLE $\hat{\boldsymbol\beta}_\MM$ towards the prior mean
$\mathbf{0}$ by the ratio $g / (1+g) $, which is usually referred to
as the shrinkage factor for $g$-priors in normal linear regression
\citep{Liang_etal_2008}.  As
discussed in \citet{Copas_1983, Copas_1997},  shrinking predicted values toward the center of
responses, or equivalently, shrinking regression coefficients towards
zero, may alleviate over-fitting, and thus yield optimal predictive
performance.  In Section \ref{subsection:gusto-i}, using the GUSTO-I
data  and logistic regression, we find that methods that favor
smaller values of $g$, i.e., smaller shrinkage factors, tend to be
more accurate in out-of-sample prediction.

%%%%%%%%%%%%%%%%%%%%%%%%%%%%%%%%%%%
\subsection{Approximate Bayes Factor}\label{sec:marlik}

%% intro: closed form marginal likelihood

In GLMs, normal priors such as \eqref{eq:g-prior_SBH2011} and \eqref{eq:g-prior_GLM1}
yield closed form marginal likelihoods under Laplace approximations
which are precise to $O(n^{-1})$.
Under an integrated Laplace approximation \citep{Wang_George_2007} with the uniform prior on
$\alpha$ and $g$-prior in \eqref{eq:g-prior_GLM1}
for any model $\MM$, the approximate marginal likelihood
for $\MM$ and $g$ in \eqref{eq:marlik_def}  has a closed form
expression 
\begin{align}
p(\mathbf{Y} \mid \MM, g) 
= & ~  \int  p(\mathbf{Y} \mid \boldsymbol\beta_\MM, \MM)
             p(\boldsymbol\beta_\MM \mid \MM, g) 
             ~d\boldsymbol\beta_\MM \nonumber \\       
\propto
& ~
p (\mathbf{Y} \mid \hat{\alpha}_\MM, \hat{\boldsymbol\beta}_\MM, \MM)
  \mathcal{J}_n(\hat{\alpha}_\MM)^{-\frac{1}{2}}
  (1+g)^{-\frac{p_{\MM}}{2}} \exp \left\{ -\frac{Q_{\MM}}{2(1+g)} \right\}, \label{eq:marlik_fixedg}
\end{align}
where the approximation \eqref{eq:marlik_fixedg} is precise to $O(n^{-1})$, 
$p_\MM$ is the column rank of $\mathbf{X}_\MM$, and 
\begin{equation}\label{eq:Q}
Q_{\MM} = \hat{\boldsymbol\beta}_{\MM} ^T 
\mathcal{J}_n( \hat{\b}_{\MM}) \hat{\boldsymbol\beta}_{\MM}
\end{equation}
is the Wald statistic (under observed information). 
For the null model $\Mnull$ where $p_{\Mnull} = 0$,
$Q_{\Mnull} = 0$ so that
\eqref{eq:marlik_fixedg} still holds.
The approximate marginal likelihood \eqref{eq:marlik_fixedg} is a function of MLEs, 
which is fast to compute using existing algorithms such as the iterative weighted least squares
\citep{McCullagh_Nelder_1989}.   

%% Data-based BF
To compare a pair of models $\MM_1$ and $\MM_2$, the Bayes factor
\citep{Kass_Raftery_1995}, defined as
$\BF_{\MM_1: \MM_2} = p(\mathbf{Y} \mid \MM_1, g) / p(\mathbf{Y} \mid
\MM_2, g)$,
is commonly used in Bayesian model selection, assuming the two models
are equally likely {\it a priori}.  If $\BF_{\MM_1: \MM_2}$ is greater
(less) than one, then $\MM_1$ ($\MM_2$) is favored.  When $2^p$ models
are considered simultaneously, under the uniform prior
$\pi(\MM) = 2^{-p}$, comparing their posterior probabilities is
equivalent to comparing their Bayes factors where each model is
compared to a common baseline model,  such as the
null model \citep{Liang_etal_2008}.  With the availability of closed
form approximate marginal likelihoods \eqref{eq:marlik_fixedg}, the
$g$-prior \eqref{eq:g-prior_GLM1} yields closed form Bayes factors
\begin{equation}\label{eq:DBF_g}
\BF_{\MM: \Mnull} = \frac{p(\mathbf{Y} \mid \MM, g)}{p(\mathbf{Y} \mid \Mnull) }
	= \exp \left\{\frac{z_\MM}{2} \right\} \left[ \frac{\mathcal{J}_n(
            \hat{\alpha}_{\Mnull})}{\mathcal{J}_n(
            \hat{\alpha}_{\MM})} \right]^{\frac{1}{2}}        
	(1 + g)^{-\frac{p_\MM}{2}} \exp\left\{ - \frac{Q_\MM}{2(1+g)} \right\},
\end{equation} 
where 
\begin{equation}\label{eq:z}
z_\MM  = 2\log \left\{ \frac{p(\mathbf{Y} \mid \hat{\alpha}_\MM, \hat{\boldsymbol\beta}_\MM, \MM)}
{p(\mathbf{Y} \mid \hat{\alpha}_{\Mnull}, \Mnull)} \right\}
\end{equation}
is the change in deviance or two times the likelihood ratio test
statistic for comparing model $\MM$ to $\Mnull$. For simplicity,
$z_\MM$ will be referred as the deviance statistic for the rest of
this article.  The Bayes factors under the $g$-prior provides an
adjustment to the likelihood ratio test with a penalty that depends
on $g$ and the Wald statistic.

The expression for the Bayes factor in \eqref{eq:DBF_g} is closely related to the
test-based Bayes factors (TBF) of  \citet{Hu_Johnson_2009, Held_etal_2015,
  Held_etal_2016}
\begin{equation}\label{eq:TBF_g}
\textsf{TBF}_{\MM:{\Mnull}} = \frac{ \text{G}\left(z_\MM; \frac{p_\MM}{2}, \frac{1}{2(1+g)} \right)}
  {\text{G}\left(z_\MM; \frac{p_\MM}{2}, \frac{1}{2} \right)}
  = (1+g)^{-\frac{p_\MM}{2}} \exp\left\{\frac{g \, z_\MM}{2(1+g)}\right\},
  \end{equation}
  which is derived from the asymptotic distributions of $z_\MM$ under
  $\MM$ and $\Mnull$; $\text{G}(z_\MM; a, b)$ denotes the density of a
  Gamma distribution with mean $a/b$, evaluated at $z_\MM$.  Under the
  null or a local alternative where $\b_\MM$ is in an 
  $O(n^{-1/2})$ neighborhood of the null, the Wald statistic $Q_\MM$ and deviance
  statistic $z_\MM$ are asymptotically equivalent and the ratio
  $\mathcal{J}_n(\hat{\alpha}_{\Mnull})/\mathcal{J}_n(\hat{\alpha}_{\MM})$
  in \eqref{eq:DBF_g} converges to one in probability, resulting in
  the data-based Bayes factor in \eqref{eq:DBF_g} (or DBF for short)
  with $Q_\MM$ replaced by $z_\MM$ being
  equivalent asymptotically to the TBF.  When the distance between
  $\b_\MM$ and the null does not vanish with $n$, we find that the TBF
  exhibits a small but systematic bias, but leads to little difference
  in inference for large $g = n$, where both are close to BIC. In
  Section \ref{sec:examples}, using simulation and real examples, we
  find that with $g = n$, TBF and the DBF \eqref{eq:DBF_g} have almost
  identical performance in model selection, estimation, and
  prediction. More discussions and an empirical example with TBF are
  available in the supplementary material Appendix \ref{section:TBF}.

%%%%%%%%%%%%%%%%%%%%%%%%%%%%%%%%%%%
\subsection{When MLEs Do Not Exist}\label{sec:no_MLE}
Before turning to the choice of $g$ and other properties,
we investigate the possible use of $g$-priors
\eqref{eq:g-prior_GLM1} when MLEs of $\alpha_\MM$ or $\b_\MM$ do not
exist. Two different cases are considered: data separation in binary
regression,  and non-full rank design matrices for GLMs with known dispersion.
We will return to the case of $g$-priors in linear models with unknown
dispersion in the non-full rank case in Section \ref{section:CHIC_normal}.

%%%%% Separation
For binary regression models with a finite sample size, data
separation problems may lead to MLEs that are not unique nor finite
\citep{Albert_Anderson_1984, Heinze_Schemper_2002, Ghosh_Li_Mitra_2017}.
For $\X_\MM$ of full rank, the data exhibit separation if there exists
a scalar $\gamma_0 \in \mathbb{R}$ and a non-null vector
$\g = ( \gamma_1, \ldots, \gamma_{p_\MM})^T \in \mathbb{R}^{p_\MM}$
such that
\begin{equation}\label{eq:separation}
\gamma_0 +  \mathbf{x}_{\MM, i}^T \g \geq 0 \ \text{ if } Y_i = 1, \quad 
\gamma_0 +  \mathbf{x}_{\MM, i}^T \g  \leq 0 \ \text{ if } Y_i = 0, \quad
\text{for all } i = 1, \ldots, n.
\end{equation} 
In particular, there is complete separation if in
\eqref{eq:separation} strict inequalities hold 
for all observations. In the absence of complete separation, 
there is quasi-complete separation if \eqref{eq:separation} holds with equality
for at least one sample.
 
This implies that
the information metric is no longer a valid inner product and that the
operator in \eqref{eq:projection} is not an orthogonal projection.
While it is possible to define projections in the case  where
$\mathcal{J}_n(\hat{\boldsymbol{\eta}}_\MM)$ is not full rank 
\citep[Chapter~10]{Christensen:2011},
we will restrict attention to the case where $\mathcal{J}_n(\hat{\boldsymbol{\eta}}_\MM)$ is
of full rank and conditions for asymptotic normality hold  to avoid additional technicalities.

Design matrices that are not full rank lead to identifiability
problems with MLEs of $\alpha_\MM$ and $\b_\MM$ in GLMs.
Consider a model $\MM$ where $\textsf{rank}(\X_{\MM}) = \rank_\MM < p_\MM$, and a
full rank design matrix $\X_{\MM'}$ that contains $\rank_\MM$ columns and
spans the same column spaces as $\X_\MM$, i.e.,
$C(\X_\MM) = C(\X_{\MM'})$.  Although the MLE of the coefficients
$\hat{\b}_\MM$ are not all unique, MLEs of the linear predictors
$\hat{\eta}_{\MM, i}$ are unique; in
fact,
\begin{equation}\label{eq:Jeta_non-full_rank}
\hat{\boldsymbol\eta}_\MM = \one_n \hat{\alpha}_{\MM} + \X_{\MM} \hat{\b}_{\MM}
= \one_n \hat{\alpha}_{\MM'} + \X_{\MM'} \hat{\b}_{\MM'}
\end{equation}
and $\mathcal{J}_n (\hat{\boldsymbol\eta}_\MM)$ is unique and positive definite.
The precision matrix of the $g$-prior \eqref{eq:g-prior_GLM1}, 
$\mathcal{J}_n(\hat{\b}_\MM) =
\mathbf{X}_\MM^{cT}\mathcal{J}_n(\hat{\boldsymbol\eta}_\MM)\mathbf{X}_\MM^c$
is well-defined, however, since ${\textsf{rank}}(\X_{\MM}^{c}) = \textsf{rank}(\X_{\MM})
= \rank_\MM < p_\MM$,  it is not invertible.
Note that the null-based $g$-prior \eqref{eq:g-prior_SBH2011} suffers from 
a similar singularity problem. 

We may extend the definition of $g$ priors to include singular
covariance matrices by adopting generalized inverses in defining the $g$-prior.
Because of the invariance of orthogonal projections to choices of
generalized inverse and uniqueness of the MLE of $\eg$,
we have the following proposition regarding the Bayes factors in
models that are rank deficient.
\begin{proposition}\label{proposition:non_full_rank_BF}
Suppose $\emph{\textsf{rank}}(\X_\MM) = \rank_\MM < p_\MM$, then 
%under the integrated Laplace approximation,
\begin{equation}
  \label{eq:BF-singular}
\BF_{\MM: \Mnull} = \frac{p(\mathbf{Y} \mid \MM, g)}{p(\mathbf{Y} \mid \Mnull) }
	=  \exp \left\{ \frac{z_\MM}{2} \right\}  \left[ \frac{\mathcal{J}_n(
            \hat{\alpha}_{\Mnull})}{\mathcal{J}_n(
            \hat{\alpha}_{\MM})} \right]^{\frac{1}{2}}       
	(1 + g)^{-\frac{\rank_\MM}{2}} \exp\left\{ - \frac{Q_\MM}{2(1+g)} \right\}.
\end{equation}
If $\MM'$ is a full rank model whose column space $C(\X_{\MM'}) = C(\X_\MM)$, then
$Q_\MM = Q_{\MM'}$, $z_\MM = z_{\MM'}$, 
and 
$\BF_{\MM: \MM'} = 1$.
\end{proposition}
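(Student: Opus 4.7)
The plan is to exploit the fact that the log-likelihood depends on $\b_\MM$ only through the fitted linear predictor $\hat{\eg}$, which by \eqref{eq:Jeta_non-full_rank} is invariant under any reparametrization preserving the column space. Since $C(\X_{\MM'}) = C(\X_\MM)$, the identities $\hat{\eg} = \hat{\boldsymbol\eta}_{\MM'}$, $\hat{\alpha}_\MM = \hat{\alpha}_{\MM'}$, $\mathcal{J}_n(\hat{\eg}) = \mathcal{J}_n(\hat{\boldsymbol\eta}_{\MM'})$, $\mathcal{J}_n(\hat{\alpha}_\MM) = \mathcal{J}_n(\hat{\alpha}_{\MM'})$, and $p(\Y \mid \hat{\alpha}_\MM, \hat{\b}_\MM, \MM) = p(\Y \mid \hat{\alpha}_{\MM'}, \hat{\b}_{\MM'}, \MM')$ follow immediately, yielding $z_\MM = z_{\MM'}$ via \eqref{eq:z}.

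For the Wald statistic, although $\hat{\b}_\MM$ is non-unique, the relation $\X_\MM^c \hat{\b}_\MM = (\I_n - \hat{\mathcal{P}}_{\one_n}) \hat{\eg}$ from \eqref{eq:Xc} shows that
\begin{equation*}
Q_\MM
= \hat{\b}_\MM^T \X_\MM^{cT} \mathcal{J}_n(\hat{\eg}) \X_\MM^c \hat{\b}_\MM
= \hat{\eg}^T (\I_n - \hat{\mathcal{P}}_{\one_n})^T \mathcal{J}_n(\hat{\eg}) (\I_n - \hat{\mathcal{P}}_{\one_n}) \hat{\eg}
\end{equation*}
depends only on unique quantities, and the analogous expression under $\MM'$ equals the same value, so $Q_\MM = Q_{\MM'}$.

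The principal obstacle is the Bayes factor formula itself, since the precision $\mathcal{J}_n(\hat{\b}_\MM) = \X_\MM^{cT} \mathcal{J}_n(\hat{\eg}) \X_\MM^c$ has rank only $r_\MM$, so \eqref{eq:g-prior_GLM1} must be interpreted via a generalized inverse and the prior is supported on an $r_\MM$-dimensional affine subspace of $\bbR^{p_\MM}$. My plan is to define the Bayes factor for the non-full-rank $\MM$ by reducing to the full-rank $\MM'$: pick a $p_\MM \times r_\MM$ matrix $B$ with $\X_{\MM'} = \X_\MM B$, write $\b_\MM = B \b_{\MM'}$, and note $\X_{\MM'}^c = \X_\MM^c B$ so that $\mathcal{J}_n(\hat{\b}_{\MM'}) = B^T \mathcal{J}_n(\hat{\b}_\MM) B$ by direct computation. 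Because the likelihood depends on $\b_\MM$ only through $\X_\MM \b_\MM = \X_{\MM'} \b_{\MM'}$, the integrated Laplace expansion \eqref{eq:marlik_fixedg} applied to $\MM'$ delivers the marginal likelihood of $\MM$ with $r_\MM = p_{\MM'}$ replacing $p_\MM$ in the determinant factor. Combined with the equalities above, this yields \eqref{eq:BF-singular} and $\text{BF}_{\MM:\MM'} = 1$; the remaining technical care is verifying invariance of the resulting expression under the choice of $B$, which follows because the only quantities appearing in \eqref{eq:BF-singular} are $\hat{\alpha}_\MM$, $\mathcal{J}_n(\hat{\alpha}_\MM)$, $z_\MM$, $Q_\MM$, and $r_\MM$, each of which has already been shown to be uniquely determined by the column space $C(\X_\MM)$.
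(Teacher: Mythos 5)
Your proposal is correct and follows essentially the same route as the paper: it rests on the uniqueness of $\hat{\boldsymbol\eta}_\MM$ (hence of $\hat{\alpha}_\MM$, using $\one_n \notin C(\X_\MM)$), deduces $z_\MM = z_{\MM'}$, $\mathcal{J}_n(\hat{\alpha}_\MM) = \mathcal{J}_n(\hat{\alpha}_{\MM'})$ and $Q_\MM = Q_{\MM'}$, and obtains \eqref{eq:BF-singular} by passing to a full-rank representative of the column space, which is exactly the paper's generalized-inverse/projection-invariance argument. Your identity $\X_\MM^c \hat{\b}_\MM = (\I_n - \hat{\mathcal{P}}_{\one_n})\hat{\boldsymbol\eta}_\MM$ is just a slightly tidier packaging of the paper's direct computation showing $\X_\MM^c \hat{\b}_\MM = \X_{\MM'}^c \hat{\b}_{\MM'}$.
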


The proof is available in supplementary material Appendix
\ref{PROOFproposition:non_full_rank_BF}. Here the two models $\MM$ and
$\MM'$ have the same Bayes factor if their design matrices span the
same column space.  This form of invariance is not possible with other
conventional independent prior distributions, such as generalized
ridge regression or independent scale mixtures of normals. While
posterior means of coefficients under BMA will not be well defined,
predictive quantities under model selection or model averaging will 
exist, however, care must be taken in assigning prior probabilities
over equivalent models.

%%%%%%%%%%%%%%%%%%%%%%%%%%%%%%%%%%%
\subsection{Choice of $g$}\label{subsection:EB}

Problems with fixed values of $g$ prompted \citet{Liang_etal_2008} to
study data-dependent or adaptive values for $g$.  This includes the
unit information prior where $g = n$ \citep{Kass_Wasserman_1995},
 and local and global empirical Bayes (EB) estimates of
$g$ \citep{Copas_1983, Copas_1997, Hansen_Yu_2001, Hansen_Yu_2003,
Liang_etal_2008, Held_etal_2015}.

For the local EB, each model $\MM$ has its own optimal value of $g$
that maximizes its marginal likelihood:
\[
\hat{g}_\MM^{\text{LEB}} = \arg\max_{g\geq0}~ p(\mathbf{Y} \mid \MM, g),
\]
and the local EB estimator of the marginal likelihood is obtained by
simply plugging in the estimator:
$p^{\text{LEB}}(\mathbf{Y} \mid \MM) = p(\mathbf{Y} \mid \MM, \hat{g}_\MM^{\text{LEB}})$.

For example, under the $g$-prior \eqref{eq:g-prior_GLM1}, \citet{Hansen_Yu_2003}
derive
\[
\hat{g}^{\text{LEB}}_\MM = \max \left(\frac{Q_\MM}{p_\MM} -1, 0 \right),
\]
which has a similar format to $\hat{g}^{\text{LEB}}_\MM = \max(z_\MM/p_\MM - 1, 0)$,
its counterpart for the test-based marginal likelihood
under the $g$-prior \eqref{eq:g-prior_SBH2011}, derived by \citet{Held_etal_2015}.

The global EB involves only a single estimator of $g$, based
on the marginal likelihood averaged over all  models
$\hat{g}_\MM^{\text{GEB}} = \arg\max_{g\geq0} \sum_\MM p(\MM)p(\mathbf{Y} \mid \MM, g)$.
The global EB estimator may be obtained via an EM algorithm when all
models may be enumerated \citep{Liang_etal_2008}, but is more
difficult to compute for larger problems \citep{Held_etal_2015}.  
For the remainder of the article,  we will restrict attention to the
local EB approach. 

The EB estimates of $g$ do not lead to consistent model
selection  under the null 
model \citep{Liang_etal_2008} although provide consistent estimation. 
Mixtures of $g$-priors provide an alternative 
that propagate uncertainty in $g$ with  other desirable properties.

%\input{test-basedBF.tex}
%\input{mixtures.tex}
%%%%%%%%%%%%%%%%%%%%%%%%%%%%%%%%%%%%%%%%%%%%%%%%%%%%%
\section{Mixtures of $g$-Priors} \label{section:mix_g}%% Section 4
%%%%%%%%%%%%%%%%%%%%%%%%%%%%%%%%%%%%%%%%%%%%%%%%%%%%%

\citet{Liang_etal_2008} highlight some of the problems with using a
fixed value of $g$ for model selection or BMA and recommend mixtures
of $g$-priors that lead to closed form expressions or tractable
approximations.  In order to consider the model selection criteria of
\citet{Bayarri_etal_2012}, we propose an extremely flexible
mixture of $g$-priors family that can encompass the majority of the
existing mixtures of $g$-priors as special cases.  Furthermore,
utilizing Laplace approximations to obtain \eqref{eq:marlik_beta}, it
yields  marginal likelihoods and (data-based) Bayes
factors in closed form, for both GLMs \eqref{eq:likelihood}, and
extensions such as normal linear regressions with unknown variances
and over-dispersed GLMs. This tractability permits establishing
properties such as consistency.

%%%%%%%%%%%%%%%%%%%%%%%%%%%%%%%%%%%
\subsection{Compound Confluent Hypergeometric Distributions}\label{section:CHIC}

%% contain Gamma kernel, contain Beta kernel, truncated below 1
%% 6 parameters distribution: the Compound Confluent Hypergeometric
%% prior
The parameter $g$ enters into the posterior distribution for $\b_\MM$
and the marginal likelihood \eqref{eq:marlik_fixedg} through the
shrinkage factor $g/(1+g)$ or the complementary shrinkage factor
$u = 1/(1+g)$. Since the approximate marginal likelihood depends on $g$
in the format of $u$,
$p(\mathbf{Y} \mid \MM, u) \propto %c_\MM 
u^{p_\MM/2}\exp(-u Q_\MM
/2)$,
a conjugate prior for $u$ (given $\phi_0$) should contain the
kernel of a truncated Gamma density with the support
$u \in [0, 1]$.  Beta distributions are also natural prior choice for
$u$, such as the hyper-$g$ prior of \citet{Liang_etal_2008}.  Other
mixtures of $g$-priors such as the robust prior
\citep{Bayarri_etal_2012} and the intrinsic prior
\citep{Womack_etal_2014} truncate the support of $g$ away from zero,
so the resulting $u$ has an upper bound strictly smaller than one. 

To incorporate the above choices in one unified family, we adopt a
generalized Beta distribution introduced by \citet{Gordy_1998a} called
the Compound Confluent Hypergeometric distribution, whose density
function contains both Gamma and Beta kernels, and allows truncation on
the support through a straightforward  extension. We say that $u$ has a
truncated Compound Confluent Hypergeometric distribution if
$u \sim \text{tCCH}(t,q,r,s,v,\kappa)$ with density expressed as
\begin{equation}\label{eq:CCH_dist}
p(u \mid t, q, r, s, v, \kappa)
= \frac{v^t \exp(s/v)}{B(t, q)\ \Phi_1(q, r, t+q, s/v, 1-\kappa)}\
\frac{u^{t-1}(1-vu)^{q-1}e^{-su}}{\left[ \kappa + (1-\kappa)vu \right]^r}\ 
\mathbf{1}_{\{0 < u< \frac{1}{v}  \}} 
\end{equation}
where parameters $t>0, q>0, r \in \mathbb{R}, s \in \mathbb{R}, v \geq 1$, and $\kappa >
0$. Here, $B(t,q)$ is the Beta function and
$\Phi_1(\alpha, \beta, \gamma, x, y) = \sum_{m=0}^{\infty} \sum_{n=0}^{\infty}
(\alpha)_{m+n} (\beta)_n x^m y^n /  \left[(\gamma)_{m+n} m! n!\right]$
%\begin{equation}\label{eq:CCH_Psi1}
%\Phi_1(\alpha, \beta, \gamma, x, y) = \sum_{m=0}^{\infty} \sum_{n=0}^{\infty}
%\frac{(\alpha)_{m+n} (\beta)_n}{(\gamma)_{m+n} m! n!}x^m y^n
%\end{equation}
is the confluent hypergeometric function of two variables or Humbert
series \citep{Humbert_1920}, 
and $(\alpha)_n$ is the Pochammer coefficient or shifted factorial:
$(\alpha)_n = 1$ if $n = 0$ and
$(\alpha)_n = \Gamma(\alpha + n)/\Gamma(\alpha)$ for $n\in \bbN$.
%% parameter v and the support
Note that the parameter $v$ controls the support of $u$. 
When $v=1$, the support is $[0, 1]$. 
When $v > 1$, the upper bound of the support is strictly less than one, 
which may accommodate priors with truncated $g$.
%Based on the derivations in \citet{Gordy_1998a},
%it is straightforward to show that the tCCH distribution is a proper distribution.  
This leads to conjugate updating of $u$ as follows:
%%%%%%%%%%%% Begin: theorem %%%%%%%%%%%%
\begin{proposition}\label{proposition:tCCH_marlik}
Let $u = 1/(1+g)$ have the prior distribution
\begin{equation}\label{eq:tCCH_mixture_of_g}
u \sim {\text{tCCH}}\left(\frac{a}{2}, \frac{b}{2}, r, \frac{s}{2}, v, \kappa\right)
\end{equation}
where $a, b, \kappa>0, r , s \in \mathbb{R}$, and $v \geq 1$, then for
GLMs with a fixed dispersion $\phi_0$, 
integrating the marginal likelihood in \eqref{eq:marlik_fixedg} with
respect to the prior on $u$
yields the  marginal likelihood for $\MM$ which is proportional to
\begin{align}p(\mathbf{Y} \mid \MM)  
 \propto & \ p \left(\mathbf{Y}| \hat{\alpha}_\MM, \hat{\boldsymbol\beta}_\MM, \MM \right)
   \mathcal{J}_n(\hat{\alpha}_\MM)^{-\frac{1}{2}}
   v^{-\frac{p_\MM}{2}} \exp\left\{-\frac{Q_\MM}{2v}\right\} \nonumber
   \\ \label{eq:marlik_tCCHg}
& \cdot \frac{B\left(\frac{a + p_\MM}{2}, \frac{b}{2} \right)
  \Phi_{1}\left( \frac{b}{2}, r, \frac{a + b + p_\MM}{2}, \frac{s+Q_\MM}{2v}, 1-\kappa \right)
  }
  {B\left(\frac{a}{2}, \frac{b}{2} \right) \Phi_{1}\left( \frac{b}{2}, r, \frac{a+b}{2}, \frac{s}{2v}, 1-\kappa \right)},
\end{align}
where $p_{\MM}$ is the rank of $\mathbf{X}_\MM$, and $Q_{\MM}$ is
given in \eqref{eq:Q}.
The posterior distribution of $u$ under model $\MM$ 
is also a tCCH distribution asymptotically
\begin{equation}\label{eq:tCCH_post}
u \mid \mathbf{Y}, \MM \convd  
\text{tCCH}\left(\frac{a + p_\MM}{2}, \frac{b}{2}, r, \frac{s + Q_\MM}{2}, v, \kappa \right)
\end{equation}
allowing conjugate updating under integrated  Laplace approximations.
\end{proposition}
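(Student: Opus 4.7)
The plan is to reparameterize the approximate marginal likelihood \eqref{eq:marlik_fixedg} in terms of $u = 1/(1+g)$, multiply by the prior density \eqref{eq:tCCH_mixture_of_g}, and recognize the product as the kernel of an updated tCCH density. Both the posterior \eqref{eq:tCCH_post} and the marginal \eqref{eq:marlik_tCCHg} then drop out of the known tCCH normalizing constant implicit in \eqref{eq:CCH_dist}.

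First, I would rewrite \eqref{eq:marlik_fixedg} as a function of $u$, using $(1+g)^{-p_\MM/2} = u^{p_\MM/2}$ and $1/(1+g) = u$, isolating $u^{p_\MM/2}\exp(-uQ_\MM/2)$ as the only $u$-dependent factor; the remaining piece $p(\Y\mid\hat{\alpha}_\MM, \hat{\b}_\MM, \MM)\,\mathcal{J}_n(\hat{\alpha}_\MM)^{-1/2}$ is constant in $u$ and passes through the integral unchanged.

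Next, I would substitute $t = a/2$, $q = b/2$, and the ``$s$''-slot equal to $s/2$ into \eqref{eq:CCH_dist} to write $p(u)$ explicitly. Multiplying by $u^{p_\MM/2}\exp(-uQ_\MM/2)$ collapses the $u$-dependence in the joint to
\[
u^{(a + p_\MM)/2 - 1}\, (1 - vu)^{b/2 - 1}\, e^{-u(s + Q_\MM)/2}\, [\kappa + (1-\kappa)vu]^{-r}\, \mathbf{1}_{\{0 < u < 1/v\}},
\]
which is exactly the unnormalized kernel of a $\text{tCCH}\bigl((a+p_\MM)/2,\, b/2,\, r,\, (s+Q_\MM)/2,\, v,\, \kappa\bigr)$ density. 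Dividing this kernel by its own normalizer immediately yields the posterior \eqref{eq:tCCH_post}, and integrating it over $(0, 1/v)$ equals the reciprocal of that normalizer. Combining with the prior normalizer in front of $p(u)$, the $v^{a/2}$ and $e^{s/(2v)}$ factors cancel against $v^{(a+p_\MM)/2}$ and $e^{(s+Q_\MM)/(2v)}$, leaving $v^{-p_\MM/2}\exp(-Q_\MM/(2v))$ multiplying the ratio of Beta and Humbert $\Phi_1$ functions displayed in \eqref{eq:marlik_tCCHg}.

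The derivation is essentially bookkeeping, and the one delicate step is the half-factor convention induced by parameterizing the prior as $\text{tCCH}(a/2, b/2, r, s/2, v, \kappa)$ rather than $\text{tCCH}(a, b, r, s, v, \kappa)$: the argument $s/v$ appearing inside $\Phi_1$ in \eqref{eq:CCH_dist} becomes $s/(2v)$ in the prior normalizer and $(s+Q_\MM)/(2v)$ after updating, and one must check that the Beta arguments update as $a \mapsto a + p_\MM$ while $b$ stays fixed, since only the first tCCH parameter absorbs the $u^{p_\MM/2}$ factor from the likelihood. Once these substitutions are tracked carefully, all pieces cancel to the stated form.
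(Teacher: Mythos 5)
Your proposal is correct and matches the paper's own proof: the paper likewise rewrites \eqref{eq:marlik_fixedg} in terms of $u$, multiplies by the tCCH prior density, and recognizes the integrand as the kernel of a $\text{tCCH}\bigl(\tfrac{a+p_\MM}{2},\tfrac{b}{2},r,\tfrac{s+Q_\MM}{2},v,\kappa\bigr)$ density, reading off the posterior and the marginal from the known normalizing constant. Your bookkeeping of the cancellation of $v^{a/2}e^{s/(2v)}$ against $v^{(a+p_\MM)/2}e^{(s+Q_\MM)/(2v)}$ is exactly the step the paper performs.
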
 
%%%%%%%%%%%% End: theorem %%%%%%%%%%%%

The proof is available in supplementary material Appendix
\ref{PROOFproposition:tCCH_marlik}. 

\begin{corollary}
The Bayes factor for comparing $\MM$ to $\Mnull$ is  
\[
\BF_{\MM: \Mnull}
= \left[ \frac{\mathcal{J}_n(\hat{\alpha}_{\Mnull})}{\mathcal{J}_n(\hat{\alpha}_\MM)} \right]^{\frac{1}{2}}
v^{-\frac{p_\MM}{2}} \exp\left\{\frac{z_\MM}{2}   - \frac{Q_\MM}{2v}\right\}
 \frac{B\left(\frac{a + p_\MM}{2}, \frac{b}{2} \right)
  \Phi_{1}\left( \frac{b}{2}, r, \frac{a + b + p_\MM}{2}, \frac{s+Q_\MM}{2v}, 1-\kappa \right)
  }
  {B\left(\frac{a}{2}, \frac{b}{2} \right) \Phi_{1}\left( \frac{b}{2}, r, \frac{a+b}{2}, \frac{s}{2v}, 1-\kappa \right)} 
\]
and depends on the data through the deviance $z_\MM$ and the Wald
statistic $Q_\MM$.
\end{corollary}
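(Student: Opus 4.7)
The plan is to apply Proposition \ref{proposition:tCCH_marlik} twice, once to $\MM$ and once to $\Mnull$, and take the ratio. By definition $\text{BF}_{\MM:\Mnull} = p(\Y\mid\MM)/p(\Y\mid\Mnull)$, and any multiplicative factor absorbed into the $\propto$ symbol of \eqref{eq:marlik_tCCHg} is inherited from the integrated Laplace approximation (the Gaussian normalization $\sqrt{2\pi}$ from integrating out $\alpha$, together with model-independent terms); these cancel in the ratio.

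First I would specialize \eqref{eq:marlik_tCCHg} to $\Mnull$. Since $p_{\Mnull}=0$ and $Q_{\Mnull}=0$, the factor $v^{-p_{\Mnull}/2}\exp\{-Q_{\Mnull}/(2v)\}$ equals $1$, and the Beta/$\Phi_1$ numerator coincides with the denominator, so that ratio also collapses to $1$. Hence
$$p(\Y\mid\Mnull) \;\propto\; p\bigl(\Y \mid \hat{\alpha}_{\Mnull}, \Mnull\bigr)\, \mathcal{J}_n(\hat{\alpha}_{\Mnull})^{-1/2}.$$

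Next I would divide the full expression \eqref{eq:marlik_tCCHg} for $p(\Y\mid\MM)$ by this simplified $p(\Y\mid\Mnull)$. The common proportionality constants cancel; the ratio of likelihoods $p(\Y\mid\hat{\alpha}_\MM,\hat{\b}_\MM,\MM)/p(\Y\mid\hat{\alpha}_{\Mnull},\Mnull)$ equals $\exp\{z_\MM/2\}$ by the very definition of the deviance statistic in \eqref{eq:z}; the observed-information factors combine into $[\mathcal{J}_n(\hat{\alpha}_{\Mnull})/\mathcal{J}_n(\hat{\alpha}_\MM)]^{1/2}$; and the remaining $v^{-p_\MM/2}\exp\{-Q_\MM/(2v)\}$ together with the ratio of Beta and $\Phi_1$ terms reproduce the displayed expression.

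The main obstacle is purely bookkeeping: one must verify that every quantity hidden inside the $\stackrel{\cdot}{\propto}$ of Proposition \ref{proposition:tCCH_marlik} is independent of $\MM$, so that it truly cancels in the ratio. The model dependence in \eqref{eq:marlik_tCCHg} enters only through $p_\MM$, $Q_\MM$, $\hat{\alpha}_\MM$, $\hat{\b}_\MM$, and $\mathcal{J}_n(\hat{\alpha}_\MM)$, each of which is tracked explicitly, so the un-displayed constants cancel cleanly and the corollary follows.
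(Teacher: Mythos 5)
Your proposal is correct and matches the paper's (implicit) argument: the corollary is stated as an immediate consequence of Proposition \ref{proposition:tCCH_marlik}, obtained exactly as you do by specializing \eqref{eq:marlik_tCCHg} to $\Mnull$ (where $p_{\Mnull}=0$, $Q_{\Mnull}=0$ make the $v$-, exponential-, Beta- and $\Phi_1$-factors collapse to one), taking the ratio, and identifying the likelihood ratio with $\exp\{z_\MM/2\}$ via \eqref{eq:z}. Your bookkeeping remark that the constants hidden in $\stackrel{\cdot}{\propto}$ are model-independent and cancel is the right (and only) point needing care.
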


We refer to the model selection criterion based on the Bayes factor
above as the ``Confluent Hypergeometric Information Criterion'' or
CHIC, as it  involves the confluent hypergeometric function
in two variables and the  $g$-prior is derived using the information 
matrix; the hierarchical prior formed by
\eqref{eq:alpha-prior}, \eqref{eq:g-prior_GLM1} and
\eqref{eq:tCCH_mixture_of_g} will be denoted as the CHIC $g$-prior.

%% hyper parameters

In the conjugate updating scheme \eqref{eq:tCCH_post}, 
the parameter $a$ and $s$ are updated by the model rank $p_\MM$
and the Wald statistic $Q_\MM$, respectively, while none of the remaining 
four parameters are updated by the data.   The
parameters $a/2$ and $b/2$ play a role similar to the shape parameters in Beta
distributions, where small $a$ or large $b$ tends to put more prior weight
on small values of $u$, or equivalently, large values of $g$. We will
show later that $a$ also controls the tail behavior of the marginal
prior on $\b_\MM$.  The parameter $v$ controls the
support, while parameters $r, s$, and $\kappa$ ``squeeze'' the prior
density to left or right \citep{Gordy_1998a}. In particular, large $s$
skews the prior distribution of $u$ towards the left side and in turn
favoring large $g$. Table \ref{tb:tCCHg_parameters} lists special
cases of the CHIC $g$-prior and corresponding hyper parameters that have
appeared in the literature.  The last column indicates  whether the
model selection consistency holds for all models which will be
presented in Section \ref{sec:selection_consistency}.  We provide more
details about these special cases in the next section.

%%%%%%%%%%%%%%%%%%%%%%%%%%%%%%%%%%%%%%%%%%%%%%%%%%%%
\begin{table}[]
\centering
\caption{Special cases of the CHIC $g$-prior with hyper parameters and
  whether the prior distributions lead to consistency for model
  selection under all models. If no, the models where consistency
  fails are indicated.  }\label{tb:tCCHg_parameters}
\begin{tabular}{|l |cccccc | l | }
  \hline			
				& $a$		& $b$		& $r$		& $s$		& $v$		& $\kappa$	& Consistency		\\	
  \hline\hline
CH			& $a$		& $b$		& $0$
                                                                                & $s$		& $1$		& $1$		& If $b = O(n)$ or $s = O(n)$	\\	\hline
Hyper-$g$	& $1$	& $2$		& $0$		& $0$
                                                                                                &
                                                                                                  $1$		& $1$				& No, $
\Mnull$			\\	\hline
Uniform			& $2$		& $2$		& $0$
                                                                                & $0$		& $1$		& $1$		& No, $\Mnull$			\\	\hline
Jeffreys			& $0$		& $2$		& $0$		& $0$		& $1$		& $1$		& No, $\Mnull$			\\	\hline
Beta-prime		& $\frac{1}{2}$	& $n-p_\MM - 1.5$	& $0$	& $0$	& $1$	& $1$		& Yes			\\	\hline
Benchmark		& $0.02$	& $0.02 \max(n, p^2)$	& $0$	& $0$	& $1$	& $1$		& Yes			\\	\hline
TruncGamma	& $2a_t$	& $2$	& $0$		& $2s_t$	& $1$		& $1$	&	If $s_t = O(n)$ \\	\hline
ZS adapted	& $1$		& $2$	& $0$		& $n + 3$	& $1$		& $1$		& Yes			\\	\hline\hline
Robust  & $1$		& $2$		& $1.5$		& $0$		& $\frac{n + 1}{p_\MM + 1}$	& $1$		& Yes			\\	\hline
Hyper-$g/n$ 	& $1$	& $2$		& $1.5$& $0$	& $1$	& $\frac{1}{n}$	& Yes			\\	\hline
Intrinsic 			& $1$		& $1$		& $1$		& $0$		& $\frac{n + p_\MM + 1}{p_\MM + 1}$	& $\frac{n + p_\MM + 1}{n}$	& Yes			\\	\hline
\end{tabular}
\end{table}

%%%%%%%%%%%%%%%%%%%%%%%%%%%%%%%%%%%
\subsection{Special Cases}\label{section:CH-g}

\begin{description}[leftmargin=*, itemsep=0em]

\item[\emph{Confluent Hypergeometric (CH) prior}]
The Confluent Hypergeometric distribution, proposed by
\citet{Gordy_1998b} is a special case of the CHIC family and
is a generalized Beta distribution with density
\[
%\begin{equation}\label{eq:CHdist}
 p (u \mid t, q, s) = \frac{u^{t-1} (1-u)^{q-1} \exp(-su)}{B(t, q)\  _{1}F_{1}(t, t+q, -s)} 
\ \mathbf{1}_{\{0 < u< 1  \}}
%\end{equation}
\]
where  $t > 0, q > 0, s \in \bbR$, 
and 
 $_{1}F_{1}(a, b, s) =\frac{\Gamma(b)}{\Gamma(b-a)\Gamma(a)}\int_0^1 z^{a-1} (1-z)^{b-a-1}
 \exp(sz) dz$ is the Confluent Hypergeometric function \citep{Abramowitz_Stegun_1970}.
Based on this distribution, we propose the CH  prior by letting $u$ have 
the following hyper prior
\begin{equation}\label{eq:CHg-prior}
u \sim \text{CH}\left(\frac{a}{2}, \frac{b}{2}, \frac{s}{2}  \right),
\end{equation}
under which the posterior for $u$ is again in the same family, and $p(\mathbf{Y} \mid \MM)$ has a closed form
\begin{align}
\label{eq:upost} 
u \mid \mathbf{Y}, \MM & \convd  
  \text{CH}\left(\frac{a + p_\MM}{2},  \frac{b}{2}, \frac{s +
                         Q_\MM}{2} \right), \\
%\label{eq:marlik_CHg}
p(\mathbf{Y} \mid \MM)  %= & \int_0^1 p(\mathbf{Y} \mid u, \MM )p(u) du \\
& \propto  \ p \left(\mathbf{Y}\mid \hat{\alpha}_\MM, \hat{\boldsymbol\beta}_\MM, \MM\right) 
  \mathcal{J}_n(\hat{\alpha}_\MM)^{-\frac{1}{2}} \cdot \frac{B\left( \frac{a+p_\MM}{2}, \frac{b}{2} \right)\  
  _{1}F_{1}\left( \frac{a+p_\MM}{2}, \frac{a+b+p_\MM}{2}, -\frac{s+Q_\MM}{2} \right)}
  {B\left( \frac{a}{2}, \frac{b}{2} \right)\  _{1}F_{1}\left(
           \frac{a}{2}, \frac{a+b}{2}, -\frac{s}{2} \right)}, \nonumber
\end{align}
under the integrated Laplace approximation.

Similar to the CHIC $g$-prior, small $a$, large $b$, or large $s$ favors
small $u$ {\it a priori}, with $a$ controlling the tail behavior.
In model selection, preference for heavy-tailed prior distributions
can be traced back to \citet{Jeffreys_1961}, 
who suggested a Cauchy prior for the normal location parameter
to resolve the information paradox in the simple normal means case.
The following result %\ref{proposition:CH-g_tails} 
shows that the CH prior  has multivariate Student $t$ tails with
degrees of freedom $a$, and in particular, the choice $a = 1$ leads to
tail behavior like a multivariate Cauchy.

%%%%%%%%%% Proposition: tail behavior %%%%%%%%%
\begin{proposition} \label{proposition:CH-g_tails}
Under the CH prior, the marginal prior distribution $p(\boldsymbol\beta_\MM \mid \MM)$
has tails behaving as multivariate Student distribution with degrees of freedom $a$, i.e., 
\[
%\begin{equation}\label{eq:CH-g_tails}
\lim_{\| \boldsymbol\beta_\MM \| \rightarrow \infty} p(\boldsymbol\beta_\MM \mid \MM) \propto
\left( \| \boldsymbol\beta_\MM \|_{\mathcal{J}_n}^2 \right) ^{-\frac{a + p_\MM}{2}}
%\end{equation}
\]
where $\| \boldsymbol\beta_\MM \| =  (\boldsymbol\beta_\MM^T \boldsymbol\beta_\MM)^{\frac{1}{2}}$
and $\| \boldsymbol\beta_\MM \|_{\mathcal{J}_n} 
=  \left[ \boldsymbol\beta_\MM^T \mathcal{J}_n(\hat{\boldsymbol\beta}_\MM) \boldsymbol\beta_\MM \right]
^{\frac{1}{2}}$.
\end{proposition}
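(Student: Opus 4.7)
The plan is to start from the marginalization
\[
p(\b_\MM \mid \MM) = \int_0^\infty p(\b_\MM \mid g, \MM)\, p(g \mid \MM)\, dg,
\]
and reduce the tail analysis to a one-dimensional integral in $g$. Using \eqref{eq:g-prior_GLM1}, $p(\b_\MM \mid g, \MM) \propto g^{-p_\MM/2}\exp\{-W/(2g)\}$ where $W = \|\b_\MM\|_{\mathcal{J}_n}^2 = \b_\MM^T \mathcal{J}_n(\hat{\b}_\MM)\b_\MM$, while transforming the CH prior \eqref{eq:CHg-prior} on $u = 1/(1+g)$ back to the $g$-scale via $u = 1/(1+g)$, $1-u = g/(1+g)$ and $|du/dg| = (1+g)^{-2}$ gives
\[
p(g \mid \MM) \propto g^{b/2-1}(1+g)^{-(a+b)/2}\exp\!\left\{-\frac{s}{2(1+g)}\right\}.
\]
Combining these factors, the marginal becomes
\[
p(\b_\MM \mid \MM) \propto \int_0^\infty g^{(b-p_\MM)/2-1}(1+g)^{-(a+b)/2}\exp\!\left\{-\frac{W}{2g}-\frac{s}{2(1+g)}\right\} dg.
\]

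To isolate the dependence on $W$ I would substitute $g = Wh$, writing $(1+Wh)^{-(a+b)/2} = (Wh)^{-(a+b)/2}\bigl(1+1/(Wh)\bigr)^{-(a+b)/2}$ and collecting powers of $W$, which yields
\[
p(\b_\MM \mid \MM) \propto W^{-(a+p_\MM)/2}\!\int_0^\infty h^{-(a+p_\MM)/2-1}\!\left(1+\frac{1}{Wh}\right)^{\!-(a+b)/2}\!\exp\!\left\{-\frac{1}{2h}-\frac{s}{2(1+Wh)}\right\} dh.
\]
The $W^{-(a+p_\MM)/2}$ prefactor is already the claimed multivariate-$t$ tail order, so it remains only to show that the residual integral converges to a finite positive constant as $W\to\infty$.

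For that step I would invoke the dominated convergence theorem. Pointwise the integrand tends to $h^{-(a+p_\MM)/2-1}\exp\{-1/(2h)\}$, which is integrable on $(0,\infty)$: the exponential kills the singularity at $h=0$, and since $a>0$ the tail at infinity is controlled. A $W$-uniform envelope is supplied by the CH parameter restrictions: because $a+b>0$ we have $(1+1/(Wh))^{-(a+b)/2} \leq 1$, and because $|s/[2(1+Wh)]| \leq |s|/2$ uniformly in $W$ and $h$ we have $\exp\{-s/[2(1+Wh)]\} \leq e^{|s|/2}$. Thus the integrand is dominated by $e^{|s|/2}\, h^{-(a+p_\MM)/2-1}\exp\{-1/(2h)\}$, an integrable function of $h$ independent of $W$. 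A Gamma-style substitution $y = 1/(2h)$ evaluates the limiting integral to $2^{(a+p_\MM)/2}\Gamma((a+p_\MM)/2)$, giving $p(\b_\MM \mid \MM) \sim C\,(\|\b_\MM\|_{\mathcal{J}_n}^2)^{-(a+p_\MM)/2}$.

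The main obstacle is the rigorous interchange of limit and integral, i.e.\ constructing a dominating function uniform in $W$; fortunately the CH restrictions $a,b>0$ provide the two needed bounds for free. As a last detail, since $\mathcal{J}_n(\hat{\b}_\MM)$ is positive definite by hypothesis, $\lambda_{\min}\|\b_\MM\|^2 \leq \|\b_\MM\|_{\mathcal{J}_n}^2 \leq \lambda_{\max}\|\b_\MM\|^2$, so $\|\b_\MM\|\to\infty$ is equivalent to $W\to\infty$ and the limit in the Euclidean norm asserted in the statement follows from the $W$-asymptotic analysis above.
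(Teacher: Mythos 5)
Your proof is correct, but it takes a genuinely different route from the paper's. The paper starts from the same one-dimensional integral \eqref{eq:marginal_prior} and then squeezes it: a lower bound is obtained (using $s\geq 0$) by dropping the factor $\exp[sg/(2(1+g))]\geq 1$ and applying Watson's Lemma after changing variables to $1/g$, while an upper bound is obtained by replacing $e^{-W/(2g)}$ with $e^{-W/(2(1+g))}$, evaluating the resulting Beta-type integral exactly as a $_1F_1$ function, and invoking the large-argument asymptotic \eqref{eq:1F1_limit_positive_s}; both bounds are then shown to scale as $\left(\|\b_\MM\|_{\mathcal{J}_n}^2\right)^{-(a+p_\MM)/2}$. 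Your rescaling $g = Wh$ with $W = \|\b_\MM\|_{\mathcal{J}_n}^2$, followed by dominated convergence with the envelope $e^{|s|/2}h^{-(a+p_\MM)/2-1}e^{-1/(2h)}$, is more elementary and self-contained: it avoids Watson's Lemma and confluent hypergeometric asymptotics altogether, yields the exact limiting constant $2^{(a+p_\MM)/2}\Gamma\!\left(\frac{a+p_\MM}{2}\right)$ rather than only matching the orders of two bounds with different constants, and works for any $s\in\bbR$ and any $b>0$, whereas the paper's lower bound uses $s\geq 0$ and its upper bound implicitly needs $b>p_\MM$ for the Beta/$_1F_1$ representation to be finite. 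Your final observation that positive definiteness of $\mathcal{J}_n(\hat{\b}_\MM)$ makes $\|\b_\MM\|\to\infty$ equivalent to $W\to\infty$ correctly ties the statement's Euclidean-norm limit to the $\mathcal{J}_n$-norm asymptotics, a point the paper leaves implicit.
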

%%%%%%%%%% %%%%%%%%%% %%%%%%%%%%

%% b controls the tail shape
A proof is available in supplementary materials Appendix \ref{PROOFproposition:CH-g_tails}. 
While the CH prior has only half of the
number of parameters as the CHIC $g$-prior, it remains a flexible
class of priors for $u \in [0, 1]$.  In particular, when $s=0$,
\eqref{eq:CHg-prior} reduces to a Beta distribution, and when $b=2$,
it reduces to a truncated Gamma distribution.  
For the CH prior, we let parameter $a$ be fixed, and parameters $b$ and $s$ be either fixed, or on the order of $O(n)$.
The CH  prior, and
thus the CHIC $g$-prior, encompass several existing mixtures of
$g$-priors as follows:

%% Wang and George 2007
\item[\emph{Truncated Gamma prior}]
\citep{Wang_George_2007, Held_etal_2015}
\begin{equation}\label{eq:TG}
u\sim \text{TG}_{(0, 1)}\left(a_t, s_t\right) \Longleftrightarrow
p(u) = \frac{s_t^{a_t}}{\gamma(a_t, s_t)} u^{a_t -1} e^{-s_t u}\ \mathbf{1}_{\left\{0 < u < 1 \right\}}
\end{equation}
with parameters $a_t , s_t > 0$ and support $[0,1]$.
Here $\gamma(a, s) = \int_0^{s} t^{a-1} e^{-t}dt$ 
is the incomplete Gamma function.
This is equivalent to assigning an incomplete inverse-Gamma prior to $g$.
The truncated Gamma prior permits conjugate updating in GLMs:
$u\mid \mathbf{Y}, \MM \sim \text{TG}_{(0, 1)}\left(a_t + p_\MM/2, s_t + Q_\MM/2\right)$.
When $a_t = 1, s_t = 0$, \eqref{eq:TG} reduces to a uniform prior on $u$.
\citet{Held_etal_2015} introduce the \emph{ZS adapted prior} by letting 
$a_t = 1/2, s_t = (n+3)/2$, so that the resulting prior on $g$
matches the prior mode of \citet{Zellner_Siow_1980} prior $g \sim
\text{IG}(1/2, n/2)$.

\item[\emph{Hyper-$g$ prior}] \citep{Liang_etal_2008, Cui_George_2008}
\begin{equation}\label{eq:Hyper-g}
u \sim \text{Beta}\left(\frac{a_h}{2}-1, 1 \right), \text{ where } 2 < a_h \leq 4
\end{equation}
with default value $a_h = 3$.
When $a_h = 4$, \eqref{eq:Hyper-g} reduces to \emph{a uniform prior} on $u$.
The choice $a_h = 2$ corresponds to the \emph{Jeffrey's prior} on $g$,
which is an improper prior and will lead to indeterminate Bayes
factors if the null model is included in the space of
models. \citet{Celeux_etal_2012} avoid this by excluding the null
model from consideration. 
The hyper-$g$ prior \eqref{eq:Hyper-g} can also be expressed as a Gamma distribution
truncated to the interval $[0, 1]$, and hence has conjugate updating in GLMs, 
\begin{equation}
\label{eq:u_post_hyper-g}
u  \sim \text{TG}_{\left(0,1\right)} 
\left(\frac{a_h}{2} -1, 0\right)
\Longrightarrow u \mid \mathbf{Y}, \MM \convd \text{TG}_{\left(0,1\right)} 
\left(\frac{p_\MM + a_h}{2} -1, \frac{Q_\MM}{2}\right).
\end{equation}

\item[\emph{Beta-prime prior}] \citep{Maruyama_George_2011} 
\[
u \sim \text{Beta}\left(\frac{1}{4}, \frac{n - p_\MM - 1.5}{2}\right),
\]
which is equivalent to a Beta-prime prior on $g$.
The second parameter was carefully chosen for normal linear models
to avoid evaluation of the Hypergeometric $_2F_1$ function
\citep[eq 15.3.1]{Abramowitz_Stegun_1970} in marginal likelihoods. 

\item[\emph{Benchmark prior}] \citep{Ley_Steel_2012}
\[
u \sim \text{Beta}\left(c, c \cdot \max(n, p^2)\right),
\]
which induces an approximate prior mean $\mathbb{E}(g) \approx \max(n, p^2)$ 
\citep{Fernandez_etal_2001}.
The recommended parameter value  is $c = 0.01$.

%%%%%%%%%%%%%%%%%%%%%%%%%%%%%%%%%%%

\item[\emph{Robust prior}] \citep{Bayarri_etal_2012} 
is a mixture of $g$-priors with the following  hyper prior
\begin{equation}\label{eq:robust_prior}
p_r(u) = a_r \left[ \rho_r (b_r+n)\right]^{a_r}\ \frac{u^{a_r-1}}{\left[ 1 + (b_r-1)u \right]^{a_r+1}}\ 
\mathbf{1}_{\left\{0 < u < \frac{1}{\rho_r(b_r + n) + (1 -b_r)} \right\}}
\end{equation}
where $a_r > 0, b_r > 0$ and $\rho_r \geq b_r / (b_r + n)$. The robust prior is a
special case in the CHIC family. 
The upper bound of its support $1/[\rho_r(b_r + n) + (1 -b_r)] \leq 1$.
Hence, the robust prior  does not include the CH prior \eqref{eq:CHg-prior}
as a special case, and vice versa. 

In normal linear models, the robust prior yields closed form marginal
likelihoods involving the Appell $F_1$ function
\citep{appell_1925,Weisstein_2009}.
%which makes it difficult to study its theoretical property. 
Similarly in GLMs, evaluation of the special function $\Phi_1$ is
required. Based on the various criteria for model selection priors, 
default parameters $a_r = 0.5, b_r = 1$, and $\rho_r = 1/(1 + p_\MM)$
are recommended \citep{Bayarri_etal_2012},
under which the prior \eqref{eq:robust_prior} reduces to a truncated Gamma,
which leads to
\begin{equation}\label{eq:u_post_robust}
u \sim \text{TG}_{\left(0, \frac{p_\MM + 1}{n + 1}\right)}\left(\frac{1}{2},~ 0\right)
\Longrightarrow
u \mid \mathbf{Y}, \MM \convd \text{TG}_{\left(0, \frac{p_\MM + 1}{n + 1}\right)} \left(\frac{p_\MM+1}{2}, \frac{Q_\MM}{2}\right),
\end{equation}
and with marginal likelihood proportional to
\begin{align}\nonumber
p(\mathbf{Y} \mid \MM)  
\propto & \ p \left(\mathbf{Y}| \hat{\alpha}_\MM, \hat{\boldsymbol\beta}_\MM, \MM \right) 
   \mathcal{J}_n(\hat{\alpha}_\MM)^{-\frac{1}{2}}
   \left(\frac{n+1}{ p_\MM+1}\right)^{\frac{1}{2}}\\ \label{marlik_robust}
& ~~\cdot \left( \frac{Q_\MM}{2} \right)^{-\frac{p_\MM+1}{2}} \cdot
   \gamma \left(\frac{p_\MM+1}{2} , \frac{Q_\MM(p_\MM+1)}{2(n+1)} \right). %+ O(n^{-1}),
\end{align}
Comparing \eqref{eq:u_post_hyper-g} and \eqref{eq:u_post_robust} reveals
an interesting finding: the robust prior can be viewed as a truncated hyper-$g$ prior,
with an upper bound increasing with $p_\MM$ and decreasing with $n$.
In fact, the robust prior includes the hyper-$g$ prior \eqref{eq:Hyper-g},
and  hyper-$g/n$ prior as special cases.

\item[\emph{Hyper-$g/n$ prior}] \citep{Liang_etal_2008}
\[
p(g) = \frac{a_h -2}{2n} \left(\frac{1}{ 1+ g/n}\right)^{a_h /2 }, \text{ where } 2 < a_h \leq 4. 
\]

\item[\emph{Intrinsic prior}] \citep{Berger_Pericchi_1996,
Moreno_etal_1998, Womack_etal_2014}
is another mixture of $g$-priors that truncates
the support of $g$. It has the hyper prior
\[
g = \frac{n}{p_\MM + 1}\cdot \frac{1}{w}, \quad
w  \sim \text{Beta}\left( \frac{1}{2}, \frac{1}{2} \right).
\]
Under the intrinsic prior, the parameter $g$ is truncated to have an lower bound 
$n / (p_\MM + 1)$, which corresponds to an upper bound of $u$ 
to be $(p_\MM + 1) / (n + p_\MM + 1)$.
As shown in Table \ref{tb:tCCHg_parameters}, the intrinsic prior
is also in the CHIC family.

\end{description}

%%%%%%%%%%%%%%%%%%%%%%%%%%%%%%%%%%%
\subsection{Unknown Dispersion}\label{section:CHIC_normal}

For the well studied case of  normal linear regressions with unknown variances, 
special cases of the CHIC $g$-prior, such as the hyper-$g$, hyper-$g/n$,
Beta-prime, benchmark, and robust priors yield closed form Bayes factors, 
although they may require evaluation of special functions such as the Gaussian
Hypergeometric $_2F_1$  or Appell $F_1$ \citep{Liang_etal_2008,Bayarri_etal_2012,SabanesBove_etal_2015}. 
For normal linear regression,
\citet{Liang_etal_2008} show that under the
$g$-prior \eqref{eq:alpha-prior}-\eqref{eq:gprior}, 
the marginal likelihood conditional on $g$ (or $u$) is
\begin{equation}\label{eq:normal_marlik}
p(\mathbf{Y} \mid \MM, g) 
	= \frac{p(\mathbf{Y} \mid \Mnull)~ (1 + g)^{\frac{n-p_\MM -1}{2}}}
	{\left[1 + g(1 - R_{\MM}^2)\right]^{\frac{n-1}{2}}}
\Longleftrightarrow
p(\mathbf{Y} \mid \MM, u) 
	= \frac{p(\mathbf{Y} \mid \Mnull)~u^{\frac{p_\MM}{2}}}
	{\left[(1-R_{\MM}^2) + R_{\MM}^2 u\right]^{\frac{n-1}{2}}}.
\end{equation}
Under the general tCCH prior \eqref{eq:tCCH_mixture_of_g},
the marginal likelihood 
$p(\mathbf{Y} \mid \MM) = \int_0^1 p(\mathbf{Y} \mid u, \MM) p(u) du$
lacks a known closed form expression, however,
it is analytically tractable under  the special cases discussed in Section 
\ref{section:CH-g}.  We present results for the general normal linear
model relaxing the assumption that $\X$ is full rank as suggested by
\citet{Liang_etal_2008}.

%%%%%%%% Begin: proposition %%%%%%%%%
\begin{proposition}\label{proposition:normal_marlik}
  Consider a linear model
  $\Y \mid \alpha, \b_\MM, \sigma^2 \sim \N(\one_n \alpha + \X^c_\MM
  \b_\MM, \sigma^2 \W^{-1})$, with $\W$ a fixed $n \times n$ positive
  definite matrix and centered predictors
  $\X^c_\MM = (\I_n - \P_{\one_n})\X_\MM$ where $\P_{\mathbf{1}_n}$ is
  the orthogonal projection onto the column space spanned by $\one_n$
    using $\W$ in place of the
  observed information in \eqref{eq:projection}. Define the
  coefficient of determination as
\begin{equation}
  \label{eq:R2-GLM}
  R^2_\MM  = \frac{\| \P_{\X^c_\MM}\Y \|^2_\W}{\| (\I_n - \P_{\one_n})\Y \|^2_\W}
\end{equation}
where $\|\uv\|^2_\W = \uv^T\W \uv$ for $\uv \in \bbR^n$ and 
$\P_{\X^c_\MM} = \X^c_\MM(\X^{cT}_\MM \W \X^c_\MM)^{-} \X^{cT}_\MM\W$
is the rank $\rank_\MM$ 
orthogonal projection onto the column space spanned by
$\X^c_\MM$ using the information  inner product with $\W$. Under the prior
distributions $p(\alpha, \sigma^2) \propto 1/\sigma^2$, $g$-prior
$\b_\MM \mid \sigma^2, g, \MM \sim N(\zero, g \sigma^2 (\X^{cT}_\MM \W
\X^c_\MM)^{-})$, and the tCCH prior on $1/(1 + g)$, analytic
expressions for marginal likelihoods are available for the following
cases:
\begin{enumerate}[label = (\arabic*)]
\item if $r = 0$ (or equivalently, $\kappa = 1$), then
\begin{equation}\label{eq:normal_marlik_CHg}
  p(\mathbf{Y} \mid \MM, \W) 
= \frac{p(\mathbf{Y} \mid \Mnull, \W)~ B\left( \frac{a + \rank_\MM}{2}, \frac{b}{2} \right)
   \Phi_1\left( \frac{b}{2}, \frac{n-1}{2}, \frac{a+b+\rank_\MM}{2}, \frac{s}{2v}, \frac{R_{\MM}^2}{v-(v-1)R_{\MM}^2} \right)}
   {v^{\frac{\rank_\MM}{2}} \left[ 1-(1-\frac{1}{v})R_{\MM}^2 \right]^{\frac{n-1}{2}}B\left( \frac{a}{2}, \frac{b}{2} \right)
   {_1}F_1\left(\frac{b}{2},  \frac{a+b}{2}, \frac{s}{2v} \right)};
\end{equation}

\item if $s = 0$, then
\begin{align}\label{eq:normal_marlik_robust}
& p(\mathbf{Y} \mid \MM, \W) 
 = \frac{p(\mathbf{Y} \mid \Mnull, \W)~ \kappa^{\frac{a+\rank_\MM - 2r}{2}}
	B\left( \frac{a + \rank_\MM}{2}, \frac{b}{2} \right) } 
   	{v^{\frac{\rank_\MM}{2}} (1-R_\MM^2)^{\frac{n-1}{2}}
	B\left( \frac{a}{2}, \frac{b}{2} \right)
   	{_2}F_1\left(r, \frac{b}{2}; \frac{a+b}{2}, 1-\kappa \right)} \\ \nonumber
& \cdot F_1\left( \frac{a + \rank_\MM}{2}; \frac{a+b+\rank_\MM+1-n-2r}{2}, \frac{n-1}{2}; 
   	\frac{a + b + \rank_\MM}{2}; 1-\kappa, 
	1-\kappa - \frac{R_\MM^2\kappa}{(1-R_\MM^2)v} \right).
\end{align}
\end{enumerate}
Furthermore, if the rank of  $\P_{\X_\MM^{c}}$ is $n-1$, the Bayes factor 
$\BF_{\MM, \Mnull} = 1$.
\end{proposition}
%%%%%%%% End: proposition %%%%%%%%%
A proof of Proposition \ref{proposition:normal_marlik} is provided in
supplementary material Appendix \ref{PROOFproposition:normal_marlik},
along with a brief summary of relevant special functions in
supplementary material Appendix \ref{subsection:special_functions}.
Note that (1) applies to the CH prior and all its special cases, and
(2) applies to robust, hyper-$g/n$, and intrinsic priors.

Similarly, 
the CHIC $g$-prior also yields tractable marginal likelihoods
for the double exponential family \citep{West_1985, Efron_1986}, 
which permits over-dispersion in GLMs
by introducing an unknown dispersion parameter $\phi$:
\begin{equation}\label{eq:de}
p(Y_i \mid \theta_i, \phi) 
 = \phi^{\frac{1}{2}}
	p(Y_i \mid \theta_i)^{\phi}p(Y_i \mid \theta_i = t_i)^{1 - \phi}, \quad i = 1, \ldots, n,
\end{equation}
where $p(Y_i \mid \theta_i)$ follows the GLM density \eqref{eq:likelihood},
and $t_i = \arg\max_{\theta_i} p(Y_i \mid \theta_i)$ is a constant that depends
on the data.
In this formulation, the MLEs $\alpha_\MM, \b_\MM$
do not depend on $\phi$ and the observed information of $\alpha_{\MM},
\boldsymbol\beta_{\MM}$ is block diagonal 
$\mathcal{J}_{n, \phi}\left( \hat{\alpha}_\MM, \hat{\boldsymbol\beta}_\MM\right) 
= {\diag}
\left\{
\phi\mathcal{J}_n( \hat{\alpha}_\MM), \phi\mathcal{J}_n( \hat{\boldsymbol\beta}_\MM) 
\right\},
$
where $\mathcal{J}_n( \hat{\alpha}_\MM)$ and 
$\mathcal{J}_n( \hat{\boldsymbol\beta}_\MM)$ are the observed information
matrices for standard GLMs as in \eqref{eq:J_alpha} and \eqref{eq:J_beta}.
A CHIC $g$-prior to account for over-dispersion based on the observed information
\[
\boldsymbol\beta_\MM \mid g, \MM   \sim 
\text{N}\left( \mathbf{0}, ~ \frac{g}{\phi}\cdot \mathcal{J}_n( \hat{\boldsymbol\beta}_\MM)^{-1} \right),\quad p(\alpha)\propto 1,\quad p(\phi) \propto \phi^{-1},
\]
provides closed form approximate marginal likelihoods after 
integrating out  $\phi$
\begin{equation}\label{eq:de_marlik}
p (\mathbf{Y} \mid \MM, u) \propto~ \frac{\left[ \mathcal{J}_n(\hat{\alpha}_\MM) \right]^{-\frac{1}{2}} u^{\frac{p_{\MM}}{2}}}
  	{\left\{ u Q_{\MM} + 2\sum_{i=1}^n \left[
	Y_i (t_i - \hat{\theta}_i) - b(t_i) + b(\hat{\theta}_i) 
	\right] \right\}^{\frac{n-1}{2}} }.
\end{equation}
A derivation of \eqref{eq:de_marlik} is provided in 
supplementary material Appendix \ref{PROOFeq:de_marlik}.
Since the kernel function of $u$  \eqref{eq:de_marlik} is of the same form
as \eqref{eq:normal_marlik}, there exists a similar result to Proposition 
\ref{proposition:normal_marlik} for tractable marginal likelihoods
after integrating out $u$ under the CHIC prior.

The CHIC $g$-prior provides a rich and unifying framework that encompasses
several common mixtures of $g$-priors.
However, this full six-parameter family poses an overwhelming range of 
choices to elicit for applied statisticians.
As many of
the parameters are not updated by the data, we appeal to the model
selection criteria or desiderata  proposed by \citet{Bayarri_etal_2012} to help
in recommending priors from this class.

% \input{desiderata.tex}
%%%%%%%%%%%%%%%%%%%%%%%%%%%%%%%%%%%%%%%%%%%%%%%%%%%%%
\section{Desiderata for Model Selection Priors}\label{section:criteria}
%%%%%%%%%%%%%%%%%%%%%%%%%%%%%%%%%%%%%%%%%%%%%%%%%%%%%

\citet{Bayarri_etal_2012} establish primary criteria that
priors for model selection or model averaging should ideally satisfy.

%%%%%%%%%%%%%%%%%%%%%%%%%%%%%%%%%%%
\subsection{Basic Criterion} \label{section:basic}

The basic criterion requires the conditional prior distributions 
$p(\boldsymbol\beta_\MM \mid \MM, \alpha)$ to be
proper, so that Bayes factors do not
contain different arbitrary normalizing constants across different subset models
\citep{Kass_Raftery_1995}. 
This criterion does not require specification of a proper prior on $\alpha$, 
nor orthogonalization of $\alpha$ \citep{Bayarri_etal_2012}.
For the $g$-prior \eqref{eq:g-prior_GLM1}, under any model $\MM$,
as long as the observed information $\mathcal{J}(\hat{\boldsymbol\beta}_\MM)$
is positive-definite,
the prior distribution $p(\boldsymbol\beta_\MM \mid g, \MM)$ is a normal distribution, 
and hence the basic criterion holds. It also holds under mixtures of
$g$-priors for any proper prior distribution on $g$.
The basic criterion eliminates the Jeffreys prior on $g$, 
unless the null model is not within consideration.

%%%%%%%%%%%%%%%%%%%%%%%%%%%%%%%%%%%
\subsection{Invariance}\label{section:measurement_invariance}

Measurement invariance suggests that answers should not be affected by
changes of measurement units, i.e., location-scale transformation of
predictors.  Under the $g$-prior
\eqref{eq:g-prior_GLM1}, %or mixtures of $g$-priors in GLMs,
the prior covariance on $\boldsymbol\beta_\MM$ is proportional to
$\left[\mathbf{X}_{\MM}^{cT} \
  \mathcal{J}_n(\hat{\boldsymbol\eta}_\MM)\
  \mathbf{X}_{\MM}^{c}\right]^{-1}$.
If the design matrix is rescaled to $\mathbf{X}_\MM \mathbf{D}$, where
$\mathbf{D}$ is a positive definite diagonal matrix, then the
normalized design $\mathbf{X}_{\MM}^{c}$ becomes
$\mathbf{X}_{\MM}^{c}\mathbf{D}$, and coefficients are rescaled to
$\mathbf{D}^{-1}\boldsymbol\beta_\MM$.  Since the MLE
$\hat{\boldsymbol\eta}_\MM$ remains the same, the prior distribution
on $\boldsymbol\beta_\MM$ is invariant under rescaling.  Furthermore,
the prior on $\boldsymbol\beta_\MM$ is also invariant under
translation, since shifting columns of $\mathbf{X}_\MM$ does not
change $\boldsymbol\beta_\MM$ or $\mathbf{X}_{\MM}^{c}$.  The uniform
prior on $\alpha$ \eqref{eq:alpha-prior} combined with the CHIC $g$-prior
ensures that the prior on $\eg$ is invariant under linear transformations.
For models with unknown variance, the reference prior on $\sigma^2$ in
\eqref{eq:sigma-prior} ensures invariance under scale transformations.

%%%%%%%%%%%%%%%%%%%%%%%%%%%%%%%%%%%
\subsection{Model Selection Consistency}\label{sec:selection_consistency} 
%% definition of model selection consistency
Model selection consistency 
\citep{Fernandez_etal_2001}
has been widely used as a crucial criterion in prior specification.
Based on Bayes rule under the 0-1 loss, a prior distribution is consistent 
for model selection if as $n \rightarrow \infty$, the posterior probability of $\MT$ converges 
in probability to one, or equivalently, the Bayes factor tends to infinity
\[
p(\MT \mid \mathbf{Y})  \stackrel{P}{\longrightarrow} 1 \Longleftrightarrow
%\begin{equation}\label{eq:selection_consis}
\BF_{\MT : \MM}  \stackrel{P}{\longrightarrow}  \infty,
\text{ for all } \MM \neq \MT,
\]
under fixed $p$ and bounded prior odds $p(\MT) / p(\MM)$. 
For normal linear regressions, Zeller-Siow, hyper-$g/n$, and the robust priors
have been shown to be consistent \citep{Liang_etal_2008,
  Bayarri_etal_2012}, while for
GLMs, the Zeller-Siow and hyper-$g/n$ priors based on the null
based $g$-prior in
\eqref{eq:g-prior_SBH2011} have been shown to be consistent
\citep{Wu_etal_2016}. 
We establish consistency for special cases of the CHIC $g$-prior
in Table \ref{tb:tCCHg_parameters}.

%%%%%%%%%% begin: theorem %%%%%%%%%%%%%
\begin{theorem}\label{th:selection_CHg}
When $\MT \neq \Mnull$,
 model selection consistency holds under the robust prior, 
 the intrinsic prior, the CH prior, and the local EB $g$-prior.
When $\MT = \Mnull$,  consistency still holds under 
the robust prior, 
 the intrinsic prior, and the CH prior with $b = O(n)$ or $s = O(n)$,
 but not under the local EB.
\end{theorem}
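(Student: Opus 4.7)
The plan is to work with the closed-form CHIC Bayes factor from the corollary to Proposition~\ref{proposition:tCCH_marlik}, writing $\text{BF}_{\MT:\MM} = \text{BF}_{\MT:\Mnull}/\text{BF}_{\MM:\Mnull}$ and studying the large-$n$ behavior of the two factors under the data-generating model $\MT$. Under the standard regularity conditions of Appendix~\ref{section:assumptions} one has $\mathcal{J}_n(\hat{\alpha}_\MM) = O_p(n)$, and the usual deviance and Wald asymptotics yield that, if $\MM \not\supset \MT$, then $z_\MM/n$ and $Q_\MM/n$ converge in probability to strictly positive constants, while if $\MM \supset \MT$, then $z_\MM - z_\MT$ and $Q_\MM - Q_\MT$ are asymptotically $\chi^2_{p_\MM - p_\MT}$. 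The argument then splits naturally into three cases: (a) $\MM$ is under-fitted; (b) $\MM$ over-fits a non-null $\MT$; and (c) $\MT = \Mnull$ so every alternative $\MM$ over-fits.

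Case (a) is the easy case: the factor $\exp\{(z_\MT - z_\MM)/2\}$ appearing in $\text{BF}_{\MT:\MM}$ grows exponentially in $n$, so the task reduces to bounding the hypergeometric prefactors by something at most polynomial in $n$. For the robust and intrinsic priors this follows from direct manipulation of ratios of incomplete Gamma functions as in \eqref{marlik_robust}; for the CH prior I would use the integral representations of $_1F_1$ and $\Phi_1$; and for the local EB I would substitute $\hat{g}^{\text{LEB}}_\MM = \max(Q_\MM/p_\MM - 1, 0)$ into \eqref{eq:DBF_g} and verify the same polynomial bound. In every sub-case the exponential dominates and $\text{BF}_{\MT:\MM} \to \infty$.

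In case (b), the differences $z_\MM - z_\MT$ and $Q_\MM - Q_\MT$ are only $O_p(1)$ and the exponential factor no longer helps, so consistency turns on a dimensional penalty hidden in the hypergeometric ratio. For the robust and intrinsic priors the $n$-dependent truncation $v = O(n)$ contributes a factor $v^{-(p_\MM - p_\MT)/2} = O(n^{-(p_\MM - p_\MT)/2})$ to $\text{BF}_{\MM:\MT}$, and an expansion of the lower incomplete Gamma function at its argument preserves this rate through the normalizing constants. For the CH prior an analogous Laplace-method analysis of the integral representation of $_1F_1$ applies with the relevant large parameter being $Q_\MM = O_p(n)$ coming from the data, so no a~priori growth on $b$ or $s$ is needed here --- this is why the theorem does not impose such a condition in case (b). For the local EB, substituting $\hat{g}^{\text{LEB}}_\MM = O_p(n)$ into \eqref{eq:DBF_g} again yields a $(1 + \hat{g}^{\text{LEB}}_\MM)^{-(p_\MM - p_\MT)/2}$ penalty that diverges at a polynomial rate.

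The hardest case is (c), and it is precisely where the local EB fails. Here $Q_\MM = O_p(1)$, so $\hat{g}^{\text{LEB}}_\MM$ itself is $O_p(1)$ (and is zero with positive probability), which leaves only an $O_p(1)$ penalty in \eqref{eq:DBF_g} --- insufficient to drive $\text{BF}_{\Mnull:\MM} \to \infty$, paralleling the linear-model failure noted by \citet{Liang_etal_2008}. For the remaining priors, the crucial property is that the prior on $u = 1/(1+g)$ must place vanishing mass near $u = 1$, equivalently must a~priori force $g$ to be large at rate $n$. The robust and intrinsic priors accomplish this through their $n$-dependent truncations of the support of $u$; the CH prior does so only when $b = O(n)$ or $s = O(n)$, which is precisely the condition in the theorem. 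In each of these cases the ratio $p(\mathbf{Y} \mid \MM)/p(\mathbf{Y} \mid \Mnull)$ acquires a polynomial factor of order $n^{-p_\MM/2}$, which combined with the $O_p(1)$ exponential sends the Bayes factor in favor of $\MM$ to zero. The main technical obstacle throughout is the careful asymptotic expansion of $\Phi_1$, and of its $_1F_1$ and truncated-Gamma specialisations, in regimes where one or more of $b, s, v, 1 - \kappa$ grows with $n$; I would handle this by applying Laplace's method directly to the integral representations of these special functions, which is where essentially all of the delicate calculation lives.
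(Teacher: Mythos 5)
Your proposal follows essentially the same route as the paper's proof: factor $\text{BF}_{\MT:\MM}$ into the likelihood-ratio term and the prior-dependent special-function term, show the former grows exponentially for under-fitted models while the prefactors contribute only polynomial rates (via asymptotics of ${}_1F_1$, $\Phi_1$ and the incomplete Gamma function), show that for over-fitted models the dimensional penalty of order $n^{(p_\MM-p_\MT)/2}$ decides (coming from $Q_\MM=O_P(n)$ for the CH prior with fixed $b,s$, and from the $O(n)$ truncation for the robust and intrinsic priors), and conclude that the local EB fails under $\Mnull$ because $Q_\MM=O_P(1)$. The only inaccuracy is your claim that $z_\MM/n$ and $Q_\MM/n$ converge to strictly positive constants when $\MM\not\supset\MT$ (the paper only establishes $O_P(n^{\tau_\MM})$ with $0\le\tau_\MM\le1$, and obtains the linear growth of $z_\MT-z_\MM$ via non-central $\chi^2$ asymptotics and an encompassing model for non-nested pairs), but this does not affect your argument since you only use the exponential dominance of $\exp\{(z_\MT-z_\MM)/2\}$ over polynomial factors.
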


%% sketch of proof
The proof is available in supplementary materials Appendix
\ref{PROOFth:selection_CHg}.  Note that for the CH priors, the result
also holds if the parameters $a, b, s$ are model specific (for
example, the parameters in the Beta-prime prior depends on
$p_\MM$). As revealed in Table \ref{tb:tCCHg_parameters}, among the
mixtures $g$-priors, model selection consistency holds under all but
the three hyper-$g$ prior variants, where consistency fails under the
null model.  Priors that are globally consistent imply
prior choices of $g = O(n)$, which will be discussed in Section
\ref{section:intrinsic}.  This corresponds to flatter priors on
$\b_\MM$, which imposes enough penalty on model sizes, so that the
selection consistency holds even when $\MT = \Mnull$.

%%%%%%%%%%%%%%%%%%%%%%%%%%%%%%%%%%%
\subsection{Information Consistency}\label{section:small_sample}

%% information paradox in linear regression
In normal linear regression, with a fixed sample size $n > p_\MM + 1$,
the information consistency fails under the $g$-prior \eqref{eq:gprior} with fixed $g$  
\citep{Liang_etal_2008}, in the sense that the Bayes factor $\BF_{\MM:\Mnull}$ 
\eqref{eq:normal_marlik} is bounded when model $\MM$ fits all observations
perfectly, i.e., $R^2 = 1$ or $F \to \infty$,  
although in principle it should favor $\MM$ overwhelmingly over $\Mnull$.
\citet{Bayarri_etal_2012} reformulate the information consistency 
as follows: 
If there exists a sequence of datasets with the same sample size $n$ such that 
the likelihood ratio between $\MM$ and $\Mnull$ goes to infinity,
then their Bayes factor should also go to infinity.

%% categorical response GLMs
GLMs with categorical responses such as binary and Poisson regressions, 
have likelihood functions based on probability mass functions, which
have a natural  upper bound $1$, so that  even under data separation for
binary data,  the likelihood ratio remains bounded, and hence information
consistency is not an issue for these GLMs for any prior that
satisfies the basic criterion.

%%%%%%%%%%%%%%%%%%%%%%%%%%%%%%%%%%%
\subsection{Intrinsic Consistency}\label{section:intrinsic}

The intrinsic consistency suggests that as $n$ increases, the limit
distribution of the prior 
$p(\boldsymbol\beta_\MM \mid \alpha, \MM)$ should be independent of $n$
and remain proper, instead of degenerating to a point mass \citep{Bayarri_etal_2012}.
By Lemma \ref{lemma:order_Jalpha_Jbeta} in the supplementary materials,
$\mathcal{J}_n(\hat{\boldsymbol\beta}_\MM) = O_P(n)$ if $\MM \supset \MT$, 
so with any fixed value of $g$, the $g$-prior \eqref{eq:g-prior_GLM1} 
depends implicitly on $n$, and reduces to a point mass at zero asymptotically.
Hence in the $g$-prior or mixtures of $g$-priors, 
the choice $g = O(n)$ is essential to prevent the $g$-prior from 
dominating the likelihood. 

The intrinsic consistency is shown to hold under
the robust prior, since the prior density 
of $g/n$ does not depend on $n$ in the limit
\citep{Bayarri_etal_2012}.
In this sense, other existing priors such as the unit information prior
($g$ set to be $n$), 
Zellner-Siow, hyper-$g/n$, and intrinsic priors also satisfy the intrinsic consistency.
On the other hand, for some mixtures of $g$-priors, whose induced 
prior densities $p(g/n)$ lack closed forms, an implicit version of the intrinsic consistency
that states $\mathbb{E}(1/g) = O(1/n)$ can be studied. 
This implicit intrinsic consistency is shown to hold under 
the Beta-prime prior \citep{Maruyama_George_2011}.
We  show that it also holds under the CH prior in the following proposition, 
with certain hyper parameters. 
%%%%%%%%%% Proposition: intrinsic consistency %%%%%%%%%
\begin{proposition} \label{proposition:CH-g_intrinsic}
Under the CH prior, if the parameters $b = O(n)$ or $s = O(n)$, 
then the prior expectation $\mathbb{E}(1/g) = O(1/n)$ as $n$ goes to infinity.
\end{proposition}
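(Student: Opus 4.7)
The plan is to reduce $\mathbb{E}(1/g)$ to a ratio of integrals by the change of variable $1/g = u/(1-u)$ with $u\sim\text{CH}(a/2,b/2,s/2)$, and then handle the two regimes $b=O(n)$ and $s=O(n)$ separately. Writing the ratio out explicitly, one obtains
\begin{equation*}
\mathbb{E}(1/g) = \frac{\int_0^1 u^{a/2}(1-u)^{b/2-2}e^{-su/2}\,du}{\int_0^1 u^{a/2-1}(1-u)^{b/2-1}e^{-su/2}\,du} = \frac{a}{b-2}\cdot\frac{{}_1F_1(a/2+1,(a+b)/2,-s/2)}{{}_1F_1(a/2,(a+b)/2,-s/2)},
\end{equation*}
via the Euler integral representation of ${}_1F_1$. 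Both integrals and this equality require $b>2$; when $b=O(n)$ this holds for large $n$, and when only $s=O(n)$ with $b$ fixed it is taken as a tacit part of the hypothesis.

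For the regime $b=O(n)$, I would use a likelihood-ratio monotonicity argument. The CH$(a/2,b/2,s/2)$ density is the Beta$(a/2,b/2)$ density tilted by the decreasing factor $e^{-su/2}$, so the CH law is stochastically dominated by the Beta law. Since $u\mapsto u/(1-u)$ is increasing on $(0,1)$, this yields
\begin{equation*}
\mathbb{E}(1/g) \;\leq\; \mathbb{E}^{\text{Beta}(a/2,b/2)}\!\left[\frac{u}{1-u}\right] \;=\; \frac{a}{b-2} \;=\; O(1/n),
\end{equation*}
uniformly in $s\geq 0$. This settles the first case without any asymptotic analysis.

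For the regime $s=O(n)$ with $b>2$ fixed, the mass of $u$ concentrates at scale $1/s$ near zero, so I would rescale by setting $v=su/2$ in both integrals to expose this. After substitution,
\begin{equation*}
\mathbb{E}(1/g) \;=\; \frac{2}{s}\cdot\frac{\int_0^{s/2} v^{a/2}(1-2v/s)^{b/2-2}e^{-v}\,dv}{\int_0^{s/2} v^{a/2-1}(1-2v/s)^{b/2-1}e^{-v}\,dv}.
\end{equation*}
For each fixed $v$, $(1-2v/s)^{b/2-k}\to 1$ as $s\to\infty$, so by dominated convergence each integral tends to the corresponding Gamma integral, giving $\mathbb{E}(1/g) \sim (2/s)\cdot(a/2) = a/s = O(1/n)$.

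The main obstacle is justifying dominated convergence in the numerator when $2<b<4$, where $(1-2v/s)^{b/2-2}$ is unbounded as $v\to s/2$. I would handle this by splitting the range at $v=s/4$: on $[0,s/4]$ the factor is bounded by $2^{2-b/2}$, so dominated convergence applies directly against the integrable envelope $v^{a/2}e^{-v}$; on $[s/4,s/2]$ the contribution is at most $e^{-s/4}$ times a polynomial in $s$ (after substituting $w=1-2v/s$ in the residual factor), which is negligible compared to the $\Theta(1)$ limit of the main piece. The denominator is unproblematic because $b/2-1>0$ implies $(1-2v/s)^{b/2-1}\leq 1$.
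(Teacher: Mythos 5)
Your proof is correct, and while it starts from the same identity as the paper --- writing $\mathbb{E}(1/g)=\frac{B(\frac{a}{2}+1,\frac{b}{2}-1)\,{}_1F_1(\frac{a}{2}+1,\frac{a+b}{2},-\frac{s}{2})}{B(\frac{a}{2},\frac{b}{2})\,{}_1F_1(\frac{a}{2},\frac{a+b}{2},-\frac{s}{2})}=\frac{a}{b-2}\cdot\frac{{}_1F_1(\frac{a}{2}+1,\frac{a+b}{2},-\frac{s}{2})}{{}_1F_1(\frac{a}{2},\frac{a+b}{2},-\frac{s}{2})}$ --- the asymptotic analysis proceeds by a genuinely different route. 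The paper disposes of both regimes by citing asymptotic expansions of the confluent hypergeometric function: Abramowitz--Stegun 13.1.5 for the large negative argument ($s=O(n)$) and Slater's formulas (4.3.3)/(4.3.7) for the large-$b$ regime, the latter requiring a further case split according to whether $s$ is bounded or grows proportionally to $b$. You replace the special-function machinery with two elementary arguments: for $b=O(n)$, the observation that the CH law is the $\mathrm{Beta}(\frac{a}{2},\frac{b}{2})$ law tilted by the decreasing factor $e^{-su/2}$, hence dominated in the likelihood-ratio (and so the usual stochastic) order, giving the clean non-asymptotic bound $\mathbb{E}(1/g)\le a/(b-2)$ uniformly in $s\ge 0$ and covering all sub-cases of $s$ in one stroke; and for $s=O(n)$ with $b$ fixed, a Laplace-type rescaling $v=su/2$ plus dominated convergence, with an explicit split at $v=s/4$ to tame the integrable singularity of $(1-2v/s)^{b/2-2}$ when $2<b<4$ --- a boundary issue the paper's citation-based argument never has to surface. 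What the paper's approach buys is brevity and the exact leading constant in each regime in two lines; what yours buys is a self-contained argument with uniform, non-asymptotic control in the $b=O(n)$ case and an honest acknowledgment of the tacit requirement $b>2$ (without which $\mathbb{E}(1/g)$ is infinite), which the paper's proof also implicitly assumes but does not state. Both arguments yield the same conclusion, and in the $s=O(n)$ regime the same leading behavior $\mathbb{E}(1/g)\sim a/s$.
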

%%%%%%%%%% %%%%%%%%%% %%%%%%%%%%
The proof is provided in supplementary materials Appendix 
\ref{PROOFproposition:CH-g_intrinsic}.
In contrast, the $g$-prior with fixed $g$, the hyper-$g$ prior and 
its special cases are 
eliminated due to their $g=O(1)$ choices.
Note that for the CHIC family, the intrinsic consistency and the previously discussed
model selection consistency hold under the same conditions.

%%%%%%%%%%%%%%%%%%%%%%%%%%%%%%%%%%%%%%%%%%%%%%%%%%%%%
\subsection{Estimation Consistency}\label{sec:BMA_consistency}

Parameter estimation is an essential part of regression analysis, with
or without model selection.  When $\MT$ is known and
$\MT \neq \Mnull$, one detractor of the $g$-prior with fixed $g$ is
that the approximate posterior mean
$\mathbb{E}[\boldsymbol\beta_\MT \mid \mathbf{Y}, g, \MT] = g/(1+g)
\hat{\boldsymbol\beta}_\MT \stackrel{\text{P}}{\longrightarrow}
g/(1+g) \boldsymbol\beta^*_\MT$
remains biased asymptotically as $n$ tends to infinity.  For mixtures
of $g$-priors, since the distribution of $g$ adapts to the  data, a sufficient
condition to resolve this asymptotic bias is for the posterior
distribution of the shrinkage factor $z = g/(1+g)$ to converge to $1$
in the limit.

%%%%%%%%%% begin: proposition %%%%%%%%%%%%%
\begin{proposition}\label{prop:consistency_shrinkage_factor1}
For the CH, robust, and intrinsic priors, 
when $\MT \neq \Mnull$,  
 the characteristic function of the conditional posterior distribution $z=g/(1+g)$ under $\MT$
converges  in probability to that of a degenerate distribution at $1$, i.e.,
for any $t \in \mathbb{R}$,
%\begin{equation*}\label{eq:limit_z}
%\textrm{plim}_{n\rightarrow \infty}~ p(z\mid \mathbf{Y}, \MM) = \delta_1(z),
$\phi_{z\mid \mathbf{Y}, \MT}(t)  \stackrel{\triangle}{=} 
\mathbb{E}\left( e^{itz} \right)  \stackrel{\text{P}}{\longrightarrow} \exp(it)$.
Therefore, all moments of $p(z\mid \mathbf{Y}, \MT)$ converge
to $1$ in probability. In particular,
the posterior mean $\mathbb{E}(z\mid \mathbf{Y}, \MT) \stackrel{\text{P}}{\longrightarrow} 1$ and  
the posterior variance $\mathbb{V}(z\mid \mathbf{Y}, \MT) \stackrel{\text{P}}{\longrightarrow} 0$.
%\end{equation*}
\end{proposition}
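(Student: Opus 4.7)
The plan is to reduce the claim to posterior concentration of $u = 1/(1+g) = 1-z$ at zero under $\MT$. Since $\phi_{z\mid \mathbf{Y}, \MT}(t) = e^{it}\,\mathbb{E}(e^{-itu} \mid \mathbf{Y}, \MT)$ and $|e^{-itu}-1|\le |t|\,u$ on the posterior support $[0,1]$, it suffices to show $\mathbb{E}(u \mid \mathbf{Y}, \MT) \stackrel{\text{P}}{\longrightarrow} 0$. Once this is in hand, bounded convergence on $[0,1]$ upgrades it to convergence in probability of all integer moments of $u$ and of $z=1-u$, yielding in particular the posterior-mean and posterior-variance conclusions stated at the end.

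The probabilistic input driving concentration is the growth rate of the Wald statistic under $\MT$. The standard regularity conditions of Appendix \ref{section:assumptions} give $\hat{\b}_\MT \stackrel{\text{P}}{\longrightarrow} \b_\MT^* \neq \zero$ together with $\mathcal{J}_n(\hat{\b}_\MT) = O_P(n)$ via Lemma \ref{lemma:order_Jalpha_Jbeta}, so $Q_\MT = \hat{\b}_\MT^T \mathcal{J}_n(\hat{\b}_\MT) \hat{\b}_\MT$ diverges at rate $n$ in probability. For the robust and intrinsic priors, Proposition \ref{proposition:tCCH_marlik} identifies the posterior of $u$ as a tCCH distribution supported on $[0,1/v]$ with $1/v$ equal to $(p_\MT+1)/(n+1)$ and $(p_\MT+1)/(n+p_\MT+1)$ respectively; both are $O(1/n)$, so $\mathbb{E}(u\mid \mathbf{Y},\MT) \le 1/v = O(1/n)$ deterministically, closing the argument immediately for these two cases.

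The CH case ($v=1$, $\kappa=1$, $r=0$) is the main technical obstacle. By Proposition \ref{proposition:tCCH_marlik}, the posterior density of $u$ is proportional to $u^{(a+p_\MT)/2 - 1}(1-u)^{b/2 - 1}\exp\{-\tau u\}\mathbf{1}_{(0,1)}(u)$ with $\tau = (s+Q_\MT)/2$ diverging at rate $n$ in probability. The plan is to rescale $w = \tau u$, which has density on $(0,\tau)$ proportional to $w^{(a+p_\MT)/2 - 1}(1 - w/\tau)^{b/2 - 1}e^{-w}$. When $b$ is fixed in $n$, the factor $(1-w/\tau)^{b/2-1}\to 1$ pointwise and Scheff\'e's lemma identifies the limiting law of $w$ as Gamma$((a+p_\MT)/2, 1)$; hence $u = w/\tau \stackrel{\text{P}}{\longrightarrow} 0$ in the posterior with $\mathbb{E}(u\mid \mathbf{Y},\MT) = O_P(1/n)$ and $\mathbb{V}(u\mid \mathbf{Y},\MT) = O_P(1/n^2)$. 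When $b=O(n)$ or $s=O(n)$ (as for Beta-prime, benchmark, or ZS adapted), the factor $(1-w/\tau)^{b/2-1}$ contributes an additional exponential tilt but the limit is still a Gamma law on a compact-to-$\infty$ range, giving the same $1/n$ rate; alternatively one may invoke the asymptotic expansion ${}_1F_1(a,b,-\tau) \sim \Gamma(b)\tau^{-a}/\Gamma(b-a)$ as $\tau\to\infty$ to evaluate the explicit moment formula for the CH posterior directly. The only delicate step is keeping these pointwise convergence statements uniform enough to pass to the posterior expectation, which is what prevents a purely one-line argument.
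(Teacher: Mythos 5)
Your proposal is correct, but it takes a genuinely different route from the paper. The paper never reduces to posterior concentration of $u$: it writes the conditional posterior characteristic function exactly — for the CH prior as a ratio of ${}_{1}F_{1}$ functions evaluated at $(s+Q_\MT)/2$ and $(s+Q_\MT)/2+it$, for the robust prior as a ratio of incomplete Gamma functions times $\bigl((Q_\MT-2it)/Q_\MT\bigr)^{-(p_\MT+1)/2}$, and for the intrinsic prior as an exponential factor times a ratio of $\Phi_1$ functions — and then applies large-argument asymptotics of these special functions (Abramowitz--Stegun 13.1.4, Slater 4.3.3/4.3.7) together with $Q_\MT=O_P(n)$ and the continuous mapping theorem. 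Your argument instead reduces everything to $\mathbb{E}(u\mid\Y,\MT)\stackrel{P}{\to}0$ (via $\phi_z(t)=e^{it}\mathbb{E}(e^{-itu}\mid\Y,\MT)$ and $|e^{-itu}-1|\le|t|u$), which buys you two things: for the robust and intrinsic priors the truncated posterior support $(0,1/v)$ with $1/v=O(1/n)$ settles the claim in one deterministic line, with no special functions and in fact without even using $\MT\neq\Mnull$; and for the CH prior the rescaling $w=\tau u$ with $\tau=(s+Q_\MT)/2$ gives a transparent Gamma-limit picture. The uniformity/uniform-integrability issue you flag at the end is real if you insist on the posterior mean, but it is avoidable: for the characteristic-function (and hence all-moments) conclusion it suffices that $\Pr(u>\epsilon\mid\Y,\MT)\stackrel{P}{\to}0$, using $|e^{-itu}-1|\le\min(|t|u,2)$, and this follows from a simple ratio bound of the tCCH/CH density integrals without Scheff\'e; alternatively, the explicit ${}_{1}F_{1}$ moment formula you mention closes it and is essentially the paper's computation transplanted from the characteristic function to the first moment. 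In short, the paper's exact-CF-plus-asymptotics proof is uniform by construction, while yours is more elementary (especially for the truncated robust and intrinsic cases) once the concentration step is phrased in terms of tail probabilities rather than means.
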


The proof is given in supplementary materials 
Appendix \ref{PROOFprop:consistency_shrinkage_factor1}.

%%%%%%%%%% end: proposition %%%%%%%%%%%%%

%% BMA under CH-g prior and definition of estimation consistency
When $\MT$ is unknown, one may prefer Bayesian model averaging (BMA)
estimators to account for model uncertainty. In BMA, $\boldsymbol\beta$
denotes the $p$ dimensional vector of coefficients corresponding to
all potential predictors, while $\boldsymbol\beta_\MM$ is typically
length $p_\MM$ vector of the nonzero coefficients. With a slight
over-use of notation, we let $\b_{\MM}$ denote the length $p$ vector, with
zeros filled for the dimensions not included in $\MM$.  The posterior
 of $\boldsymbol\beta$ under BMA is thus 
\begin{equation}\label{eq:bma}
p(\boldsymbol\beta \mid \mathbf{Y})
= p(\MT\mid \mathbf{Y})\ 
  p(\boldsymbol\beta_{\MT} \mid \mathbf{Y}, \MT) + 
  \sum_{\MM \neq \MT} p(\MM\mid \mathbf{Y})\ 
  p(\boldsymbol\beta_{\MM} \mid \mathbf{Y}, \MM)
\end{equation}
where conditional posterior distributions $p(\boldsymbol\beta_{\MM} \mid \mathbf{Y}, \MM)
 = \int p\left(\boldsymbol\beta_{\MM} \mid \mathbf{Y}, g, \MM\right) 
 p\left(g \mid \mathbf{Y}, \MM \right) d g$ for all subset models $\MM \neq \Mnull$.
When the selection consistency holds, i.e., $p(\MT\mid \mathbf{Y})\stackrel{\text{P}}{\longrightarrow}1$,
the second term in \eqref{eq:bma} vanishes in the limit, 
so we just need to study the posterior distribution of $\boldsymbol\beta_{\MT}$.
When $\MT = \Mnull$, even if the selection consistency fails,
consistency of the MLEs yields the correct estimation of the true
parameter $\b_\MT^* = \zero$, with or without shrinkage. 
%%%%%%%%%% begin: theorem %%%%%%%%%%%%%
\begin{theorem}\label{thm:consistency_post_beta}
For the CH, robust, and intrinsic priors,
the characteristic function of the posterior distribution under BMA $p(\boldsymbol\beta \mid \mathbf{Y})$
converges  in probability to that of a degenerate distribution at $\boldsymbol\beta_\MT^*$; i.e., 
for any $\mathbf{t} \in \mathbb{R}^p$,
$\phi_{\boldsymbol\beta\mid \mathbf{Y}}(\mathbf{t})  \stackrel{\text{P}}{\longrightarrow} 
e^{i\mathbf{t}^T \boldsymbol\beta_{\MT}^*}$.
In particular, the mean and covariance of the
posterior distribution of $\boldsymbol\beta$ under model averaging have limits
$\mathbb{E}(\boldsymbol\beta \mid \mathbf{Y}) \stackrel{\text{P}}{\longrightarrow} \boldsymbol\beta_\MT^*$
and $\mathbb{V}(\boldsymbol\beta \mid \mathbf{Y})  \stackrel{\text{P}}{\longrightarrow} \mathbf{0}$.
\end{theorem}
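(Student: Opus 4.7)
The plan is to decompose the BMA posterior via \eqref{eq:bma} and then exploit the two consistency results already at our disposal, namely model selection consistency (Theorem \ref{th:selection_CHg}) and the shrinkage-factor consistency of Proposition \ref{prop:consistency_shrinkage_factor1}. First, I would write the characteristic function as
\[
\phi_{\boldsymbol\beta \mid \mathbf{Y}}(\mathbf{t}) = p(\MT \mid \mathbf{Y})\,\phi_{\boldsymbol\beta_{\MT} \mid \mathbf{Y}, \MT}(\mathbf{t}) + \sum_{\MM \neq \MT} p(\MM \mid \mathbf{Y})\,\phi_{\boldsymbol\beta_{\MM} \mid \mathbf{Y}, \MM}(\mathbf{t}).
\]
Since each $\phi_{\boldsymbol\beta_{\MM} \mid \mathbf{Y}, \MM}(\mathbf{t})$ is bounded by $1$ in modulus and $p(\MT \mid \mathbf{Y}) \stackrel{P}{\longrightarrow} 1$ under all three priors by Theorem \ref{th:selection_CHg} (with the CH caveat that $b = O(n)$ or $s = O(n)$ is needed when $\MT = \Mnull$), the remainder sum is bounded in modulus by $1 - p(\MT \mid \mathbf{Y})$ and hence vanishes in probability. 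The problem thus reduces to identifying the limit of $\phi_{\boldsymbol\beta_{\MT} \mid \mathbf{Y}, \MT}(\mathbf{t})$.

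For the $\MT$ component, I would condition on the shrinkage factor $z = g/(1+g)$. From the approximate posterior \eqref{eq:betapost},
\[
\phi_{\boldsymbol\beta_{\MT} \mid \mathbf{Y}, \MT, z}(\mathbf{t}) = \exp\left\{i z\,\mathbf{t}^T \hat{\boldsymbol\beta}_{\MT} - \tfrac{z}{2}\,\mathbf{t}^T \mathcal{J}_n(\hat{\boldsymbol\beta}_{\MT})^{-1}\mathbf{t}\right\},
\]
so $\phi_{\boldsymbol\beta_{\MT} \mid \mathbf{Y}, \MT}(\mathbf{t})$ is the expectation of this expression under $p(z \mid \mathbf{Y}, \MT)$. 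MLE consistency gives $\hat{\boldsymbol\beta}_{\MT} \stackrel{P}{\longrightarrow} \boldsymbol\beta_{\MT}^*$, while the regularity conditions together with Lemma \ref{lemma:order_Jalpha_Jbeta} in the supplement imply $\mathcal{J}_n(\hat{\boldsymbol\beta}_{\MT})^{-1} = O_P(n^{-1})$; hence, pointwise in $z \in [0,1]$, the integrand tends in probability to $\exp(i z \,\mathbf{t}^T \boldsymbol\beta_{\MT}^*)$.

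The main technical step is combining these ingredients. Proposition \ref{prop:consistency_shrinkage_factor1} forces $z \mid \mathbf{Y}, \MT$ to concentrate at $1$ in probability, and since the integrand is uniformly bounded by $1$ in modulus on $[0,1]$, a bounded-convergence argument conditional on the good event where $\hat{\boldsymbol\beta}_{\MT}$, $n\,\mathcal{J}_n(\hat{\boldsymbol\beta}_{\MT})^{-1}$, and the posterior of $z$ all behave as expected, yields $\phi_{\boldsymbol\beta_{\MT} \mid \mathbf{Y}, \MT}(\mathbf{t}) \stackrel{P}{\longrightarrow} \exp(i\,\mathbf{t}^T \boldsymbol\beta_{\MT}^*)$. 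The degenerate case $\MT = \Mnull$ is handled separately: $\boldsymbol\beta_{\MT}^* = \zero$, and MLE consistency of $\hat{\boldsymbol\beta}_{\MM}$ at the null --- combined with the shrinkage factor pulling toward zero --- forces every $\phi_{\boldsymbol\beta_{\MM} \mid \mathbf{Y}, \MM}(\mathbf{t}) \stackrel{P}{\longrightarrow} 1$, whether or not full selection consistency holds.

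The moment claims do not follow automatically from characteristic function convergence to a point mass, so I would argue them directly via the total-variance decomposition $\mathbb{V}(\boldsymbol\beta \mid \mathbf{Y}) = \mathbb{E}[\mathbb{V}(\boldsymbol\beta \mid \mathbf{Y}, \MM) \mid \mathbf{Y}] + \mathbb{V}[\mathbb{E}(\boldsymbol\beta \mid \mathbf{Y}, \MM) \mid \mathbf{Y}]$. The within-model piece equals $\mathbb{E}(z \mid \mathbf{Y}, \MM)\,\mathcal{J}_n(\hat{\boldsymbol\beta}_{\MM})^{-1}$ (after zero-padding), which is $O_P(n^{-1})$ for $\MM \supset \MT$ and whose posterior-weighted average is dominated by the $\MT$ term via selection consistency; analogously for the posterior mean, $\mathbb{E}(\boldsymbol\beta_{\MT} \mid \mathbf{Y}, \MT) = \mathbb{E}(z \mid \mathbf{Y}, \MT)\,\hat{\boldsymbol\beta}_{\MT} \stackrel{P}{\longrightarrow} \boldsymbol\beta_{\MT}^*$ by Proposition \ref{prop:consistency_shrinkage_factor1}. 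The between-model variance vanishes because $p(\MM \mid \mathbf{Y}) \to 0$ for $\MM \neq \MT$ while each $\mathbb{E}(\boldsymbol\beta_{\MM} \mid \mathbf{Y}, \MM)$ is bounded in probability. The hardest piece of bookkeeping will be ensuring the $O_P$ bounds on within-model means and variances hold uniformly across the $2^p - 1$ non-target models; because $p$ is fixed, a finite-union bound suffices and this obstacle is ultimately mild.
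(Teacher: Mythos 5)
Your proposal follows essentially the same route as the paper's proof: selection consistency reduces the BMA characteristic function to the $\MT$ term, which is then handled by integrating the conditional normal characteristic function over the posterior of the shrinkage factor $z$ (Fubini plus bounded convergence, the concentration of $z$ at $1$, and MLE consistency with $\mathcal{J}_n(\hat{\boldsymbol\beta}_{\MT})^{-1} = O_P(n^{-1})$), with the case $\MT = \Mnull$ treated separately through MLE consistency under every model so that all conditional posteriors degenerate at $\zero$ whether or not selection consistency holds. Your explicit total-variance argument for the moment limits is a small addition beyond the paper, which simply asserts the mean and covariance limits as consequences of the characteristic-function convergence, but it does not change the substance of the proof.
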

A proof is given in supplementary materials Appendix \ref{PROOFthm:consistency_post_beta}.
Note, this estimation consistency for $\b$ also implies estimation
consistency for $\boldsymbol{\eta}$ and functions
of $\boldsymbol{\eta}$.

%%%%%%%%%%%%%%%%%%%%%%%%%%%%%%%%%%%%%
\subsection{Predictive Matching}

Predictive matching is viewed as one of the most crucial aspects for
objective model selection priors as improper scaling of priors may
have critical consequences for comparing models in high dimensional
problems \citep{Bayarri_etal_2012}.  Jeffreys suggests that when
comparing two models with minimal sample sizes where one should not be
able to discriminate between them,  the Bayes factor should
be close to one. In particular, exact predictive matching occurs if
it equals one.  The minimal training sample is defined by
\citet{Bayarri_etal_2012} as the smallest sample size with a finite
nonzero marginal density for the combination of models and priors.
For normal linear models with unknown variance, the minimal sample
size is 2 (or the number of parameters in the null model) and 
exact predictive matching occurs under the CHIC
$g$-priors. For GLMs with known
dispersion, the minimal training sample size would be 1.  The
asymptotic approximations of course do not apply in such a case,
however, for a minimal sample size and a model for which
$\J(\eg) \neq \zero$ but $\b_\MM$ is not identifiable, the results
from Proposition \ref{proposition:non_full_rank_BF} establish
that exact null predictive matching holds under the CHIC $g$-prior.

% \input{examples.tex}
%%%%%%%%%%%%%%%%%%%%%%%%%%%%%%%%%%%%%%%%%%%%%%%%%%
\section{Examples}\label{sec:examples}
%%%%%%%%%%%%%%%%%%%%%%%%%%%%%%%%%%%%%%%%%%%%%%%%%%

We explore properties of the priors in finite samples for logistic
regression via simulation studies under a range of sparsity scenarios.
Results from Poisson regression reveal similar findings to the
logistic simulation study, and are included in supplementary material
Appendix \ref{section:poisson}. We then turn to a re-analysis of the
GUSTO-I data considered in \citet{Held_etal_2015} to illustrate the
methodology and compare prior distributions for estimation of
posterior inclusion probabilities and out-of-sample predictive
performance. \blind{An R package}{The R package {\tt BAS}}, available on CRAN, is used for all
computations in this section.

%%%%%%%%%%%%%%%%%%%%%%%%%%%%%%%%%%%
\subsection{A Simulation Study}\label{section:simulation}

%% introduction of simulations
We conduct a simulation to explore properties of the priors for model
selection and estimation in logistic regression using $p=20$ and
$p=100$ predictors and under different designs for $\X$.  For each
simulated dataset, we take $n = 500$ with the columns of $\X$ drawn
from standard normal distributions, which have pairwise correlation
$\text{cor}(\mathbf{X}_i, \mathbf{X}_j) = r^{|i - j|}$ for
$1\leq i < j \leq p$, with $r = 0$ (independent design) or $r = 0.75$
(correlated design). We consider four different levels of sparsity in the
true model (see Table \ref{tb:true_models}) for $p=20$.  For $p = 100$, we
consider only the sparse scenario where $p_\MT = 5$, with additional
coefficients $\boldsymbol\beta^*_{\MT, 21:100} = \mathbf{0}$.
For $p = 20$, we enumerate among all $2^{20}$ subset models using a
uniform distribution over the model space, 
$p(\MM) = 1/ 2^p$, which assigns every models equal prior weights.
For $p = 100$, we use the MCMC algorithm in \citet{Clyde_etal_2011} with
$2^{17} \approx 131,000$ iterations. 
In addition to the uniform prior, we also consider 
the Beta-Binomial$(1, 1)$ prior over the model space,
$p(\MM) = (p+1)^{-1} {p \choose p_\MM}^{-1}$, which is recommended for
multiplicity adjustment in Bayesian variable selection for large $p$ as
it puts uniform weights on model sizes $0, 1, \ldots, p$  
\citep{Ley_Steel_2009}  and encourages sparsity when $p_{\MT} \ll p/2$.

%%%%%%%%%%% True models %%%%%%%%%%%
\begin{table}[ht]
  \caption[Simulation: four scenarios of true models]
{Values of the intercept and coefficients $(\alpha^*_\MT, \boldsymbol\beta^*_\MT)$
in the true models in the logistic regression simulation study 
with $p = 20$, where  $\mathbf{b} = (2, -1, -1, 0.5, -0.5)^T$.}
\label{tb:true_models}
\begin{center}
\begin{tabular}{| l | r | r|rrrr |}
  \hline
Scenario	& $p_{\MT}$	& $\alpha^*_\MT$ 	& $\boldsymbol\beta^*_{\MT, 1:5}$ &$\boldsymbol\beta^*_{\MT, 6:10}$ 	&$\boldsymbol\beta^*_{\MT, 11:15}$ 	&$\boldsymbol\beta^*_{\MT, 16:20}$ 	\\
  \hline
Null 		& $0$		&\multirow{4}{*}{$-0.5$}	&	$\mathbf{0}$			&	$\mathbf{0}$					&	$\mathbf{0}$					&	$\mathbf{0}$					 \\ 
  %\hline   
Sparse 		& $5$		&					&$\mathbf{b}$					&	$\mathbf{0}$					&	$\mathbf{0}$					&	$\mathbf{0}$					 \\ 
  %\hline
Medium		& $10$		&					&$\mathbf{b}$					&	$\mathbf{0}$					&	$\mathbf{b}$					&	$\mathbf{0}$					 \\ 
  %\hline
Full 		& $20$		&					&$\mathbf{b}$					&	$\mathbf{b}$					&	$\mathbf{b}$					&	$\mathbf{b}$					 \\ 
   \hline
\end{tabular}
\end{center}
\end{table}

%% Analyze selection results
For model selection, we select the model with the highest posterior
probability (or the smallest AIC, BIC) under a 0-1 loss. Table
\ref{tb:selection_logistic_top} displays the number of times $\MT$ is
selected in 100 simulations under each scenario, while Table
\ref{tb:selection_logistic_averagesize} in the supplementary materials
shows the average size of the selected models. 
The fully Bayes methods can be roughly divided into two groups
according to their prior concentration preference: $g = O(n)$ and $g = O(1)$.
The $g=O(n)$ group, including all the special cases
of the CHIC prior that satisfy model selection and intrinsic consistency (see Table \ref{tb:tCCHg_parameters}), 
lead to more parsimonious models, and hence outperform the rest of the methods
in  scenarios where the full model is not true, 
while the $g=O(1)$ group, including the hyper-$g$ prior and its 
special cases, 
are more accurate only when the full model is true.  These result also
confirm the theoretical findings in Section
\ref{sec:selection_consistency}  
and in \citet{Liang_etal_2008}, that the priors on $g$ independent of $n$
are not consistent for model selection\footnote{
Since the Jeffreys prior is improper, when implementing it,
the null model is always excluded.} when $\MT = \Mnull$.
Interestingly, the hyper-$g/n$ prior, although in the $g = O(n)$ group,
performs closer to the hyper-$g$ prior variants,
when the full model is true, or when $p = 100$.
The results under the unit information prior, i.e., the $g$-prior with $g=n$, 
DBF and TBF yield almost identical results, which is also noted by
\citet{Held_etal_2015} and provide results that are intermediate. 
Both can outperform mixtures of $g$-priors in the $g=O(n)$ group 
when the true model is sparse, but may not perform as
well as them when $\MT$ is the null model or the full model. 

Among non-fully Bayesian methods, the local EB tends to favor large
models, which is also noted in \citet{Hansen_Yu_2003}.  When
$\MT = \Mnull$, it never selects the correct model but surprisingly
almost always selects the full model (average model size is 19).
Between AIC and BIC, the former favors larger models while the latter
favors smaller ones.  BIC performs comparably to priors in the
$g=O(n)$ group as long as $\MT$ is not the full model.

The prior distribution over the model space also leads to significant
difference.  When 
$p = 100$ and $p_\MT = 5$, under most $g$-priors and mixtures of
$g$-priors, the Beta-Binomial$(1,1)$ prior favors sparser models than
the uniform prior, leading to more accurate model selection results.
However, it is the opposite case with the hyper-$g/n$ prior, the three
hyper-$g$ variants, and the local EB, for which the average model
sizes are large (around 70) under the uniform prior, but even larger
under the Beta-Binomial prior (close to 100).  This phenomenon can be
explained by the symmetric U-shaped density curve of the Beta-Binomial prior
\citep[Fig 1]{Scott_Berger_2010} --- where the null model and the full
model have the highest prior probabilities, among all individual models.  
For methods that lead to
marginal likelihoods that favor
model sizes larger than $p/2$, the Beta-Binomial(1,1) prior 
does not necessarily promote sparsity and may encourage selection of
the full model.

%%%%%%%%%%%% Selection, logistic regression, uniform prior %%%%%%%%%%%
%% reduce inter-column space
\setlength{\tabcolsep}{5pt}

\begin{table}[ht]
\centering
\caption[Logistic regression: model selection accuracy under 0-1 loss.]
{%Proper prior on $\alpha_\MM$. 
Logistic regression simulation example:
number of times the true model is selected out of 100 realizations.
Column-wise maximum is in bold type.} \label{tb:selection_logistic_top}
\begin{tabular}{| l | rr|rr|rr|rr | rr | rr|}
  \hline
$p$ 			& \multicolumn{8}{c|}{20}	& \multicolumn{4}{c|}{100}	 \\ \hline
$p(\MM)$		&  \multicolumn{8}{c|}{Uniform}	&  \multicolumn{2}{c|}{Uniform}	 &  \multicolumn{2}{c|}{BB$(1,1)$}	\\ \hline
$p_\MT$ 		& \multicolumn{2}{c|}{0}	& \multicolumn{2}{c|}{5}	& \multicolumn{2}{c|}{10}	& \multicolumn{2}{c|}{20}	& \multicolumn{2}{c|}{5}	& \multicolumn{2}{c|}{5} \\ 
$r$ 				& 0		& 0.75	& 0		& 0.75	& 0		& 0.75	& 0		& 0.75	& 0		& 0.75		& 0		& 0.75		 \\ 
  \hline
CH$(a=1/2,b=n)$ & {\bf 92} & 88 & 61 & 29 & {\bf 38} & 8 & 6 & 0 & 11 & 11 & 61 & 6 \\ 
  CH$(a=1,b=n)$ & 85 & 82 & 60 & 30 & 37 & 8 & 6 & 0 & 15 & 9 & 61 & 6 \\ 
  CH$(a=1/2,b=n/2)$ & 86 & 84 & 46 & 28 & 30 & {\bf 12} & 8 & 0 & 3 & 2 & 62 & 6 \\ 
  CH$(a=1,b=n/2)$ & 70 & 73 & 45 & 30 & 30 & 11 & 8 & 0 & 8 & 4 & {\bf 63} & 6 \\ 
  Beta-prime & {\bf 92} & 88 & 61 & 29 & {\bf 38} & 8 & 7 & 0 & 11 & 6 & 61 & 6 \\ 
  ZS adapted & 85 & 82 & 60 & 30 & 37 & 8 & 6 & 0 & 8 & 11 & 61 & 6 \\ 
  Benchmark & 91 & {\bf 93} & 28 & {\bf 31} & 19 & 8 & 16 & 0 & 6 & 3 & 62 & 6 \\ 
  Robust & 86 & 83 & 41 & 29 & 29 & 10 & 8 & 0 & 4 & 1 & 52 & 5 \\ 
  Intrinsic & 76 & 77 & 40 & 29 & 26 & 10 & 8 & 0 & 2 & 3 & 56 & 5 \\ 
  Hyper-$g/n$ & 77 & 73 & 37 & {\bf 31} & 23 & 7 & 16 & 0 & 0 & 0 & 1 & 0 \\ 
  DBF, $g=n$ & 73 & 79 & {\bf 67} & 29 & 31 & 2 & 0 & 0 &  {\bf 68} & 26 & 55 & 3 \\ 
  TBF, $g=n$ & 73 & 79 & {\bf 67} & 29 & 31 & 2 & 0 & 0 &  {\bf 68} & 27 & 55 & 3 \\ 
  Jeffreys & NA & NA & 28 & 28 & 17 & 7 & 16 & 0 & 0 & 0 & 1 & 0 \\ 
  Hyper-$g$ & 6 & 9 & 25 & 29 & 15 & 8 & 16 & {\bf 1} & 0 & 0 & 0 & 1 \\ 
  Uniform & 2 & 5 & 23 & 24 & 14 & 6 & {\bf 18} & {\bf 1} & 0 & 0 & 0 & 0 \\ 
  Local EB & 0 & 0 & 25 & 29 & 15 & 7 & 16 & {\bf 1} & 0 & 0 & 0 & 0 \\ 
  AIC & 3 & 7 & 5 & 9 & 13 & 5 & 12 & 0 & 1 & 2 & {\bf 63} & {\bf 15} \\ 
  BIC & 73 & 79 & {\bf 67} & 29 & 31 & 2 & 0 & 0 & 67 &  {\bf 28} & 55 & 3 \\ 
   \hline
\end{tabular}
\end{table}

\begin{table}[ht]
\centering
\caption[Logistic regression: model selection accuracy under 0-1 loss.]
{
Logistic regression simulation example: 
$100$ times the average SSE
$=\sum_{j=0}^p (\tilde{\beta}_j - \beta_{j, \MT}^*)^2$ of 100 realizations.
%where $\tilde{\beta}_j$ is the Bayesian model averaging estimator.
Column-wise minimum is in bold type.
} \label{tb:selection_logistic_bma}
\begin{tabular}{| l | rr|rr|rr|rr | rr | rr|}
  \hline
$p$ 			& \multicolumn{8}{c|}{20}	& \multicolumn{4}{c|}{100}	 \\ \hline
$p(\MM)$		&  \multicolumn{8}{c|}{Uniform}	&  \multicolumn{2}{c|}{Uniform}	 &  \multicolumn{2}{c|}{BB$(1,1)$}	\\ \hline
$p_\MT$ 		& \multicolumn{2}{c|}{0}	& \multicolumn{2}{c|}{5}	& \multicolumn{2}{c|}{10}	& \multicolumn{2}{c|}{20}	& \multicolumn{2}{c|}{5}	& \multicolumn{2}{c|}{5} \\ 
$r$ 				& 0		& 0.75	& 0		& 0.75	& 0		& 0.75	& 0		& 0.75	& 0		& 0.75		& 0		& 0.75		 \\ 
  \hline
  CH$(a=1/2,b=n)$ & 3 & 3 & 21 & 44 & 51 & 96 & 94 & 184 & 109 & 135 & {\bf 26} & 78 \\ 
  CH$(a=1,b=n)$ & 3 & 4 & 21 & {\bf 43} & 51 & 96 & 94 & 183 & 119 & 139 & {\bf 26} & 77 \\ 
  CH$(a=1/2,b=n/2)$ & 4 & 5 & 22 & {\bf 43} & 50 & 92 & 87 & 172 & 158 & 182 & {\bf 26} & 75 \\ 
  CH$(a=1,b=n/2)$ & 4 & 5 & 22 & {\bf 43} & 50 & 92 & 86 & 172 & 160 & 189 & 27 & 74 \\ 
  Beta-prime & 3 & 3 & 21 & 44 & 51 & 96 & 94 & 183 & 123 & 142 & {\bf 26} & 78 \\ 
  ZS adapted & 3 & 4 & 21 & {\bf 43} & 51 & 96 & 94 & 183 & 121 & 144 & {\bf 26} & 77 \\ 
  Benchmark & 4 & 7 & 21 & 44 & 49 & 89 & 73 & {\bf 158} & 169 & 195 & {\bf 26} & 75 \\ 
  Robust & 4 & 5 & 23 & 44 & 52 & 91 & 90 & 165 & 252 & 292 & 193 & 139 \\ 
  Intrinsic & 4 & 6 & 23 & 44 & 52 & 91 & 90 & 165 & 239 & 284 & 143 & 90 \\ 
  Hyper-$g/n$ & 3 & 4 & 21 & {\bf 43} & {\bf 48} & {\bf 88} & {\bf 72} & {\bf 158} & 197 & 226 & 441 & 326 \\ 
  DBF, $g=n$ & 3 & 3 & {\bf 20} & 47 & 54 & 117 & 113 & 244 & {\bf 42} & {\bf 65} & 27 & 82 \\ 
  TBF, $g=n$ & 3 & 3 & {\bf 20} & 47 & 54 & 117 & 113 & 245 & {\bf 42} & {\bf 65} & 27 & 83 \\ 
  Jeffreys & 2 & 3 & 22 & 45 & 50 & 89 & 74 & 159 & 212 & 231 & 444 & 387 \\ 
  Hyper-$g$ & 2 & 3 & 22 & 45 & 51 & 90 & 76 & 160 & 219 & 233 & 451 & 396 \\ 
  Uniform & 2 & 2 & 22 & 46 & 52 & 91 & 78 & 161 & 230 & 236 & 459 & 411 \\ 
  Local EB & {\bf 1} & {\bf 1} & 22 & 45 & 50 & 89 & 74 & {\bf 158} & 245 & 236 & 608 & 434 \\ 
  AIC & 8 & 15 & 29 & 51 & 59 & 93 & 103 & {\bf 158} & 287 & 353 & 39 & {\bf 71} \\ 
  BIC & 3 & 3 & 21 & 47 & 55 & 117 & 113 & 245 & {\bf 42} & {\bf 65} & 27 & 82 \\ 
     \hline
\end{tabular}
\end{table}

Estimation and prediction are often more important than identifying
the true model, particularly for large $p$. To evaluate the
performance for parameter estimation, 
we report SSE$(\boldsymbol\beta) = \sum_{j=0}^p (\tilde{\beta}_j - \beta_{j, \MT}^*)^2$
in Table \ref{tb:selection_logistic_bma} where
$\tilde{\beta}_j$ represents the posterior mean estimates under BMA 
(here $\beta_0$ corresponds to the intercept $\alpha$);
while for AIC and BIC, this is the MLE under the selected model.
An overall trend is that  
the methods perform better in model selection generally yield smaller
estimation errors. One exception is the $g=O(1)$ priors and the local EB, which have
small SSE under the null despite their poor model selection
performance.  

We also examined the out-of-sample classification error for logistic
regression which revealed almost no difference across methods.

%%% set inter-column space back
\setlength{\tabcolsep}{4pt}

%%%%%%%%%%%%%%%%%%%%%%%%%%%%%%%%%%%

\subsection{GUSTO-I Study}\label{subsection:gusto-i}

We use a publicly available subset of the GUSTO-I data\footnote{This
  dataset is available on the book website
  \href{http://www.clinicalpredictionmodels.org}{http://www.clinicalpredictionmodels.org}}
\citep{Steyerberg_2009, Held_etal_2015}, containing $n=2188$ patients 
to illustrate the methodology for predicting a
binary endpoint of 30 day survival for myocardial infarction. We use
the same $p=17$ predictors as in \citet{Held_etal_2015}, labeled in
the same order. 
 
%%%%%% color heatmaps %%%%%%
\begin{figure}
\begin{center}
\includegraphics[width = 0.8\textwidth]{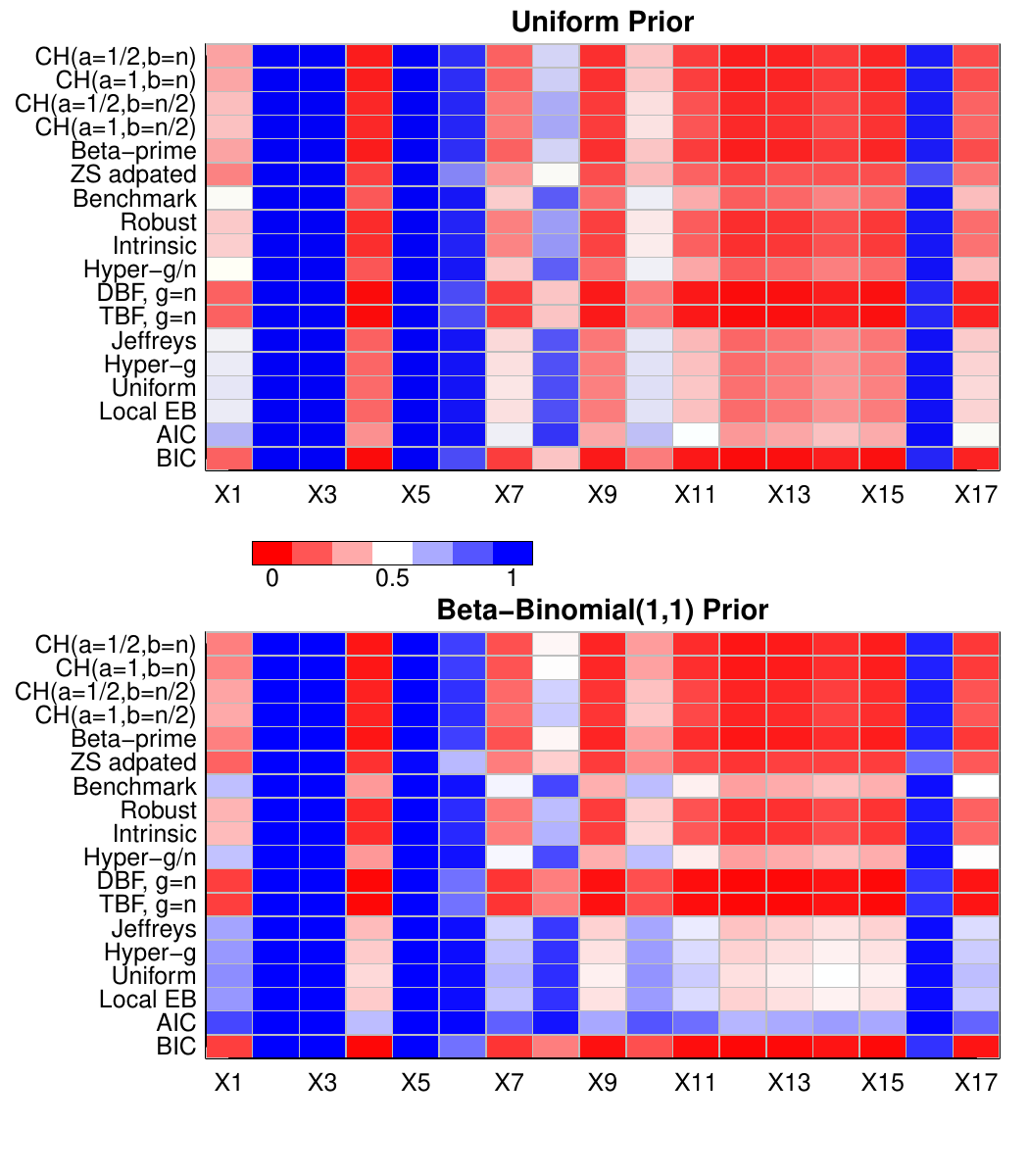}
\caption{Marginal posterior inclusion probabilities for the GUSTO-I
  data. The colors are related to the magnitude of the inclusion probability
with darkest blue corresponding to one and red to zero, while 0.5 is shown as white.}\label{figure:incprob_GUSTO-I}  
\end{center}
\end{figure}

%%%%%% black and white heatmaps %%%%%%
%\begin{figure}
%\begin{center}
%\includegraphics[width = 0.8\textwidth]{}
%\caption{Marginal posterior inclusion probabilities for the GUSTO-I
%  data. The colors are related to the magnitude of the inclusion probability
%with black corresponding to one and while to zero.}\label{figure:incprob_GUSTO-I}  
%\end{center}
%\end{figure}

Figure \ref{figure:incprob_GUSTO-I} illustrates heatmaps of the
marginal posterior inclusion probabilities (pip) for each of the 17
predictors under enumeration of all $2^{17}$ possible models in the
model space using a range of priors on $g$ and the uniform and
Beta-Binomial(1,1) prior distributions on the model space. For AIC and
BIC we use $\exp(-\text{AIC}/2)$ and $\exp(-\text{BIC}/2)$,
respectively, as in \citet{Burnham:Anderson:2004,Raft:1996} to approximate
posterior model probabilities.  
%The colors depicted in the
%heat-maps are related to the magnitude of the inclusion probability
%with darkest blue corresponding to one (inclusion) and red to zero,
%indicating exclusion, while 0.5 is shown as white.

Figure \ref{figure:incprob_GUSTO-I} shows that the predictors $X_2,
X_3, X_5, X_6, X_{16}$ have high inclusion probabilities under all methods, 
reinforcing the findings in \citet{Held_etal_2015}.
Comparison across different methods reveals the same trend as
supported by theory 
and in the simulation studies: the $g=O(n)$ group and BIC
lead to sparser models than the $g=O(1)$ group, local EB, 
and AIC. Within the $g=O(n)$ group, 
the unit information prior, under either DBF or TBF, 
yields the most parsimonious model, while 
the benchmark and hyper-$g/n$ priors tend to select more predictors,
leading to results that are more similar to the $g=O(1)$ group.
As with the simulation study, the Beta-Binomial$(1,1)$  does not
automatically favor sparser models where inclusion probabilities are
higher  for a number of variables even in the $g=O(n)$ group compared
to the uniform prior.

To explore out-of-sample predictive performance, 
we use bootstrap cross-validation \citep{Fu_etal_2005}
to evaluate predictions under BMA.
For each of the 1000 bootstrap datasets, 
it is obtained via sampling with replacement,
with the same sample size $n=2188$. 
We fit the models on the bootstrap samples, 
 and then study prediction using the left out samples,
whose sample size is about one-third of $n$. 
As in \citet{Held_etal_2015}, we
summarize performance using the area under ROC curve (AUC),
calibration slope (CS), and logarithmic score (LS), and also include
the Brier score, i.e., the average squared difference between
$\hat{\mu}$ and $Y$.  Among these measurements, AUC and CS closer to
one indicate better discrimination and calibration, respectively, while
smaller LS suggests better discrimination and calibration, and
smaller Brier score indicates more accurate predictions.  Table \ref{tb:GUSTO-I_prediction} shows that overall the
methods perform similarly, with methods that prefer denser models in
selection, such as the benchmark, hyper-$g/n$, hyper-$g$, local EB,
and AIC, slightly outperforming the others. In particular, the uniform prior on $u$
(a special case  of the hyper-$g$ prior)
yields the most accurate prediction under all four summaries. 
Over the model space, the uniform prior slightly outperforms the
Beta-Binomial$(1,1)$, in terms of AUC, CS, and LS.

\begin{table}[ht]
\centering
\caption{Prediction accuracy for the GUSTO-I data, aggregated from 1000 bootstrap cross validation sets.
Bold font marks the largest AUC, the CS closest to one,
and the smallest LS and Brier score. }\label{tb:GUSTO-I_prediction}
\begin{tabular}{| l | cc | cc | cc | cc |}
  \hline
 &  \multicolumn{2}{c|}{AUC}  & \multicolumn{2}{c|}{CS} & \multicolumn{2}{c|}{LS} & \multicolumn{2}{c|}{Brier} \\ 
  \hline
$p(\MM)$		&  Unif	&	BB$(1,1)$  &  Unif	&	BB$(1,1)$&  Unif	&	BB$(1,1)$&  Unif	&	BB$(1,1)$	\\
  \hline
  CH$(a=1/2,b=n)$ & 0.8346 & 0.8338 & 0.9055 & 0.9065 & 0.1848 & 0.1851 & 0.0497 & 0.0497 \\ 
  CH$(a=1,b=n)$ & 0.8347 & 0.8339 & 0.9054 & 0.9063 & 0.1848 & 0.1851 & 0.0497 & 0.0497 \\ 
  CH$(a=1/2,b=n/2)$ & 0.8349 & 0.8343 & 0.9054 & 0.9049 & 0.1846 & 0.1849 & 0.0496 & 0.0497 \\ 
  CH$(a=1,b=n/2)$ & 0.8349 & 0.8343 & 0.9054 & 0.9048 & 0.1846 & 0.1849 & 0.0496 & 0.0497 \\ 
  Beta-prime & 0.8346 & 0.8338 & 0.9055 & 0.9065 & 0.1848 & 0.1851 & 0.0497 & 0.0497 \\ 
  ZS adapted & 0.8345 & 0.8329 & 0.9338 & 0.9382 & 0.1846 & 0.1854 & 0.0496 & 0.0498 \\ 
  Benchmark & {\bf 0.8352} & {\bf 0.8347} & 0.9292 & 0.9251 & 0.1841 & 0.1842 & {\bf 0.0495} & {\bf 0.0495}  \\ 
  Robust & 0.8349 & 0.8344 & 0.9012 & 0.8998 & 0.1847 & 0.1849 & 0.0496 & 0.0497 \\ 
  Intrinsic & 0.8350 & 0.8344 & 0.9010 & 0.8993 & 0.1846 & 0.1849 & 0.0496 & 0.0497 \\ 
  Hyper-$g/n$ & {\bf 0.8352} & 0.8346 & 0.9287 & 0.9265 & 0.1841 & 0.1842 & {\bf 0.0495}  & {\bf 0.0495}  \\ 
  DBF, $g=n$ & 0.8338 & 0.8325 & 0.9100 & 0.9126 & 0.1852 & 0.1857 & 0.0498 & 0.0499 \\ 
  TBF, $g=n$ & 0.8338 & 0.8325 & 0.9101 & 0.9126 & 0.1852 &  0.1857 & 0.0498 & 0.0499 \\ 
  Jeffreys & {\bf 0.8352} & 0.8346 & 0.9392 & 0.9373 & 0.1840 & 0.1841 & {\bf 0.0495}  & {\bf 0.0495}  \\ 
  Hyper-$g$ & {\bf 0.8352} & 0.8346 & 0.9446 & 0.9429 & {\bf 0.1839} & {\bf 0.1840} & {\bf 0.0495}  & {\bf 0.0495}  \\ 
  Uniform & {\bf 0.8352} & 0.8346 & {\bf 0.9502} & {\bf 0.9485} & {\bf 0.1839} & {\bf 0.1840} & {\bf 0.0495}  & {\bf 0.0495}  \\ 
  Local EB & {\bf 0.8352} & 0.8346 & 0.9391 & 0.9373 & 0.1840 & 0.1841 & {\bf 0.0495}  & {\bf 0.0495}  \\ 
  AIC & 0.8351 & 0.8344 & 0.8813 & 0.8645 & 0.1846 & 0.1850 & {\bf 0.0495}  & 0.0496 \\ 
  BIC & 0.8338 & 0.8325 & 0.9096 & 0.9122 & 0.1852 &  0.1857 & 0.0498 & 0.0499 \\ 
   \hline
\end{tabular}
\end{table}

One potential explanation for the better performance 
of the $g = O(1)$ and the local EB
is that shrinkage is better calibrated to the data
by avoiding over-fitting \citep{Copas_1983}.
As the shrinkage factor $g/(1+g)$ increases with $g$, 
the $g = O(1)$ priors and the local EB tend to impose 
stronger shrinkage than the $g=O(n)$ priors. 
For the GUSTIO-I dataset, 
the BMA posterior estimate of $g$ 
is $14.7$ for the uniform prior on $u$, $16.5$ for hyper-$g$,
$18.4$ for local EB, $24.0$ for benchmark, $25.4$ for hyper-$g/n$, 
$50.0$ for ZS adapted, 
$286.5$ for intrinsic, $298.1$ for CH$(a=1, b = n, s = 0)$, 
$319.6$ for Beta-prime, and $321.2$ for robust prior\footnote{
For all special cases of the CHIC $g$-prior, 
the posterior estimates of $g$ are 
converted from the approximate conditional 
posterior means of $u = 1/(1+g)$, which have closed form expressions.
These estimates of $g$ are computed under the uniform prior 
on models $p(\MM) = 1/2^p$.}. 
Comparing these estimates with the data likelihood of $g$
marginalized over the model space
$p(\Y \mid g) = \sum_{\MM} p(\Y \mid \MM, g) p(\MM \mid g)$, 
we find that estimates of $g$ from the $g = O(1)$ priors, local EB, 
benchmark, and hyper-$g/n$ priors are closer to the peak $g \approx 20$ 
of the marginal likelihood (see Figure \ref{figure:marlik_g_GUSTO-I}).
On the other hand, as noted by \citet{Ley_Steel_2012}, 
the robust and intrinsic priors,
which truncate the range of $g$ above $(n-p_\MM)/(p_\MM + 1)\geq 120.6$
and $n/(p_\MM+1)\geq 121.6$, respectively, may not be well supported by the data,
when $n$ is large and $p$ is small like the GUSTO-I data. 

\begin{figure}
\begin{center}
\includegraphics[width = 0.6\textwidth]{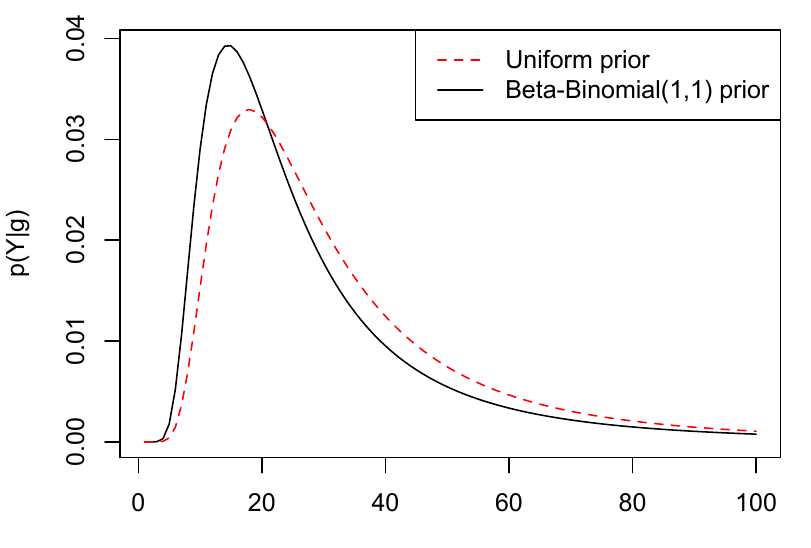}
\caption{Marginal likelihood of $g$ for the GUSTO-I data ($n = 2188$ and $p = 17$).}\label{figure:marlik_g_GUSTO-I}
\end{center}
\end{figure}

% \input{disc.tex}
%%%%%%%%%%%%%%%%%%%%%%%%%%%%%%%%%%%%%%%%%%%%%%%%%%
\section{Conclusion}\label{section:conclusion}
%%%%%%%%%%%%%%%%%%%%%%%%%%%%%%%%%%%%%%%%%%%%%%%%%%

The analytic tractability of asymptotic CHIC posterior distributions
allows insight into their theoretical properties and fast computation,
serving as a robust conventional prior for most instances.
%% short summary: strength
The CHIC family encompasses
the majority of mixtures of $g$-priors used in practice.
Under a wide range of hyper parameter choices, CHIC $g$-priors
%its special cases, the CH-$g$ prior and the Robust prior, satisfy
satisfy various desiderata proposed by \citet{Bayarri_etal_2012} for
BVS. Based on both theoretical and empirical studies, we recommend
priors with the choice $g = O(n)$, such as the CH prior with
$b = O(n)$ or $s=O(n)$, hyper-$g/n$, Beta-prime, ZS adapted, benchmark, robust,
intrinsic, and unit information priors. For prediction, all methods
yield similar accuracy and are asymptotically consistent, with the
local EB, hyper-$g$, benchmark, and hyper-$g/n$ priors which favor
larger models slightly outperforming the rest of the $g = O(n)$ group.
Because
model selection and prediction are two unaligned goals with different
objective functions \citep{Copas_1983}, it is not surprising that no
single prior overwhelmingly outperform others for both goals.  Similar
to the findings of \citet{Ley_Steel_2012} in linear models, we also
recommend the benchmark and hyper-$g/n$ priors for general
practitioners, due to their balanced performance in selection and
prediction.

%% computation: stochastic search of model space
The approximate marginal likelihoods under the CHIC $g$-prior require
only simple summaries from GLMs, hence the CHIC $g$-prior has the same
computational complexity as model fitting for GLMs, leading to
efficient algorithms for variable selection and model averaging under
enumeration.  As $p$ increases (e.g., larger than 35) and enumerating
the entire model space becomes impractical, stochastic search
algorithms (see \citet{Clyde_etal_2011,
  Garcia-Donato_Martinez-Beneito_2013} and the references therein) can
be employed, while avoiding computationally expensive model search
alternatives such as the reversible jump MCMC \citep{Green_1995}, as
Bayes factors can be computed directly without sampling the model
specific parameters.  All of the methods used in the examples and
simulation studies within this article are implemented in the R
package {\tt BAS} \citep{Clyde_2016} available on CRAN.  Additional
refinements for calculating marginal likelihoods for GLMs with
canonical links can be obtained using a correction factor based on a
sixth-order Laplace approximation at little increase in computational
cost \citep{Raudenbush_etal_2000,Bove_Held_2011}.

Propositions \ref{proposition:non_full_rank_BF} and
\ref{proposition:normal_marlik} establish  that
$g$-priors are well defined in the case of non-full rank designs
(including the case $p_\MM > n$). Furthermore under normality,  Proposition
\ref{proposition:normal_marlik} shows that Bayes factors  for models
with $p_\MM > n$ compared to the null are equal to one. In these cases
the prior on the model space plays a critical role in model
averaging as well as stochastic search algorithms.  Prior
distributions, such as the sparsity priors of
\cite{yang:wainwright:jordan:2016} or truncated Poisson distributions,
that place zero prior probability on models with $p_\MM > n$, are
potentially useful in controlling the model size for models that are
not identifiable from the data.  However,
as MCMC algorithms may explore predominantly lower dimensional models
that are full rank, $g$-priors and mixtures may have a computational
advantage over the independent ``spike and slab'' priors, as
inversion of the full $p$ dimensional matrices can be avoided under
the $g$-priors.

Proposition \ref{proposition:normal_marlik} can be used to extend the
results of \citet{SabanesBove_etal_2015} who adopted the hyper-$g/n$
prior to generalized additive models using splines in exponential
families after marginalizing over the coefficients for the spline
basis.  This leads to a general linear model with a $\W$ that depends
on unknown variance components.  Rather than placing a prior directly on
the variance components, they exploited the one-to-one correspondence
between the degrees of freedom $d$ and the variance components and
considered a discrete set of values for  $d$ with the objective of
learning models indexed by $d$.  Proposition
\ref{proposition:normal_marlik} also suggests new directions for
BVS/BMA of linear predictors in spatial-temporal models or other
covariance structures that depend on a low dimensional number of
parameters.

%%%%%%%%%%%%%%%%%%%%%%%%%%%%%%%%%%%%%%%%%%%%%%%%%%%%%
\section*{Supplementary Materials}
\begin{description}[leftmargin=*, itemsep=0em]
\item[Appendix \ref{appendix:proofs}:] a list of assumptions, all the proofs, and some additional theoretical results.
\item[Appendix \ref{section:TBF}:] discussion and an empirical example on the test-based Bayes factor.
\item[Appendix \ref{section:poisson}:] a Poisson regression simulation example, and additional results on the logistic regression simulation example.
\item[CHIC\_examples.zip:] R scripts for producing the simulation and real data results.

\end{description}

%{\bf R package}: R package  ``gglm'' for implementation of all methods used in the paper, including the CH-$g$, the Beta-prime,
%and the Robust priors. (gglm\_1.0.tar.gz) \\
 
 %%%%%%%%%%%%%%%%%%%%%%%%%%%%%%%%%%%%%%%%%%%%%%%%%%%%%%%%
\blind{}{
\section*{Acknowledgement}
The authors thank the Editor, Associate Editor and
the two reviewers for suggestions that led to a greatly improved paper.
The authors also thank Drs.~James O. Berger, D. Andrew Brown, Michael
B. Gordy, and Yuzo Maruyama for helpful
discussions.   This material is based on work supported, in part, by
the National Science Foundation under grant DMS--11060891 and the
National Institute of Health  grant 1--R21--ES020796-01. Any opinions, findings, and conclusions or recommendations expressed in this material 
are those of the author(s) and do not necessarily reflect the views of
the NSF or NIH.
}

%%%%%%%%%%%%%%%%%%%%%%%%%%%%%%%%%%%%%%%%%%%%%%%%%%%%%%%%
{\small
\singlespacing
\bibliographystyle{asa}
%\bibliographystyle{agsm}
%\bibliography{/Users/Yingbo/Dropbox/papers_and_books/MyRef.bib}
\bibliography{GLMchg}

\begin{thebibliography}{90}
\newcommand{\enquote}[1]{``#1''}
\expandafter\ifx\csname natexlab\endcsname\relax\def\natexlab#1{#1}\fi

\bibitem[{Abramowitz and Stegun(1970)}]{Abramowitz_Stegun_1970}
Abramowitz, M. and Stegun, I. (1970), \textit{Handbook of Mathematical
  Functions - with Formulas, Graphs, and Mathematical Tables}, New York: Dover
  publications.

\bibitem[{Albert and Anderson(1984)}]{Albert_Anderson_1984}
Albert, A. and Anderson, J.~A. (1984), \enquote{On the Existence of Maximum
  Likelihood Estimates in Logistic Regression Models,} \textit{Biometrika}, 71,
  1--10.

\bibitem[{Appell(1925)}]{appell_1925}
Appell, P. (1925), \enquote{Sur les Fonctions Hyperg\'{e}om\'{e}triques de
  Plusieurs Variables,} \textit{In M\'{e}moir. Sci. Math. Paris:
  Gauthier-Villars}.

\bibitem[{Bartlett(1957)}]{Bartlett_1957}
Bartlett, M.~S. (1957), \enquote{A Comment on D. V. Lindley's Statistical
  Paradox,} \textit{Biometrika}, 44, 533--534.

\bibitem[{Bayarri et~al.(2012)Bayarri, Berger, Forte, and
  Garc{\'\i}a-Donato}]{Bayarri_etal_2012}
Bayarri, M.~J., Berger, J.~O., Forte, A., and Garc{\'\i}a-Donato, G. (2012),
  \enquote{Criteria for {B}ayesian Model Choice with Application to Variable
  Selection,} \textit{The Annals of Statistics}, 40, 1550--1577.

\bibitem[{Bedrick et~al.(1996)Bedrick, Christensen, and
  Johnson}]{Bedrick_etal_1996}
Bedrick, E.~J., Christensen, R., and Johnson, W. (1996), \enquote{A New
  Perspective of Priors for Generalized Linear Models,} \textit{Journal of the
  American Statistical Association}, 91, 1450--1460.

\bibitem[{Berger and Pericchi(1996)}]{Berger_Pericchi_1996}
Berger, J.~O. and Pericchi, L.~R. (1996), \enquote{The Intrinsic Bayes Factor
  for Model Selection and Prediction,} \textit{Journal of the American
  Statistical Association}, 91, 109--122.

\bibitem[{Berger and Pericchi(2001)}]{Berger_Pericchi_2001}
--- (2001), \enquote{Objective {B}ayesian Methods for Model Selection:
  Introduction and Comparison,} \textit{Lecture Notes-Monograph Series}, 38,
  135--207.

\bibitem[{Bernardo and Smith(2000)}]{Bernardo_Smith_2000}
Bernardo, J.~M. and Smith, A.~F. (2000), \textit{Bayesian Theory}, Wiley.

\bibitem[{Burnham and Anderson(2004)}]{Burnham:Anderson:2004}
Burnham, K.~P. and Anderson, D.~R. (2004), \enquote{Multimodel Inference,}
  \textit{Sociological Methods \& Research}, 33, 261--304.

\bibitem[{Casella and Moreno(2006)}]{Casella_Moreno_2006}
Casella, G. and Moreno, E. (2006), \enquote{Objective {B}ayesian Variable
  Selection,} \textit{Journal of the American Statistical Association}, 101,
  157--167.

\bibitem[{Celeux et~al.(2012)Celeux, Anbari, Marin, and
  Robert}]{Celeux_etal_2012}
Celeux, G., Anbari, M.~E., Marin, J.-M., and Robert, C. (2012),
  \enquote{Regularization in Regression: Comparing {B}ayesian and Frequentist
  Methods in a Poorly Informative Situation,} \textit{Bayesian Analysis}, 7,
  477--502.

\bibitem[{Chen and Ibrahim(2003)}]{Chen_Ibrahim_2003}
Chen, M.-H. and Ibrahim, J.~G. (2003), \enquote{Conjugate Priors for Genealized
  Linear Models,} \textit{Statistics Sinica}, 13, 461--476.

\bibitem[{Christensen(2011)}]{Christensen:2011}
Christensen, R. (2011), \textit{Plane Answers to Complex Questions},
  Springer-Verlag.

\bibitem[{Clyde(2018)}]{Clyde_2016}
Clyde, M.~A. (2018), \textit{{BAS}: {B}ayesian Variable Selection and Model
  Averaging using {B}ayesian Adaptive Sampling}, {R} package version 1.4.9.
  {CRAN} {C}omprehensive {R} {A}rchive {N}etwork, DOI: 10.5281/zenodo.1212636 .

\bibitem[{Clyde et~al.(2011)Clyde, Ghosh, and Littman}]{Clyde_etal_2011}
Clyde, M.~A., Ghosh, J., and Littman, M.~L. (2011), \enquote{Bayesian Adaptive
  Sampling for Variable Selection and Model Averaging,} \textit{Journal of
  Computational and Graphical Statistics}, 20, 80--101.

\bibitem[{Copas(1983)}]{Copas_1983}
Copas, J.~B. (1983), \enquote{Regression, Prediction and Shrinkage,}
  \textit{Journal of the Royal Statistical Society: Series B (Statistical
  Methodology)}, 45, 311--354.

\bibitem[{Copas(1997)}]{Copas_1997}
--- (1997), \enquote{Using Regression Models for Prediction: Shrinkage and
  Regression to the Mean,} \textit{Statistical Methods in Medical Research}, 6.

\bibitem[{Cox and Reid(1987)}]{Cox_Reid_1987}
Cox, D.~R. and Reid, N. (1987), \enquote{Parameter Orthogonality and
  Approximate Conditional Inference,} \textit{Journal of the Royal Statistical
  Society: Series B (Statistical Methodology)}, 49, 1--39.

\bibitem[{Cui and George(2008)}]{Cui_George_2008}
Cui, W. and George, E.~I. (2008), \enquote{Empirical {B}ayes vs. Fully {B}ayes
  Variable Selection,} \textit{Journal of Staticial Planning and Inference},
  138, 888--900.

\bibitem[{Davidson and Lever(1970)}]{Davidson_Lever_1970}
Davidson, R.~R. and Lever, W.~E. (1970), \enquote{The Limiting Distribution of
  the Likelihood Ratio Statistic under a Class of Local Alternatives,}
  \textit{Sankhy\={a}: The Indian Journal of Statistics. Series A}, 32,
  209--224.

\bibitem[{Efron(1986)}]{Efron_1986}
Efron, B. (1986), \enquote{Double Exponential Families and Their Use in
  Generalized Linear Regression,} \textit{Journal of the American Statistical
  Association}, 81, 709--721.

\bibitem[{Efron and Hinkley(1978)}]{Efron_Hinkley_1978}
Efron, B. and Hinkley, D.~V. (1978), \enquote{Assessing the Accuracy of the
  Maximum Likelihood Estimator: Observed Versus Expected Fisher Information,}
  \textit{Biometrika}, 65, 457--482.

\bibitem[{Fahrmeir and Kaufmann(1985)}]{Fahrmeir_Kaufmann_1985}
Fahrmeir, L. and Kaufmann, H. (1985), \enquote{Consistency and Asymptotic
  Normality of the Maximum Likelihood Estimator in Generalized Linear Models,}
  \textit{The Annals of Statistics}, 13, 342--368.

\bibitem[{Feldkircher(2012)}]{Feldkircher_2012}
Feldkircher, M. (2012), \enquote{Forecast Combination and Bayesian Model
  Averaging: A Prior Sensitivity Analysis,} \textit{Journal of Forecasting},
  31, 361--376.

\bibitem[{Feldkircher and Zeugner(2009)}]{Feldkircher_Zeugner_2009}
Feldkircher, M. and Zeugner, S. (2009), \enquote{Benchmark Priors Revisited: On
  Adaptive Shrinkage and the Supermodel Effect in {B}ayesian Model Averaging,}
  \textit{International Monetary Fund}.

\bibitem[{Fern{\'a}ndez et~al.(2001)Fern{\'a}ndez, Ley, and
  Steel}]{Fernandez_etal_2001}
Fern{\'a}ndez, C., Ley, E., and Steel, M.~F. (2001), \enquote{Benchmark Priors
  for {B}ayesian Model Averaging,} \textit{Journal of Econometrics}, 100,
  381--427.

\bibitem[{{Forte} et~al.(2016){Forte}, {Garcia-Donato}, and
  {Steel}}]{Forte_et_al_2016}
{Forte}, A., {Garcia-Donato}, G., and {Steel}, M. (2016), \enquote{{Methods and
  Tools for Bayesian Variable Selection and Model Averaging in Univariate
  Linear Regression},} \textit{ArXiv e-prints}.

\bibitem[{Foster and George(1994)}]{Foster_George_1994}
Foster, D.~P. and George, E.~I. (1994), \enquote{The Risk Inflation Criterion
  for Multiple Regression,} \textit{The Annals of Statistics}, 22, 1947--1975.

\bibitem[{Fouskakis and Ntzoufras(2013)}]{Fouskakis_Ntzoufras_2013}
Fouskakis, D. and Ntzoufras, I. (2013), \enquote{Power-Conditional-Expected
  Priors: Using g-Priors with Random Imaginary Data for Variable Selection,}
  \textit{arxiv.org}.

\bibitem[{Fouskakis et~al.(2009)Fouskakis, Ntzoufras, and
  Draper}]{Fouskakis_etal_2009}
Fouskakis, D., Ntzoufras, I., and Draper, D. (2009), \enquote{Bayesian Variable
  Selection Using Cost-Adjusted BIC, with Application to Cost-Effective
  Measurement of Quality of Health Care,} \textit{The Annals of Applied
  Statistics}, 3, 663--690.

\bibitem[{Fouskakis et~al.(2016)Fouskakis, Ntzoufras, and
  Perrakis}]{Fouskakis_etal_2016}
Fouskakis, D., Ntzoufras, I., and Perrakis, K. (2016),
  \enquote{Power-Expected-Posterior Priors for Generalized Linear Models,}
  \textit{working paper}.

\bibitem[{Fu et~al.(2005)Fu, Carroll, and Wang}]{Fu_etal_2005}
Fu, W.~J., Carroll, R.~J., and Wang, S. (2005), \enquote{Estimating
  Misclassification Error with Small Samples via Bootstrap Cross-Validation,}
  \textit{Bioinformatics}, 21, 1979--1986.

\bibitem[{Garc{\'\i}a-Donato and
  Mart{\'\i}nez-Beneito(2013)}]{Garcia-Donato_Martinez-Beneito_2013}
Garc{\'\i}a-Donato, G. and Mart{\'\i}nez-Beneito, M.~A. (2013), \enquote{On
  Sampling Strategies in {B}ayesian Variable Selection Problems with Large
  Model Spaces,} \textit{Journal of the American Statistical Association}, 108,
  340--352.

\bibitem[{George and Foster(2000)}]{George_Foster_2000}
George, E.~I. and Foster, D.~P. (2000), \enquote{Calibration and Empirical
  Bayes Variable Selection,} \textit{Biometrika}, 87, 731--747.

\bibitem[{Ghosh et~al.(2017)Ghosh, Li, and Mitra}]{Ghosh_Li_Mitra_2017}
Ghosh, J., Li, Y., and Mitra, R. (2017), \enquote{On the Use of Cauchy Prior
  Distributions for Bayesian Logistic Regression,} \textit{Bayesian Analysis},
  advance publication.

\bibitem[{Gordy(1998{\natexlab{a}})}]{Gordy_1998b}
Gordy, M.~B. (1998{\natexlab{a}}), \enquote{Computatinally Convenient
  Distributional Assumptions for Common Value Acutions,} \textit{Computational
  Economics}, 12, 61--78.

\bibitem[{Gordy(1998{\natexlab{b}})}]{Gordy_1998a}
--- (1998{\natexlab{b}}), \textit{A Generalization of Generalized {B}eta
  Distribution}, Division of Research and Statistics, Division of Monetary
  Affairs, Federal Reserve Board.

\bibitem[{Gradshteyn and Ryzhik(2007)}]{Gradshteyn_Ryzhik_2007}
Gradshteyn, I.~S. and Ryzhik, I.~M. (2007), \textit{Table of Integrals, Series,
  and Products}, Academic Press, 7th ed.

\bibitem[{Green(1995)}]{Green_1995}
Green, P.~J. (1995), \enquote{Reversible Jump {M}arkov Chain {M}onte {C}arlo
  Computation and {B}ayesian Model Determination,} \textit{Biometrika}, 82,
  711--732.

\bibitem[{Gupta and Ibrahim(2009)}]{Gupta_Ibrahim_2009}
Gupta, M. and Ibrahim, J.~G. (2009), \enquote{An Information Matrix Prior for
  Bayesian Analysis in Generalized Linear Models with High Dimensional Data,}
  \textit{Statistics Sinica}, 19, 1641--1663.

\bibitem[{Hansen and Yu(2001)}]{Hansen_Yu_2001}
Hansen, M.~H. and Yu, B. (2001), \enquote{Model Selection and the Principle of
  Minimum Description Length,} \textit{Journal of the American Statistical
  Association}, 96, 746--774.

\bibitem[{Hansen and Yu(2003)}]{Hansen_Yu_2003}
--- (2003), \enquote{Minimum Description Length Model Selection Criteria for
  Generalized Linear Models,} \textit{Lecture Notes-Monograph Series},
  145--163.

\bibitem[{Hanson et~al.(2014)Hanson, Branscum, and Johnson}]{Hanson_etal_2014}
Hanson, T.~E., Branscum, A.~J., and Johnson, W.~O. (2014), \enquote{Informative
  g-Priors for Logistic Regression,} \textit{Bayesian Analysis}, 9, 597--612.

\bibitem[{Heinze and Schemper(2002)}]{Heinze_Schemper_2002}
Heinze, G. and Schemper, M. (2002), \enquote{A Solution to the Problem of
  Separation in Logistic Regression,} \textit{Statistics in Medicine}, 21,
  2409--2419.

\bibitem[{Held et~al.(2016)Held, Gravestock, and
  Saban{\'e}s~Bov{\'e}}]{Held_etal_2016}
Held, L., Gravestock, I., and Saban{\'e}s~Bov{\'e}, D. (2016),
  \enquote{Objective {B}ayesian Model Selection for {C}ox Regression,}
  \textit{Statistics in Medicine}.

\bibitem[{Held et~al.(2015)Held, Saban{\'e}s~Bov{\'e}, and
  Gravestock}]{Held_etal_2015}
Held, L., Saban{\'e}s~Bov{\'e}, D., and Gravestock, I. (2015),
  \enquote{Approximate {B}ayesian Model Selection with the Deviance Statistic,}
  \textit{Statistical Science}, 30, 242--257.

\bibitem[{Hu and Johnson(2009)}]{Hu_Johnson_2009}
Hu, J. and Johnson, V.~E. (2009), \enquote{{B}ayesian Model Selection Using
  Test Statistics,} \textit{Journal of the Royal Statistical Society: Series B
  (Statistical Methodology)}, 71, 143--158.

\bibitem[{Humbert(1920)}]{Humbert_1920}
Humbert, P. (1920), \enquote{Some extensions of {P}incherle's polynomials,}
  \textit{Proceedings of the Edinburgh Mathematical Society}, 39.

\bibitem[{Jeffreys(1961)}]{Jeffreys_1961}
Jeffreys, H. (1961), \textit{Theory of Probability}, Oxford Univ. Press.

\bibitem[{Johnson(2005)}]{Johnson_2005}
Johnson, V.~E. (2005), \enquote{{B}ayes Factors Based on Test Statistics,}
  \textit{Journal of the Royal Statistical Society: Series B (Statistical
  Methodology)}, 67, 689--701.

\bibitem[{Johnson(2008)}]{Johnson_2008}
--- (2008), \enquote{Properties of {B}ayes Factors Based on Test Statistics,}
  \textit{Scandinavian Journal of Statistics}, 35, 354--368.

\bibitem[{Kass and Raftery(1995)}]{Kass_Raftery_1995}
Kass, R.~E. and Raftery, A.~E. (1995), \enquote{Bayes Factors,} \textit{Journal
  of the American Statistical Association}, 90, 773--795.

\bibitem[{Kass and Vaidyanathan(1992)}]{Kass_Vaidyanathan_1992}
Kass, R.~E. and Vaidyanathan, S.~K. (1992), \enquote{Approximate {B}ayes
  Factors and Orthogonal Parameters, with Application to Testing Equality of
  Two {B}inomial Proportions,} \textit{Journal of the Royal Statistical
  Society: Series B (Statistical Methodology)}, 54, 129--144.

\bibitem[{Kass and Wasserman(1995)}]{Kass_Wasserman_1995}
Kass, R.~E. and Wasserman, L. (1995), \enquote{A Reference {B}ayesian Test for
  Nested Hypotheses and Its Relationship to the {S}chwarz Criterion,}
  \textit{Journal of the American Statistical Association}, 90, 928--934.

\bibitem[{Ley and Steel(2009)}]{Ley_Steel_2009}
Ley, E. and Steel, M.~F. (2009), \enquote{On the Effect of Prior Assumptions in
  {B}ayesian Model Averaging with Applications to Growth Regression,}
  \textit{Journal of Applied Econometrics}, 24, 651--674.

\bibitem[{Ley and Steel(2012)}]{Ley_Steel_2012}
--- (2012), \enquote{Mixtures of g-priors for Bayesian Model Averaging with
  Economic Applications,} \textit{Journal of Econometrics}, 171, 251--26.

\bibitem[{Liang et~al.(2008)Liang, Paulo, Molina, Clyde, and
  Berger}]{Liang_etal_2008}
Liang, F., Paulo, R., Molina, G., Clyde, M.~A., and Berger, J.~O. (2008),
  \enquote{Mixtures of g Priors for {B}ayesian Variable Selection,}
  \textit{Journal of the American Statistical Association}, 103, 410--423.

\bibitem[{Lindley(1968)}]{Lindley_1968}
Lindley, D.~V. (1968), \enquote{The Choice of Variables in Multiple
  Regression,} \textit{J. R. Statist. Soc. B}, 30, 31--66.

\bibitem[{Marin and Robert(2007)}]{Marin_Robert_2007}
Marin, J.-M. and Robert, C.~P. (2007), \textit{{B}ayesian Core: A Practical
  Approach to Computational {B}ayesian Statistics}, New York: Springer.

\bibitem[{Maruyama and George(2011)}]{Maruyama_George_2011}
Maruyama, Y. and George, E.~I. (2011), \enquote{Fully {B}ayes Factors with a
  Generalized g-Prior,} \textit{The Annals of Statistics}, 39, 2740--2765.

\bibitem[{McCullagh and Nelder(1989)}]{McCullagh_Nelder_1989}
McCullagh, P. and Nelder, J. (1989), \textit{Generalized Linear Models},
  Chapman and Hall.

\bibitem[{Moreno et~al.(1998)Moreno, Bertolino, and Racugno}]{Moreno_etal_1998}
Moreno, E., Bertolino, F., and Racugno, W. (1998), \enquote{An Intrinsic
  Limiting Procedure for Model Selection and Hypotheses Testing An Intrinsic
  Limiting Procedure for Model Selection and Hypotheses Testing An Intrinsic
  Limiting Procedure for Model Selection and Hypothesis Testing,}
  \textit{Journal of the American Statistical Association}, 93, 1451--1460.

\bibitem[{Olver(1997)}]{Olver_1997}
Olver, F. (1997), \textit{Asymptotics and Special Functions}, A K Peters/CRC
  Press.

\bibitem[{Perrakis et~al.(2015)Perrakis, Fouskakis, and
  Ntzoufras}]{Perrakis_etal_2015}
Perrakis, K., Fouskakis, D., and Ntzoufras, I. (2015), \enquote{Variations of
  the Power-Conditional-Expected-Posterior Prior for {B}ayesian Variable
  Selection in Generalized Linear Models,} \textit{arxiv.org}.

\bibitem[{Raftery(1996)}]{Raft:1996}
Raftery, A.~E. (1996), \enquote{Approximate {B}ayes factors and accounting for
  model uncertainty in generalised linear models,} \textit{Biometrika}, 83,
  251--266.

\bibitem[{Rathbun and Fei(2006)}]{Rathbun_Fei_2006}
Rathbun, S.~L. and Fei, S. (2006), \enquote{A Spatial Zero-Inflated {P}oisson
  Regression Model for Oak Regression,} \textit{Environmental and Ecological
  Statistics}, 13, 409--426.

\bibitem[{Raudenbush et~al.(2000)Raudenbush, Yang, and
  Yosef}]{Raudenbush_etal_2000}
Raudenbush, S.~W., Yang, M.-L., and Yosef, M. (2000), \enquote{Maximum
  Likelihood for Generalized Linear Models with Nested Random Effects via
  High-Order , Multivariate {L}aplace Approximation,} \textit{Journal of
  Computational and Graphical Statistics}, 9, 141--157.

\bibitem[{Resnick(1999)}]{Resnick_1999}
Resnick, S.~I. (1999), \textit{A Probability Path}, Birkh\"{a}user.

\bibitem[{Rue et~al.(2009)Rue, Martino, and Chopin}]{Rue_etal_2009}
Rue, H., Martino, S., and Chopin, N. (2009), \enquote{Approximate {B}ayesian
  Inference for Latent {G}aussian Models by Using Integrated Nested {L}aplace
  Approximations,} \textit{Journal of the Royal Statistical Society: Series B
  (Statistical Methodology)}, 71, 319--392.

\bibitem[{Saban{\'e}s~Bov{\'e} and Held(2011)}]{Bove_Held_2011}
Saban{\'e}s~Bov{\'e}, D. and Held, L. (2011), \enquote{Hyper-g Priors for
  Generalized Linear Models,} \textit{Bayesian Analysis}, 6, 387--410.

\bibitem[{Saban{\'e}s~Bov{\'e} et~al.(2015)Saban{\'e}s~Bov{\'e}, Held, and
  Kauermann}]{SabanesBove_etal_2015}
Saban{\'e}s~Bov{\'e}, D., Held, L., and Kauermann, G. (2015),
  \enquote{Objective Bayesian Model Selection in Generalized Additive Models
  With Penalized Splines,} \textit{Journal of Computational and Graphical
  Statistics}, 24, 394--415.

\bibitem[{Scott and Berger(2010)}]{Scott_Berger_2010}
Scott, J.~G. and Berger, J.~O. (2010), \enquote{Bayes and Empirical-{B}ayes
  Multiplicity Adjustment in the Variable-Selection Problem,} \textit{The
  Annals of Statistics}, 38, 2587--2619.

\bibitem[{Self and Mauritsen(1988)}]{Self_Mauritsen_1988}
Self, S. and Mauritsen, R. (1988), \enquote{Power/Sample Size Calculations for
  Generalized Linear Models,} \textit{Biometrics}, 44, 79--86.

\bibitem[{Self et~al.(1992)Self, Mauritsen, and Ohara}]{Self_etal_1992}
Self, S., Mauritsen, R., and Ohara, J. (1992), \enquote{Power Calculation for
  Likelihood Ratio Tests in Generalized Linear Models,} \textit{Biometrics},
  48, 31--39.

\bibitem[{Shieh(2000)}]{Shieh_2000}
Shieh, G. (2000), \enquote{On Power and Sample Size Calculations for Likelihood
  Ratio Tests in Generalized Linear Models,} \textit{Biometrics}, 56,
  1192--1196.

\bibitem[{Slater(1960)}]{Slater_1960}
Slater, L. (1960), \textit{Confluent Hypergeometric Functions}, Cambridge
  University Press.

\bibitem[{Steyerberg(2009)}]{Steyerberg_2009}
Steyerberg, E.~W. (2009), \textit{Clinical Prediction Models}, Springer.

\bibitem[{Tierney and Kadane(1986)}]{Tierney_Kadane_1986}
Tierney, L. and Kadane, J.~B. (1986), \enquote{Accurate Approximations for
  Posterior Moments and Marginal Densities,} \textit{Journal of the American
  Statistical Association}, 81, 82--86.

\bibitem[{Tierney et~al.(1989)Tierney, Kass, and Kadane}]{Tierney_etal_1989}
Tierney, L., Kass, R.~E., and Kadane, J.~B. (1989), \enquote{Fully Exponential
  Laplace Approximations to Expectations and Variances of Nonpositive
  Functions,} \textit{Journal of the American Statistical Association}, 84,
  710--716.

\bibitem[{van~der Vaart(2000)}]{vanderVaart_2000}
van~der Vaart, A.~W. (2000), \textit{Asymptotic Statistics}, Cambridge
  University Press.

\bibitem[{Wang and George(2007)}]{Wang_George_2007}
Wang, X. and George, E.~I. (2007), \enquote{Adaptive {B}ayesian Criteria in
  Variable Selection for Generalized Linear Models,} \textit{Statistics
  Sinica}, 17, 667--690.

\bibitem[{Weisstein(2009)}]{Weisstein_2009}
Weisstein, E.~W. (2009), \enquote{Appell Hypergeometric Function,} \textit{From
  MathWorld--A Wolfram Web Resource. Avaiable at
  http://mathworld.wolfram.com/AppellHypergeometricFunction.html}.

\bibitem[{West(1985)}]{West_1985}
West, M. (1985), \enquote{Generalized Linear Models: Scale Parameters, Outlier
  Accommodation and Prior Distributions,} \textit{Bayesian Statistics 2},
  531--558.

\bibitem[{Womack et~al.(2014)Womack, Le{\'o}n-Novelo, and
  Casella}]{Womack_etal_2014}
Womack, A.~J., Le{\'o}n-Novelo, L., and Casella, G. (2014), \enquote{Inference
  from Intrinsic {B}ayes' Procedures under Model Selection and Uncertainty,}
  \textit{Journal of the American Statistical Association}, 109, 1040--1053.

\bibitem[{Wu et~al.(2016)Wu, Ferreira, and Gompper}]{Wu_etal_2016}
Wu, H.-H., Ferreira, M. A.~R., and Gompper, M.~E. (2016), \enquote{Consistency
  of Hyper-g-prior-based {B}ayesian Variable Selection for Generalized Linear
  Models,} \textit{Brazilian Journal of Probability and Statistics}, to appear.

\bibitem[{Yang et~al.(2016)Yang, Wainwright, and
  Jordan}]{yang:wainwright:jordan:2016}
Yang, Y., Wainwright, M.~J., and Jordan, M.~I. (2016), \enquote{On the
  computational complexity of high-dimensional Bayesian variable selection,}
  \textit{Ann. Statist.}, 44, 2497--2532.

\bibitem[{Zellner(1983)}]{Zellner_1983}
Zellner, A. (1983), \enquote{Applications of {B}ayesian Analysis in
  Econometrics,} \textit{The Statistician}, 32, 23--34.

\bibitem[{Zellner(1986)}]{Zellner_1986}
--- (1986), \enquote{On Assessing Prior Distributions and {B}ayesian Regression
  Analysis with g-Prior Distributions,} in \textit{Bayesian Inference and
  Decision Techniques: Essays in Honor of Bruno de Finetti},
  North-Holland/Elsevier, pp. 233--243.

\bibitem[{Zellner and Siow(1980)}]{Zellner_Siow_1980}
Zellner, A. and Siow, A. (1980), \enquote{Posterior Odds Ratios for Selected
  Regression Hypotheses,} in \textit{Bayesian Statistics: Proceedings of the
  First International Meeting Held in Valencia (Spain)}, Valencia, Spain:
  University of Valencia Press, pp. 585--603.

\end{thebibliography}
%\bibliography{MyRef_abb.bib}
}

\newpage
\begin{center}
{\LARGE
\emph{
Supplementary Materials: Appendices
}
}
\end{center}

\renewcommand{\thesection}{\Alph{section}}
\setcounter{section}{0}

%%%%%%%%%%%%%%%%%%%%%%%%%%%%%%%%%%%%%%%%%%%%%%%%%%%%%%%%
\section{Assumptions, Theoretical Results, and Proofs}\label{appendix:proofs}
%%%%%%%%%%%%%%%%%%%%%%%%%%%%%%%%%%%%%%%%%%%%%%%%%%%%%%%%
\subsection{Assumptions and Regularity Conditions}\label{section:assumptions}

%% glm likelihood function

%% assumptions 
The following assumptions and standard regularity conditions 
are used throughout the paper unless specified otherwise.
\begin{description}%[leftmargin=0pt]

\item[For functions $b(\cdot)$ and $\theta(\cdot)$] in the GLM density \eqref{eq:likelihood},
 their third derivatives
exist and are continuous on $\mathbb{R}$. The composite function 
$b^{\prime} \circ \theta(\cdot)$, which links $\E(Y)$ and $\eta$, 
is strictly monotonic. 
The variance function $b^{\prime\prime} \circ \theta(\cdot) \geq 0$, 
and the equality can only occur on the boundary $\pm \infty$.

\item[Finite MLEs $\hat{\alpha}_\MM, \hat{\b}_\MM$] exist and are unique,
under all subset models $\MM$.

\item[The design matrix $\X$] under the full model is known and has a
  full column rank $p$. Here, $p$ is fixed. The column space $C(\X)$ does not contain
  $\mathbf{1}_n$.  When studying asymptotics, we assume that for $i = 1, \ldots, n$ the
  norm of the $i$th row $\| \mathbf{x}_i\|_2$ is bounded by a 
  constant, and for all $n$,
%$\X^T \X / n \rightarrow \boldsymbol\Sigma_X$ a known positive definite matrix,
%and $\mathbf{1}_n^T \X \rightarrow \boldsymbol\mu_X$ a known vector.
  the smallest eigenvalue of $\X^T \X / n$ is bounded from below by a
  positive constant.  These conditions assure weak consistency
  (convergence in probability) and asymptotic normality for MLEs
  \citep{Fahrmeir_Kaufmann_1985}.

\item[The true model $\MT$] 
is among the $2^p$ subset models to be selected under
consideration.
In $\MT$, true values of the intercept and
regression coefficients are denoted by $\alpha_{\MT}^*$ and $\b_{\MT}^*$, 
respectively.

\end{description}

%%%%%%%%%%%%%%%%%%%%%%%%%%%%%%%%%%%%%%%%%%%%%%%%%%%
%\subsection{Proof: tCCH distribution is proper. }\label{PROOFprop:tCCH}
%\begin{proof}
%We prove that the tCCH distribution is proper by contradiction. 
%Suppose the tCCH distribution is improper, i.e., the integral of its density function diverges, 
%then for any tCCH random variable $u\sim \text{tCCH}(t, q, r, s, v, \theta)$, 
%the induced distribution of the random variable  $y = uv$ is also improper.
%However, then pdf of $y$ 
%\begin{align*}
%p(y) & = \frac{v^{t-1} \exp(s/v)}{B(t, q)\ \Phi_1(q, r, t+q, s/v, 1-\theta)}\
%\frac{(y/v)^{t-1}(1-y)^{q-1}e^{-sy/v}}{\left[ \theta + (1-\theta)y \right]^r}\ 
%\mathbf{1}_{\{0 < y< 1  \}}\\
%&	= \frac{ \exp(s/v)}{B(t, q)\ \Phi_1(q, r, t+q, s/v, 1-\theta)}\
%\frac{y^{t-1}(1-y)^{q-1}e^{-sy/v}}{\left[ \theta + (1-\theta)y \right]^r}\ 
%\mathbf{1}_{\{0 < y< 1  \}}.
%\end{align*}
%suggests that the new random variable $y \sim \text{CCH}\left(t, q, r, s/v, 1, \theta \right)$, 
%which is a proper distribution.
%\end{proof}

%%%%%%%%%%%%%%%%%%%%%%%%%%%%%%%%%%%%%%%%%%%%%%%%%%%%%%
\subsection{Proof of Proposition \ref{proposition:marlik_beta}}\label{PROOFproposition:marlik_beta}
\begin{proof}
We first approximate the likelihood by a second order Taylor expansion at the MLE,
\begin{align*}
&	p(\Y \mid \alpha, \b_\MM, \MM)\\
\approx~ & 	 
	p(\Y \mid \hat{\alpha}_\MM, \hat{\b}_\MM, \MM) \\
&	\cdot		\exp\left\{-\frac{1}{2}
	\left[ 
	\begin{array}{c}
	\alpha - \hat{\alpha}_\MM \\
	\b_\MM - \hat{\b}_\MM
	\end{array} \right]^T
	\left[ 
	\begin{array}{cc}
	\mathbf{1}_n^T \mathcal{J}_n(\hat{\boldsymbol\eta}_\MM) \mathbf{1}_n 
	& \mathbf{1}_n^T \mathcal{J}_n(\hat{\boldsymbol\eta}_\MM) \mathbf{X}_\MM\\
	\mathbf{X}_\MM^T \mathcal{J}_n(\hat{\boldsymbol\eta}_\MM) \mathbf{1}_n 
	& \mathbf{X}_\MM^T \mathcal{J}_n(\hat{\boldsymbol\eta}_\MM) \mathbf{X}_\MM
	\end{array} 
	\right]
	\left[ 
	\begin{array}{c}
	\alpha - \hat{\alpha}_\MM \\
	\b_\MM - \hat{\b}_\MM
	\end{array} \right]
	\right\}	\\
= ~&	
	p(\Y \mid \hat{\alpha}_\MM, \hat{\b}_\MM, \MM)
	 \exp\left\{-\frac{1}{2}
	\left( \alpha - \hat{\alpha}_\MM +\mathbf{m} \right)^T 
	\left( \mathbf{1}_n^T \mathcal{J}_n(\hat{\boldsymbol\eta}_\MM) \mathbf{1}_n \right)
	\left( \alpha - \hat{\alpha}_\MM +\mathbf{m}  \right) \right.\\
&	~~~~~~~~~~~~~~~~~~~~~~~~~~~~~~~~ \left.
	-\frac{1}{2}\left(\b_\MM - \hat{\b}_\MM\right)^T  
	\boldsymbol\Phi
	\left(\b_\MM - \hat{\b}_\MM\right)	 \right\},
\end{align*}
where the above approximation is precise up to a multiplicative term $\left[1 + O(n^{-1})\right]$,
$\mathbf{m} = \left(\mathbf{1}_n^T \mathcal{J}_n(\hat{\boldsymbol\eta}_\MM) \mathbf{1}_n \right)^{-1}
	\left(\mathbf{1}_n^T \mathcal{J}_n(\hat{\boldsymbol\eta}_\MM) \mathbf{X}_\MM \right)
	\left(\b_\MM - \hat{\b}_\MM\right)$, and
\[
\boldsymbol\Phi = \mathbf{X}_\MM^T \mathcal{J}_n(\hat{\boldsymbol\eta}_\MM) \mathbf{X}_\MM 
	- \left(\mathbf{X}_\MM^T \mathcal{J}_n(\hat{\boldsymbol\eta}_\MM)  \mathbf{1}_n \right)
	\left(\mathbf{1}_n^T \mathcal{J}_n(\hat{\boldsymbol\eta}_\MM) \mathbf{1}_n \right)^{-1}
	\left(\mathbf{1}_n^T \mathcal{J}_n(\hat{\boldsymbol\eta}_\MM) \mathbf{X}_\MM \right).
\]
In the above approximate likelihood, 
the matrix $\boldsymbol\Phi$ acts like a precision matrix of $\b_\MM$.
By using the orthogonal projection $\hat{\mathcal{P}}_{\mathbf{1}_n}
= \mathbf{1}_n \left(\mathbf{1}^T \mathcal{J}_n(\hat{\boldsymbol\eta}_\MM) \mathbf{1} \right)^{-1}
\mathbf{1}^T \mathcal{J}_n(\hat{\boldsymbol\eta}_\MM)$, 
we can rewrite it as
\begin{align*}
\boldsymbol\Phi 
& =\mathbf{X}_\MM^T \mathcal{J}_n(\hat{\boldsymbol\eta}_\MM) \mathbf{X}_\MM 
	- \mathbf{X}_\MM^T \hat{\mathcal{P}}_{\mathbf{1}_n}^T
	\mathcal{J}_n(\hat{\boldsymbol\eta}_\MM) \hat{\mathcal{P}}_{\mathbf{1}_n} \mathbf{X}_\MM \\
& = \mathbf{X}_\MM^T (\mathbf{I}_n -  \hat{\mathcal{P}}_{\mathbf{1}_n})^T
	\mathcal{J}_n(\hat{\boldsymbol\eta}_\MM) (\mathbf{I}_n -  \hat{\mathcal{P}}_{\mathbf{1}_n})\mathbf{X}_\MM
	= \mathcal{J}_n(\hat{\b}_\MM). 
\end{align*}
	
Under the flat prior $p(\alpha) \propto 1$, an integrated Laplace approximation yields
the marginal likelihood density conditional on $\b_\MM$:
\begin{align*}
&	p(\Y \mid \b_\MM, \MM) =
	\int p(\Y \mid \alpha, \b_\MM, \MM)p(\alpha) d\alpha\\
\propto 
~ &
	p(\Y \mid \hat{\alpha}_\MM, \hat{\b}_\MM, \MM)
	\exp\left\{ -\frac{1}{2}\left(\b_\MM - \hat{\b}_\MM\right)^T  
	\mathcal{J}_n(\hat{\b}_\MM) \left(\b_\MM - \hat{\b}_\MM\right)
	\right\}\\
&	\cdot \int \exp\left\{-\frac{1}{2}\left( \alpha - \hat{\alpha}_\MM +\mathbf{m} \right)^T 
	\left( \mathbf{1}_n^T \mathcal{J}_n(\hat{\boldsymbol\eta}_\MM) \mathbf{1}_n \right)
	\left( \alpha - \hat{\alpha}_\MM +\mathbf{m}  \right) \right\}  d\alpha\\
\propto~ &
	p(\Y \mid \hat{\alpha}_\MM, \hat{\b}_\MM, \MM)
	\left[ \mathbf{1}_n^T \mathcal{J}_n(\hat{\boldsymbol\eta}_\MM) \mathbf{1}_n \right]^{-\frac{1}{2}}
	\exp\left\{ -\frac{1}{2}\left(\b_\MM - \hat{\b}_\MM\right)^T  
	\mathcal{J}_n(\hat{\b}_\MM) \left(\b_\MM - \hat{\b}_\MM\right)
	\right\}.
\end{align*}
\end{proof}

%%%%%%%%%%%%%%%%%%%%%%%%%%%%%%%%%%%%%%%%%%%%%%%%%%%%%%
\subsection{Asymptotic Behavior of the Observed Information}

\begin{lemma}\label{lemma:order_Jalpha_Jbeta}
For any subset model $\MM$, 
\begin{enumerate}[label = (\arabic*)]
\item if $\MM \supset \MT$, then
$\mathcal{J}_n(\hat{\alpha}_{\MM}) = O_P(n)$ and $\mathcal{J}_n(\hat{\b}_{\MM}) = O_P(n)$.
More specifically,
$\mathcal{J}_n(\hat{\alpha}_{\MM})/n -\mathcal{I}_n(\hat{\alpha}_{\MM})/n \stackrel{P}{\longrightarrow} 0$,
and
$\mathcal{J}_n(\hat{\b}_{\MM})/n - \mathcal{I}_n(\hat{\b}_{\MM})/n \stackrel{P}{\longrightarrow} \mathbf{0}$.
\item if $\MM \not\supset \MT$, then
$\mathcal{J}_n(\hat{\alpha}_{\MM}) = O_P(n^{\tau_\MM})$ and $\mathcal{J}_n(\hat{\b}_{\MM}) = O_P(n^{\tau_\MM})$,
where $0\leq \tau_\MM \leq 1$.
\end{enumerate}
\end{lemma}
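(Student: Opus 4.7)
The plan is to route everything through the diagonal representation $\mathcal{J}_n(\hat{\boldsymbol\eta}_\MM) = \mathrm{diag}(d_i)$ from \eqref{eq:obs_info}, together with the identities $\mathcal{J}_n(\hat\alpha_\MM) = \sum_i d_i$ and $\mathcal{J}_n(\hat\b_\MM) = \X_\MM^{cT}\mathrm{diag}(d_i)\X_\MM^c$. The whole lemma is then a statement about sums of the $d_i$'s weighted by (bounded) entries of $\X_\MM^c$, so the task reduces to controlling the asymptotic behavior of $d_i = -Y_i\,\theta''(\hat\eta_{\MM,i}) + (b\circ\theta)''(\hat\eta_{\MM,i})$.

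For part~(1), I would exploit standard MLE asymptotics under a correctly-specified model. Since $\MM \supset \MT$, consistency of $\hat\alpha_\MM, \hat\b_\MM$ (with the extra coordinates converging to $0$) yields $\hat\eta_{\MM,i} \stackrel{P}{\to}\eta^*_{\MT,i}$ uniformly over $i$. By continuity of $\theta''$ and $(b\circ\theta)''$, $d_i = -Y_i\,\theta''(\eta^*_{\MT,i}) + (b\circ\theta)''(\eta^*_{\MT,i}) + o_P(1)$. Taking expectations under $\MT$ and using $\E(Y_i) = b'(\theta(\eta^*_{\MT,i}))$, one computes $\E[d_i] = b''(\theta(\eta^*_{\MT,i}))\,[\theta'(\eta^*_{\MT,i})]^2 = \mathcal{I}(\eta^*_{\MT,i})$, which is exactly the unit Fisher information. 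A WLLN (using boundedness of $\|\mathbf{x}_i\|$ and the variance conditions implicit in the regularity assumptions of Appendix~\ref{section:assumptions}) then gives $n^{-1}\sum_i d_i - n^{-1}\sum_i \mathcal{I}(\eta^*_{\MT,i}) \stackrel{P}{\to} 0$, and since the expected information divided by $n$ is bounded and positive, $\mathcal{J}_n(\hat\alpha_\MM) = O_P(n)$. The matrix version follows by the same pointwise convergence applied entrywise to $\X_\MM^{cT}\mathrm{diag}(d_i)\X_\MM^c / n$, where the lower bound on the smallest eigenvalue of $\X^T\X/n$ and boundedness of $\|\mathbf{x}_i\|$ ensure the limit is a positive-definite matrix of order $O(1)$, yielding $\mathcal{J}_n(\hat\b_\MM)/n - \mathcal{I}_n(\hat\b_\MM)/n \stackrel{P}{\to} \mathbf{0}$.

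For part~(2), the MLE under $\MM \not\supset \MT$ is no longer consistent for the data-generating parameter but, under our regularity conditions, converges to the Kullback--Leibler pseudo-true value $(\alpha^\sharp_\MM, \b^\sharp_\MM)$ (cf.\ White's misspecified-MLE theory). Hence $\hat\eta_{\MM,i}$ is still stochastically bounded, $d_i = O_P(1)$, and the same argument as above gives $\mathcal{J}_n(\hat\alpha_\MM), \mathcal{J}_n(\hat\b_\MM) = O_P(n)$, i.e.\ $\tau_\MM = 1$ in the generic case. The reason the statement only asserts $\tau_\MM \le 1$ is to accommodate degenerate situations — for example, canonical-link binary regression where the pseudo-true $|\hat\eta_{\MM,i}|$ can drift to boundary values along a subsequence, forcing $d_i = b''(\hat\eta_{\MM,i}) \to 0$ and shrinking the sum below the $n$-rate. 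In all cases, since $|d_i|$ is bounded above by a continuous function of $\hat\eta_{\MM,i}$ that is itself $O_P(\mathrm{polylog}(n))$ on the pseudo-true value's neighborhood, $\sum_i d_i$ cannot exceed $O_P(n)$; the lower end $\tau_\MM = 0$ covers the extreme shrinkage case.

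The main obstacle is the part~(2) argument: making the ``MLE converges to a pseudo-true value'' step rigorous requires either invoking misspecified-MLE consistency theory or checking it model-by-model, and handling the corner cases where $\hat\eta_{\MM,i}$ drifts to infinity (e.g.\ near-separation in binary regression) needs care to bound the $d_i$'s without a precise description of the limit. My plan is to state the upper bound $\tau_\MM \le 1$ directly from boundedness/polynomial growth of the link-specific functions $\theta''$ and $(b\circ\theta)''$ evaluated at $\hat\eta_{\MM,i}$, and give the matching lower-rate example (or cite separation-type results) to justify that $\tau_\MM$ can be strictly less than $1$, so that the claimed range $[0,1]$ is sharp.
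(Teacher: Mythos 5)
Your part~(1) is essentially the paper's argument: consistency of the MLE under $\MM \supset \MT$, continuity of the derivatives of $b$ and $\theta$, a weak LLN for the independent non-identically distributed $Y_i$ to replace $Y_i$ by $b'\circ\theta(\eta^*_{\MT,i})$, and the eigenvalue/centering conditions (including $\mathbf{1}_n \notin C(\X_\MM)$) to get a positive definite limit of $\mathcal{J}_n(\hat{\b}_\MM)/n$ agreeing with $\mathcal{I}_n(\hat{\b}_\MM)/n$. That part is fine.

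Part~(2) contains a genuine misconception about why $\tau_\MM$ can fall below $1$. You attribute sub-$n$ rates to the fitted linear predictors $\hat\eta_{\MM,i}$ drifting to boundary values (separation-type behavior), and on that basis you claim $\tau_\MM = 1$ ``generically.'' But under the standing assumptions (finite, unique MLEs; bounded rows $\|\mathbf{x}_i\|$; convergence of $(\hat\alpha_\MM,\hat\b_\MM)$ to finite pseudo-true limits, which the paper takes from Self and Mauritsen and van der Vaart rather than White, though that substitution is harmless), the $\hat\eta_{\MM,i}$ stay bounded, so your proposed mechanism is excluded by hypothesis. The actual reason, visible once you pass to the limit of $n^{-1}\sum_i d_i$, is that the \emph{observed} information under a misspecified model is not sign-definite: each limiting summand equals $b''\circ\theta(\eta^*_{\MM,i})[\theta'(\eta^*_{\MM,i})]^2 + [\,b'\circ\theta(\eta^*_{\MM,i}) - b'\circ\theta(\eta^*_{\MT,i})\,]\,\theta''(\eta^*_{\MM,i})$, and because $\eta^*_{\MM,i} \neq \eta^*_{\MT,i}$ when $\MM \not\supset \MT$, the second term survives and (for non-canonical links, where $\theta'' \neq 0$) can be negative or cancel the first, so the normalized sum may tend to zero even though every term is bounded. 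That is what forces the statement to allow $0 \le \tau_\MM \le 1$; your ``generic $\tau_\MM = 1$'' claim and your sharpness example should be replaced by this cancellation argument. A smaller point: concluding $\sum_i d_i = O_P(n)$ from ``each $d_i = O_P(1)$'' is not automatic because $d_i$ contains the unbounded $Y_i$ (e.g.\ Poisson); you need the same WLLN step you used in part~(1) to control $n^{-1}\sum_i Y_i\,\theta''(\hat\eta_{\MM,i})$, applied now at the pseudo-true limits, before the $O_P(n)$ upper bound (and the analogous entrywise bound for $\mathcal{J}_n(\hat{\b}_\MM)$) is justified.
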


\begin{proof} %% asymptotic consistency of MLEs
First, we study the asymptotic of MLEs.
The assumptions on the design matrix of the full model $\X$ remain to hold
for the design matrix $\X_\MM$ under all subset models, i.e., 
$\mathbf{x}_{\MM, i}$ are bounded for all $i = 1, \ldots, n$, 
and as $n$ tends to infinity, the smallest eigenvalue of $\X_\MM ^T \X_\MM / n$ 
is bounded from below by a positive constant.
Since these are stronger than the condition $R_c$ in \citet[pp.\ 355]{Fahrmeir_Kaufmann_1985},
we have weak consistency and asymptotic normality for
MLEs under any $\MM \supset \MT$, i.e., as $n \rightarrow \infty$,
\begin{equation}\label{eq:asym_MLE_MT}
\left(\hat{\alpha}_\MM, \hat{\b}_\MM\right) \stackrel{P}{\longrightarrow} \left(\alpha_\MM^*, \b_\MM^*\right), \quad
\mathcal{I}_n(\b_\MM^*)^{\frac{1}{2}}\left( \hat{\b}_\MM - \b_\MT^* \right)
\stackrel{d}{\longrightarrow} \text{N}(0, \mathbf{I}_{p_\MM}).
\end{equation}
Here, $\alpha_\MM^* = \alpha_\MT^*$, and $\b_\MM^* = \b_\MT^*$ in the
sense that all entries in $\b_\MM^*$ that correspond to predictors not
in $\MT$ are filled with zero.  Therefore, if $\MM \supset \MT$, then
$\eta_{\MM,i}^* = \alpha_\MM^* + \mathbf{x}_{\MM, i}^T \b_\MM^* =
\alpha_\MT^* + \mathbf{x}_{\MT, i}^T \b_\MT^*= \eta_{\MT,i}^*$, for
all $i = 1, \ldots, n$.  On the other hand, if $\MM \not\supset \MT$,
\citet{Self_Mauritsen_1988} and \citet[pp.\ 45, Theorem
5.7]{vanderVaart_2000} suggest that the limits of MLEs still exist,
i.e.,
$\left(\hat{\alpha}_\MM, \hat{\b}_\MM\right)
\stackrel{P}{\longrightarrow} \left(\alpha_\MM^*, \b_\MM^*\right)$,
but the linear predictors in the limit
$\eta_{\MM,i}^* \neq \eta_{\MT,i}^*$.

%% generalized LLN
Under non-canonical links, observed information matrices are functions
of $\mathbf{Y}$, therefore we need a weak law of large numbers for
independently but non-identically distributed random variables.  In
\citet[pp.\ 205]{Resnick_1999}, by Theorem 7.2.1 and the proof of
special case (a), we have that for a sequence of independent random
variables $Y_1, \ldots, Y_n$, if their variances are bounded, then as
$n \rightarrow \infty$,
\begin{equation}\label{eq:weak_LLN}
\frac{1}{n}\sum_{i=1}^n Y_i - \frac{1}{n}\sum_{i=1}^n \mathbb{E}(Y_i)
\stackrel{P}{\longrightarrow} 0.
\end{equation}

Next we show asymptotic results for $\mathcal{J}_n(\hat{\alpha}_{\MM})$.
In \eqref{eq:obs_info}, for $i = 1, \ldots, n$, the $i$th diagonal entry of 
$\mathcal{J}_n(\hat{\boldsymbol\eta}_\MM)$ can be rewritten as
$d_i = b'' \circ\theta (\hat{\eta}_{\MM, i}) \left[ \theta' (\hat{\eta}_{\MM, i})  \right]^2
	+ \left[ b' \circ \theta (\hat{\eta}_{\MM, i}) - Y_i \right] \theta^{\prime\prime}(\hat{\eta}_{\MM, i})$.
Hence, for any model $\MM$, 
\begin{align}\nonumber
&	\frac{1}{n}\mathcal{J}_n(\hat{\alpha}_{\MM}) 
	= \frac{1}{n}\mathbf{1}_n \mathcal{J}(\hat{\boldsymbol\eta}_{\MM}) \mathbf{1}_n 
	= \frac{1}{n} \sum_{i=1}^n d_i\\ \nonumber
=~&	  \frac{1}{n}\left\{\sum_{i=1}^n b'' \circ\theta (\hat{\eta}_{\MM, i}) \left[ \theta' (\hat{\eta}_{\MM, i})  \right]^2
	+ \left[ b' \circ \theta (\hat{\eta}_{\MM, i}) - Y_i \right] \theta^{\prime\prime}(\hat{\eta}_{\MM, i})\right\}\\ \nonumber
\stackrel{P}{\longrightarrow}~	&	
	 \frac{1}{n}\left\{\sum_{i=1}^n b'' \circ\theta (\hat{\eta}_{\MM, i}) \left[ \theta' (\hat{\eta}_{\MM, i})  \right]^2
	+ \left[ b' \circ \theta (\hat{\eta}_{\MM, i}) - b' \circ \theta (\eta_{\MT, i}^*) \right] 
	\theta^{\prime\prime}(\hat{\eta}_{\MM, i})\right\}\\ \label{eq:Jeta_asymptotics1}
\stackrel{P}{\longrightarrow}~	&	
	 \frac{1}{n}\left\{\sum_{i=1}^n b'' \circ\theta (\eta_{\MM, i}^*) \left[ \theta' (\eta_{\MM, i}^*)  \right]^2
	+ \left[ b' \circ \theta (\eta_{\MM, i}^*) - b' \circ \theta (\eta_{\MT, i}^*) \right] 
	\theta^{\prime\prime}(\eta_{\MM, i}^*)\right\},
\end{align}
where the second last line is given by \eqref{eq:weak_LLN} and the 
fact $\mathbb{E}(Y_i) = b' \circ \theta (\eta_{\MT, i}^*)$, for all $i = 1, \ldots, n$, 
and the last line is given by the continuous mapping theorem.
Since for all $i = 1, \ldots, n$, $\mathbf{x}_i$ is bounded,  $\eta_{\MM, i}^*$ 
and $\eta_{\MT, i}^*$ are also bounded. 
For each term in the summation of \eqref{eq:Jeta_asymptotics1},
it is bounded due to the continuity assumptions on the third derivatives of $b(\cdot)$ and $\theta(\cdot)$.
Therefore, $\mathcal{J}_n(\hat{\alpha}_{\MM})/n$ is bounded in probability.

If $\MM \supset \MT$, \eqref{eq:Jeta_asymptotics1} becomes
\begin{equation}\label{eq:Jeta_asymptotics2}
\frac{1}{n} \mathcal{J}_n(\hat{\alpha}_{\MM}) \stackrel{P}{\longrightarrow}
\frac{1}{n} \sum_{i=1}^n b'' \circ\theta (\eta_{\MT, i}^*) \left[ \theta' (\eta_{\MT, i}^*)  \right]^2
= \frac{1}{n}\mathcal{I}_n(\alpha_{\MM}^*),
\end{equation}
which is also the limit of $\mathcal{I}_n(\hat{\alpha}_{\MM}) / n$.
Because we assume that $b' \circ \theta(\cdot)$ is strictly monotonic,
 $\theta(\cdot)$ is also strictly monotonic. 
For each term in the summation of \eqref{eq:Jeta_asymptotics2},
it is positive because $\theta'(\cdot) \neq 0$ and
$b'' \circ\theta (\eta)$ is positive for finite $\eta$. 
Therefore by  \eqref{eq:Jeta_asymptotics2}, if $\MM \supset \MT$, then
$\mathcal{J}_n(\hat{\alpha}_{\MM})/n$ is positive and bounded in probability, 
i.e., $\mathcal{J}_n(\hat{\alpha}_{\MM}) = O_P(n)$.
On the other hand, if $\MM \not\supset \MT$, then only \eqref{eq:Jeta_asymptotics1} holds
but not \eqref{eq:Jeta_asymptotics2}. Each term
in the summation of \eqref{eq:Jeta_asymptotics1} can be
either positive, zero, or negative. In this case, by \eqref{eq:Jeta_asymptotics1}, 
$\mathcal{J}_n(\alpha_{\MM})/n$ is bounded in probability, and it may equal to zero. 
Therefore, $\mathcal{J}_n(\hat{\alpha}_{\MM})$ is on the order of $O(n^{\tau_\MM})$, where $\tau_n \leq 1$,
so that it tends to $\infty$ at a rate no faster than $O_P(n)$.

Last, we show asymptotic results regarding the matrix 
\begin{align*}
\mathcal{J}_n(\hat{\b}_{\MM}) 
&	= \mathbf{\X}_\MM^{cT} \mathcal{J}(\hat{\boldsymbol\eta}_{\MM}) \mathbf{\X}_\MM^c
	= \mathbf{X}_\MM^T(\mathbf{I}_n - \hat{\mathcal{P}}_{\mathbf{1}_n})^T
	\mathcal{J}(\hat{\boldsymbol\eta}_{\MM}) 
	(\mathbf{I}_n - \hat{\mathcal{P}}_{\mathbf{1}_n})\mathbf{X}_\MM \\
& 	= \mathbf{\X}_\MM^{T}
	\left[ \mathcal{J}(\hat{\boldsymbol\eta}_{\MM}) - 
	\mathcal{J}(\hat{\boldsymbol\eta}_{\MM})
	\mathbf{1}_n \left( \mathbf{1}_n^T \mathcal{J}_n(\hat{\boldsymbol\eta}_{\MM}) \mathbf{1}_n\right)^{-1}
  	\mathbf{1}_n^T \mathcal{J}(\hat{\boldsymbol\eta}_{\MM}) \right]
	\mathbf{\X}_\MM.
\end{align*}
For the $(j,k)$th entry, $1\leq j < k \leq p_\MM$, 
\begin{align*}\nonumber
	\frac{1}{n}\left[\mathcal{J}_n(\hat{\b}_{\MM}) \right]_{j,k}
&	= \frac{1}{n} \sum_{i=1}^n d_i x_{i,j} x_{i,k} 
	- \frac{1}{n}\left(\sum_{i=1}^n d_i x_{i, j}\right) \left(\sum_{i=1}^n d_i \right)^{-1} 
	\left(\sum_{i=1}^n d_i x_{i, k}\right)\\
&	= \frac{1}{n} \sum_{i=1}^n d_i x_{i,j} x_{i,k} 
	- \left(\frac{1}{n}\sum_{i=1}^n d_i x_{i, j}\right) \left(\frac{1}{n}\sum_{i=1}^n d_i \right)^{-1} 
	\left(\frac{1}{n}\sum_{i=1}^n d_i x_{i, k}\right)
\end{align*}
is bounded since all $\mathbf{x}_i$ are bounded.
Therefore, $\mathcal{J}_n(\hat{\b}_{\MM}) / n$ is bounded in probability.

To show that for any $\MM \supset \MT$,
$\mathcal{J}_n(\hat{\b}_{\MM}) / n$ does not reduce to zero, we will
show that it is a positive definite matrix.  For any given non-zero
vector $\mathbf{a} \in \mathbb{R}^{p_\MM}$, we denote
$\X_\MM \mathbf{a} = (t_1, \ldots, t_n)^T$, whose entries are all
bounded.  When $\MM \supset \MT$, by \eqref{eq:Jeta_asymptotics2}, all
$d_i$'s have a positive lower bound, hence simple calculation gives
\[
\frac{1}{n}\mathbf{a}^T \mathcal{J}_n(\hat{\b}_{\MM}) \mathbf{a}
	= \frac{1}{n} \sum_{i=1}^n d_i t_i^2 
	- \left(\frac{1}{n}\sum_{i=1}^n d_i \right)^{-1} 
	\left(\frac{1}{n}\sum_{i=1}^n d_i t_i\right)^2 \geq 0.
\]
Here the quality only holds if all $t_i$'s are equal for $i = 1, \ldots, n$, 
which is impossible here because of the assumption $\mathbf{1}_n \not\in C(\X_\MM)$.
For large $n$, the assumption that the smallest eigenvalue of 
$\X^T \X / n$ being bounded from below by a positive constant suggests that
$\X_\MM^T \X_\MM / n$ is positive definite, so 
$\mathbf{a}^T \mathcal{J}_n(\hat{\b}_{\MM}) \mathbf{a} /n \not\longrightarrow 0$.

Furthermore, arguing similarly to \eqref{eq:Jeta_asymptotics2}, we also have
\[
\frac{1}{n} \sum_{i=1}^n d_i t_i^k  \stackrel{P}{\longrightarrow}
\frac{1}{n} \sum_{i=1}^n b'' \circ\theta (\eta_{\MT, i}^*) \left[ \theta' (\eta_{\MT, i}^*)  \right]^2 t_i^k,
\]
for $k = 0, 1, 2$.
Therefore, for any vector $\mathbf{a}$, if $\MM \supset \MT$, then 
\[
\frac{1}{n}\mathbf{a}^T \mathcal{J}_n(\hat{\b}_{\MM}) \mathbf{a} - 
\frac{1}{n}\mathbf{a}^T \mathcal{I}_n(\hat{\b}_{\MM}) \mathbf{a} 
\stackrel{P}{\longrightarrow} 0.
\]
i.e., $\mathcal{J}_n(\hat{\b}_{\MM})/n$ and $\mathcal{I}_n(\hat{\b}_{\MM})/n$
are asymptotically the same.
\end{proof}

%%%%%%%%%%%%%%%%%%%%%%%%%%%%%%%%%%%%%%%%%%%%%%%%%%%%%%%%
\subsection{Proof of Proposition \ref{proposition:non_full_rank_BF}}\label{PROOFproposition:non_full_rank_BF}

\begin{proof}
We first use proof by contradiction to show that for $\MM$, the MLE of the intercept is unique. 
If both $(\hat{\alpha}_1, \hat{\b}_1)$ and $(\hat{\alpha}_2, \hat{\b}_2)$
maximize the likelihood for model $\MM$, where $\hat{\alpha}_1 \neq \hat{\alpha}_2$, then
\[
\hat{\alpha}_1 \mathbf{1}_n + \X_\MM \hat{\b}_1
= \hat{\alpha}_2 \mathbf{1}_n + \X_\MM \hat{\b}_2
\Longrightarrow
(\hat{\alpha}_1 - \hat{\alpha}_2) \mathbf{1}_n =  \X_\MM (\hat{\b}_2 - \hat{\b}_1),
\]
which is contradicted with $\mathbf{1}_n \not\in C(\X_\MM)$.
Similarly, we can show this MLE is the same as the one for model $\MM'$, i.e.,
$\hat{\alpha}_\MM = \hat{\alpha}_{\MM'}$.

By \eqref{eq:Jeta_non-full_rank}, between the two models $\MM$ and $\MM'$,
\[
\mathcal{J}_n(\hat{\alpha}_\MM) = \mathcal{J}_n(\hat{\alpha}_{\MM'}), \quad
z_\MM = z_{\MM'}.
\]
So we just need to show $Q_\MM = Q_{\MM'}$. Since $\hat{\alpha}_\MM = \hat{\alpha}_{\MM'}$, 
 \eqref{eq:Jeta_non-full_rank} suggests that 
\[
\mathbf{x}_{\MM, i}^T \hat{\b}_\MM
= \mathbf{x}_{\MM', i}^T \hat{\b}_{\MM'},
\quad i = 1, \ldots, n.
\]
Hence,
\begin{align*}
\mathbf{X}_\MM^c  \hat{\boldsymbol\beta}_{\MM} 
&=	\mathbf{X}_\MM  \hat{\boldsymbol\beta}_{\MM} - 
	\left(\sum_{i=1}^n w_i \mathbf{x}_{\MM, i}^T \hat{\b}_\MM \right) \mathbf{1}_n\\
&=	\mathbf{X}_{\MM'}  \hat{\boldsymbol\beta}_{\MM'} - 
	\left(\sum_{i=1}^n w_i \mathbf{x}_{\MM', i}^T \hat{\b}_{\MM'} \right) \mathbf{1}_n	
	= \mathbf{X}_{\MM'}^c  \hat{\boldsymbol\beta}_{\MM'},
\end{align*}
where $w_i = d_i / (\sum_{r=1}^n d_r)$. Therefore, we have
\[
Q_{\MM} 
	 = \left[ \mathbf{X}_\MM^c \hat{\boldsymbol\beta}_{\MM} \right]^T 
	\mathcal{J}_n( \hat{\boldsymbol\eta}_{\MM}) 
	\left[ \mathbf{X}_\MM^c  \hat{\boldsymbol\beta}_{\MM}\right]
	= \left[ \mathbf{X}_{\MM'}^c \hat{\boldsymbol\beta}_{\MM'} \right]^T 
	\mathcal{J}_n( \hat{\boldsymbol\eta}_{\MM'}) 
	\left[ \mathbf{X}_{\MM'}^c  \hat{\boldsymbol\beta}_{\MM'}\right]
	= Q_{\MM'}.
\]

\end{proof}

%%%%%%%%%%%%%%%%%%%%%%%%%%%%%%%%%%%%%%%%%%%%%%%%%%%%%%%%
\subsection{Proof of Proposition \ref{proposition:tCCH_marlik}}\label{PROOFproposition:tCCH_marlik}
\begin{proof}
The marginal likelihood of the mixture of $g$-priors is obtained by integrating out 
$g$ from the marginal likelihood of the $g$-prior, i.e.,
\[
p(\mathbf{Y} \mid \MM)  
%=  \iiint p(\mathbf{Y} \mid \alpha, \boldsymbol\beta_\MM, g \MM) 
%	p(\alpha)p(\boldsymbol\beta_\MM \mid g, \MM) p(g) ~d\alpha ~d\boldsymbol\beta_\MM ~dg
= \int_0^{\infty} p(\mathbf{Y} \mid \MM,g)p(g) dg 
\]
Here $p(\mathbf{Y} \mid \MM,g)$ is obtained under the integrated Laplace approximation 
as in \eqref{eq:marlik_fixedg}. Because of
the one-to-one mapping between $g$ and $u$, we rewrite this integral in terms of $u$.
\begin{align*}
p(\mathbf{Y} \mid \MM)  = &\int_0^{1} p(\mathbf{Y} \mid \MM,u)p(u) du\\
%\stackrel{\cdot}{\propto}  
\propto 
&  \int_0^{1}
  p(\mathbf{Y} \mid \hat{\alpha}_\MM, \hat{\boldsymbol\beta}_\MM, \MM)
  \mathcal{J}_n(\hat{\alpha}_\MM)^{-\frac{1}{2}}
    u^{\frac{p_{\MM}}{2}} e^{ -\frac{Q_{\MM}}{2} u}\\
&\cdot
  \frac{v^\frac{a}{2} \exp\left(\frac{s}{2v}\right)}{B\left(\frac{a}{2}, \frac{b}{2}\right)\ 
  \Phi_1\left(\frac{b}{2}, r, \frac{a+b}{2}, \frac{s}{2v}, 1-\kappa\right)}\
  \frac{u^{\frac{a}{2}-1}(1-vu)^{\frac{b}{2}-1}e^{-\frac{s}{2}u}}{\left[ \kappa + (1-\kappa)vu \right]^r}\ 
  \mathbf{1}_{\{0 < u< \frac{1}{v}  \}}~du\\
=  &  
  ~ p(\mathbf{Y} \mid \hat{\alpha}_\MM, \hat{\boldsymbol\beta}_\MM, \MM)
  \mathcal{J}_n(\hat{\alpha}_\MM)^{-\frac{1}{2}}
  \frac{v^\frac{a}{2} \exp\left(\frac{s}{2v}\right)}{B\left(\frac{a}{2}, \frac{b}{2}\right)\ 
  \Phi_1\left(\frac{b}{2}, r, \frac{a+b}{2}, \frac{s}{2v}, 1-\kappa\right)}\\
&\cdot  \int_0^{1} \frac{u^{\frac{a+ p_{\MM}}{2}-1}(1-vu)^{\frac{b}{2}-1}
  e^{-\frac{s+Q_{\MM}}{2}u}}{\left[ \kappa + (1-\kappa)vu \right]^r}\ 
  \mathbf{1}_{\{0 < u< \frac{1}{v}  \}}~du.
\end{align*}
Since the above integrand is proportional to a tCCH density \eqref{eq:CCH_dist}
with updated parameters, the above integral equals $B\left(\frac{a+ p_\MM}{2}, \frac{b}{2}\right)\ 
  \Phi_1\left(\frac{b}{2}, r, \frac{a+b+ p_\MM}{2}, \frac{s+ Q_\MM}{2v}, 1-\kappa\right)
  v^{-\frac{a + p_\MM}{2}} \exp\left(-\frac{s + Q_\MM}{2v}\right)$. 
\end{proof}

%%%%%%%%%%%%%%%%%%%%%%%%%%%%%%%%%%%%%%%%%%%%%%%%%%
\subsection{Proof of Proposition \ref{proposition:CH-g_tails}}\label{PROOFproposition:CH-g_tails}

\begin{proof}
%For notation simplicity, we denote $\| \boldsymbol\beta_\MM \|_{\mathcal{J}_n} 
%=  \left[ \boldsymbol\beta_\MM^T \mathcal{J}_n(\hat{\boldsymbol\beta}_\MM) 
%\boldsymbol\beta_\MM \right]^{\frac{1}{2}} $.
The marginal prior on $\boldsymbol\beta_\MM$ after integrating $g$ out is
\begin{equation}\label{eq:marginal_prior}
p(\boldsymbol\beta_\MM \mid \MM) \propto \int_0^{\infty} g^{-\frac{p_\MM}{2}} \exp
\left[-\frac{\| \boldsymbol\beta_\MM \|_{\mathcal{J}_n}^2 }{2g} \right]
g^{\frac{b}{2}-1}\left( \frac{1}{1+g} \right)^{\frac{a+b}{2}} \exp\left[ \frac{sg}{2(1+g)} \right] dg
\end{equation}
We will show that  as  $\| \boldsymbol\beta_\MM \|_{\mathcal{J}_n} \rightarrow \infty$,
both a lower bound and an upper bound of \eqref{eq:marginal_prior} are proportional to
$\left( \| \boldsymbol\beta_\MM \|_{\mathcal{J}_n}^2 \right) ^{-\frac{a + p_\MM}{2}}$.
Since $s\geq 0$, a lower bound of of the right side of \eqref{eq:marginal_prior} is
\[
  \int_0^{\infty} g^{-\frac{p_\MM}{2}} e^{-\frac{\| \boldsymbol\beta_\MM \|_{\mathcal{J}_n}^2 }{2g} }
  g^{\frac{b}{2}-1}\left( \frac{1}{1+g} \right)^{\frac{a+b}{2}} dg
= \int_0^{\infty} \left(\frac{g}{1+g}\right)^{\frac{a+b}{2}} \left(\frac{1}{g}\right)^{\frac{a+p_\MM -2}{2}}
  e^{-\frac{\| \boldsymbol\beta_\MM \|_{\mathcal{J}_n}^2 }{2g} } d\left( \frac{1}{g}\right).
\]
Then according to the Watson's Lemma \citep[pp.\ 71]{Olver_1997}, 
as $\| \boldsymbol\beta_\MM \|_{\mathcal{J}_n} \rightarrow \infty$, the limit of this lower bound is
proportional to $\left( \| \boldsymbol\beta_\MM \|_{\mathcal{J}_n}^2 \right) ^{-\frac{a + p_\MM}{2}}$.
Next we find an upper bound of the right side of \eqref{eq:marginal_prior} as 
\begin{align*}
& \int_0^{\infty} g^{-\frac{p_\MM}{2}} \exp \left[-\frac{\| \boldsymbol\beta_\MM \|_{\mathcal{J}_n}^2 }
  											  {2(1+g)} \right]
  g^{\frac{b}{2}-1}\left( \frac{1}{1+g} \right)^{\frac{a+b}{2}} \exp\left[  \frac{sg}{2(1+g)}\right] dg\\
 = & ~ e^{-\frac{\| \boldsymbol\beta_\MM \|_{\mathcal{J}_n}^2}{2} }
  B\left(\frac{b - p_\MM}{2}, \frac{a+p_\MM}{2} \right) 
  \ _{1}F_1\left( \frac{b-p_\MM}{2}, \frac{a+b}{2},
  \frac{s+ \| \boldsymbol\beta_\MM \|_{\mathcal{J}_n}^2}{2} \right).
\end{align*}
According to \citet{Abramowitz_Stegun_1970} formula (13.1.4),
\begin{equation}\label{eq:1F1_limit_positive_s}
 _{1} F_1 (a, b, s) = \frac{\Gamma(b)}{\Gamma(a)} \exp(s) s^{a-b} [1 + O(|s|^{-1})], \textrm{ when } \text{Real}(s)> 0,
\end{equation}
hence as $\| \boldsymbol\beta_\MM \|_{\mathcal{J}_n} \rightarrow \infty$, the limit of the above upper bound
converges to
\[
\exp\left[ -\frac{\| \boldsymbol\beta_\MM \|_{\mathcal{J}_n}^2}{2} \right] \Gamma\left( \frac{a+p_\MM}{2} \right)
  \exp\left[\frac{s+ \| \boldsymbol\beta_\MM \|_{\mathcal{J}_n}^2}{2} \right]
  \cdot \left(\frac{s+ \| \boldsymbol\beta_\MM \|_{\mathcal{J}_n}^2}{2} \right)^{-\frac{a+p_\MM}{2}} 
\propto  \left( \| \boldsymbol\beta_\MM \|_{\mathcal{J}_n}^2 \right) ^{-\frac{a + p_\MM}{2}}.
\]
Therefore, as $\| \boldsymbol\beta_\MM \|_{\mathcal{J}_n}$ increases, 
or equivalently, as $\| \boldsymbol\beta_\MM \|$ increases, both the lower bound and upper bound of
$p(\boldsymbol\beta_\MM \mid \MM)$ are proportional to 
$\left( \| \boldsymbol\beta_\MM \|_{\mathcal{J}_n}^2 \right) ^{-\frac{a + p_\MM}{2}}$.
\end{proof}

%%%%%%%%%%%%%%%%%%%%%%%%%%%%%%%%%%%%%%%%%%%%%%%%%%
\subsection{Special Functions: Definition and Useful Properties} \label{subsection:special_functions}

We first review a list of special functions, including their definitions and relevant properties,
that will be needed in the proof of Proposition \ref{proposition:normal_marlik}.
\begin{itemize}[leftmargin=*]
\item Confluent hypergeometric function \citep[eq 13.2.1]{Abramowitz_Stegun_1970}: for $\gamma > \alpha > 0$,
\[
_{1}F_{1}(\alpha, \gamma, x) 
=	\frac{1}{\text{B}(\gamma - \alpha, \alpha)}\int_0^1 u^{\alpha-1} 
	(1-u)^{\gamma - \alpha - 1} e^{xu} ~du.
 \]
  \begin{itemize}
  \item By \citep[eq 13.2.27]{Abramowitz_Stegun_1970}: $_{1}F_{1}(\alpha, \gamma, x) = e^x \cdot {_1}F_{1}(\gamma- \alpha, \gamma, -x)$.
  \item By \citep[eq 6.5.12]{Abramowitz_Stegun_1970}, the incomplete Gamma function: 
  \[\gamma(a, s) = \int_0^{s} t^{a-1} e^{-t}dt =  {_1}F_{1}(a, a+1, -s) \frac{s^a}{a}.\]
  \item ${_1}F_{1}(\alpha, \gamma, 0) = 1$.
  \end{itemize}

\item Confluent hypergeometric function of two variables \citep{Gordy_1998a}\footnote
{{Note: the definition in \citet{Gordy_1998a} is slightly different from that in 
\citet{Gradshteyn_Ryzhik_2007}.
}}: for $\gamma > \alpha > 0$ and $y < 1$,
\[
\Phi_1(\alpha, \beta,\gamma, x, y) 
= 	\frac{1}{\text{B}(\gamma - \alpha, \alpha)}
	\int_0^1 u^{\alpha -1} (1-u)^{\gamma - \alpha - 1} (1 - yu)^{-\beta} e^{xu} ~du,
\]
Special cases:
  \begin{itemize}
  \item If $x = 0$, then $\Phi_1(\alpha, \beta,\gamma, 0, y) =~ _{2}F_1(\beta, \alpha; \gamma; y)$.
  \item If $\beta = 0$ or $y = 0$, then $\Phi_1(\alpha, 0,\gamma, x, y) 
  = \Phi_1(\alpha, \beta,\gamma, x, 0) = \Phi_1(\alpha, 0,\gamma, x, 0) =~ _{1}F_1(\alpha, \gamma, x)$.
  \item If $x = 0$ and $y=0$, then $\Phi_1(\alpha, \beta,\gamma, 0, 0) = 1$.
  \end{itemize}

\item Hypergeometric function \citep[eq 15.3.1]{Abramowitz_Stegun_1970}: for $\gamma > \alpha > 0$
\[
_{2}F_1(\beta, \alpha; \gamma; x) = \frac{1}{\text{B}(\gamma - \alpha, \alpha)}
	\int_0^1 u^{\alpha -1} (1-u)^{\gamma - \alpha - 1} (1 - xu)^{-\beta}~ du.
\]
\begin{itemize}
\item By \citep[eq 15.3.3]{Abramowitz_Stegun_1970}: in the definition
of $_{2} F_1$ function above, let $w = \frac{1-u}{1-xu}$, then
\begin{equation*}\label{eq:2F1_trans1}
_{2}F_1(\beta, \alpha; \gamma; x) = (1-x)^{\gamma - \beta - \alpha}~ _{2}F_1(\gamma-\beta, \gamma -\alpha; \gamma; x)
\end{equation*}
\item ${_2}F_1(0, \alpha; \gamma, x) = {_2}F_1(\beta, \alpha; \gamma, 0) = 1$
\item ${_2}F_1(\beta, 1; \beta, x) = (1-x)^{-1}{_2}F_1(0, \beta - 1; \beta, x) = (1-x)^{-1}$
\item By \citep[eq 15.3.4]{Abramowitz_Stegun_1970}: $_{2}F_1(\beta, \alpha; \gamma; x) = (1-x)^{-\beta} {_2}F_1\left(\beta, \gamma - \alpha; \gamma, \frac{x}{x-1}\right)$
\item By \citep[eq 15.3.5]{Abramowitz_Stegun_1970}: ${_2}F_1(\beta, \alpha; \gamma; x) = (1-x)^{-\alpha} {_2}F_1\left(\alpha, \gamma - \beta; \gamma, \frac{x}{x-1}\right)$
\end{itemize}

\item Hypergeometric function of two variables (Appell function) \citep{Weisstein_2009}: for $\gamma > \alpha > 0$,
\[
F_1(\alpha; \beta, \beta'; \gamma; x, y) = \frac{1}{\text{B}(\gamma - \alpha, \alpha)}
	\int_0^1 u^{\alpha -1} (1-u)^{\gamma - \alpha - 1} (1 - xu)^{-\beta}(1 - yu)^{-\beta'} ~ du.
\]

\end{itemize}

%%%%%%%%%%%%%%%%%%%%%%%%%%%%%%%%%%%%%%%%%%%%%%%%%%
\subsection{Proof of Proposition \ref{proposition:normal_marlik}} \label{PROOFproposition:normal_marlik}

\begin{proof}
To begin we establish that the marginal likelihood conditional on $g$
is well defined under the $g$-prior when the design matrix is not full
rank for a general linear model.  We will 
assume the inner product space defined by the vector space $\bbR^n$
equipped with inner product $\uv^T\W\vv$ for two vectors
$\uv, \vv \in \bbR^n$ where $\W$ is a real, $n \times n$ symmetric
positive definite matrix. Similarly, $\|\uv\|^2_\W \equiv \uv^T\W \uv$.

For the model
  $$\Y = \one_n \beta_0 + \X_\MM\b_\MM + \e, \quad \text{with } 
   \e \mid \phi \sim \N(\zero_n, \phi^{-1}\W^{-1}),$$
  let $\P_\one = \one_n(\one_n^T\W \one_n)^{-1} \one_n^T\W$ denote the
  orthogonal projection onto the column space of  $\one_n$
  and without loss of generality
  reparameterize the model
  $$\Y = \one_n \alpha + \X^c_\MM\b_\MM + \e $$
  where $\X^c_\MM = (\I_n - \P_{\one_n}) \X_\MM$ and $\alpha \equiv \beta_0 -
  (\one_n^T\W \one_n)^{-1} \one_n^T\W \X_\MM \b_\MM$.   Adopting the $g$-prior of the form
  $$\b_\MM \mid \alpha,\phi, g \sim \N\left(0, \frac{g}{\phi}(\X^{cT}_\MM\W
  \X_\MM^c)^{-}\right),$$ where $(\X_\MM^{cT}\W
  \X_\MM^c)^{-}$ is any generalized inverse, standard normal theory for the 
 linear  combination   $\X_\MM^c\b_\MM + \e$ can be used to show that
 $\Y$ is equal in distribution 
\begin{equation} \label{eq:marg-GLM-ginv}
  \Y \mid \alpha, \phi, g, \MM \sim \N\left(\one_n \alpha, \phi^{-1}
  (\I_n + g \P_{\X_\MM^c}) \W^{-1}\right)
 \end{equation}
  where $\P_{\X_\MM^c} = \X_\MM^c(\X_\MM^{cT}\W  \X_\MM^c)^{-}\X_\MM^{cT} \W$ is the
  $\rank_\MM \le p_\MM$
  orthogonal projection onto the column space $\X_\MM^c$ in the inner
  product space.  
  As the projection $\P_{\X^c_\MM}$ does not depend on
  the choice of 
  generalized inverse, this establishes that the marginal likelihood
  for the model will not depend on the choice of generalized inverse
  employed in defining the $g$-prior.

  Continuing with integration with respect to $\alpha, \phi$ under the independent
  Jeffreys prior $p(\alpha, \phi) \propto \phi^{-1}$,
  \begin{equation} \label{eq:marg-general-linear-model}
    p(\Y \mid g, \MM)  = \iint (2 \pi)^{-\frac{n}{2}} |\I_n + g
    \P_{\X^c_\MM}|^{-\frac{1}{2}} |\W|^{\frac{1}{2}}\phi^{\frac{n}{2} -1} e^{-\frac{\phi}{2}\left\{
    ( \Y - \one_n\alpha)^T\W(\I_n - \frac{g}{1 + g} \P_{\X^c_\MM}) (\Y - \one_n\alpha)
                        \right\} } d\alpha \, d\phi
\end{equation}                        
rearrangement of terms can be used to show that
\[
  p(\Y \mid g, \MM)   =  p(\Y \mid \Mnull) (1 + g)^{\frac{n-\rank_\MM-1}{2}}
 \left\{
        1 + g (1 - R^2_\MM)
                     \right\}^{-\frac{n-1}{2}}  
\]
where $R^2_\MM$ is defined in \eqref{eq:R2-GLM} and 
\[
p(\Y \mid \Mnull)  = (2 \pi)^{-\frac{n-1}{2}} \Gamma\left(
                   \frac{n-1}{2} \right)
                   |\W|^{\frac{1}{2}} (\one_n^T\W \one_n)^{-\frac{1}{2}} 
                   \left[ \frac{\|( \I_n -
                   \P_{\one_n}) \Y \|^2_\W}{2}\right]^{-\frac{n-1}{2}}
\]
is the marginal under the null model.  
Note that in \eqref{eq:marg-general-linear-model}, the determinant
$|\I_n + g\P_{\X^c_\MM}| = (1 + g)^{\rank_\MM}$, because the eigenvalues
of the orthogonal projection $\P_{\X^c_\MM}$ are one with a multiplicity of $\rank_\MM$ and 
zero with a multiplicity of $p_\MM - \rank_\MM$.
The Bayes Factor for comparing $\MM$ to $\Mnull$ is thus
$$
\BF[\MM, \Mnull] = (1 + g)^{\frac{n-\rank_\MM-1}{2}}
 \left\{
        1 + g (1 - R^2_\MM)
                     \right\}^{-\frac{n-1}{2}},    
$$
which will be one for any model $\MM$ where $R^2_\MM = 1$ and
$\rank_\MM= n-1$.

For simplicity in the rest of  proof,  we omit the subscript  $\MM$
when there is no ambiguity. 
We now show part (1). 
In the tCCH distribution, if $r=0$ or $\kappa = 1$, then 
\[
\Phi_1\left(\frac{b}{2}, r,  \frac{a+b}{2}, \frac{s}{2v}, 1 - \kappa\right) 
= \Phi_1\left(\frac{b}{2}, 0, \frac{a+b}{2}, \frac{s}{2v}, 0\right) 
= \ _{1}F_1\left(\frac{b}{2}, \frac{a+b}{2}, \frac{s}{2v} \right).
\]
Then the marginal likelihood becomes
\begin{align*}
&p(\mathbf{Y} \mid \MM) 
	= \frac{p(\mathbf{Y} \mid \Mnull)~ v^{\frac{a}{2}} \exp\left( \frac{s}{2v} \right)}
	{B\left(\frac{a}{2}, \frac{b}{2}\right)
	\ _{1}F_1\left(\frac{b}{2}, \frac{a+b}{2}, \frac{s}{2v} \right)} 
	\int_0^{1/v} 
	\frac{u^{\frac{a + \rank}{2}-1}(1-vu)^{\frac{b}{2}-1}e^{-\frac{su}{2}}}
	{\left[(1-R^2) + R^2 u\right]^{\frac{n-1}{2}}} ~du \\
=~& \frac{p(\mathbf{Y} \mid \Mnull) ~ v^{\frac{a}{2}} \exp\left( \frac{s}{2v} \right)}
	{B\left(\frac{a}{2}, \frac{b}{2}\right)
	\ _{1}F_1\left(\frac{b}{2}, \frac{a+b}{2}, \frac{s}{2v} \right)} 
	\int_0^{1/v} 
	\frac{u^{\frac{a + \rank}{2}-1}(1-vu)^{\frac{b}{2}-1}e^{-\frac{su}{2}}}
	{\left\{ \left[ 1 -\left( 1 - \frac{1}{v}\right) R^2 \right]
	\left[ \frac{1-R^2}{1 -\left( 1 - \frac{1}{v}\right) R^2} + \frac{R^2/v}
	{1 -\left( 1 - \frac{1}{v}\right) R^2} \cdot (vu)
	 \right]\right\}^{\frac{n-1}{2}}} ~du \\
=~& \frac{p(\mathbf{Y} \mid \Mnull) ~ v^{\frac{a}{2}} \exp\left( \frac{s}{2v} \right)}
	{B\left(\frac{a}{2}, \frac{b}{2}\right)
	\ _{1}F_1\left(\frac{b}{2}, \frac{a+b}{2}, \frac{s}{2v} \right)} 
	\cdot
	\frac{\text{B}\left(\frac{a + \rank}{2}, \frac{b}{2}\right)\ 
	\Phi_1\left(\frac{b}{2}, \frac{n-1}{2}, \frac{a + b + \rank}{2}, \frac{s}{2v}, 
	\frac{R^2/v}{1 -\left( 1 - \frac{1}{v}\right) R^2}\right)}
	{v^{\frac{a + \rank}{2}} \exp\left( \frac{s}{2v} \right)
	\left[ 1 -\left( 1 - \frac{1}{v}\right) R^2 \right]^{\frac{n-1}{2}}} \\
=~& p(\mathbf{Y} \mid \Mnull) \cdot 
	\frac{B\left(\frac{a + \rank}{2}, \frac{b}{2}\right)\ 
	\Phi_1\left(\frac{b}{2}, \frac{n-1}{2}, \frac{a + b + \rank}{2}, \frac{s}{2v}, 
	\frac{R^2/v}{1 -\left( 1 - \frac{1}{v}\right) R^2}\right)}
	{v^{\frac{\rank}{2}} \left[ 1 -\left( 1 - \frac{1}{v}\right) R^2 \right]^{\frac{n-1}{2}}
	B\left(\frac{a}{2}, \frac{b}{2}\right)
	\ _{1}F_1\left(\frac{b}{2}, \frac{a+b}{2}, \frac{s}{2v} \right)}.
\end{align*}
Here the second last equality is given by the propriety of the tCCH density function \eqref{eq:CCH_dist}.

Then we show part (2). In the tCCH distribution, when $s=0$, then
\[
\Phi_1\left(\frac{b}{2}, r, \frac{a + b}{2}, 0, 1 - \kappa\right) 
= \ _{2}F_1\left(r, \frac{b}{2}; \frac{a + b}{2}; 1-\kappa\right).
\]
Hence, the marginal likelihood becomes
\begin{equation}\label{eq:normal_marlik_robust2}
p(\mathbf{Y} \mid \MM) 
	= \frac{p(\mathbf{Y} \mid \Mnull)~ v^{\frac{a}{2}} }
	{B\left(\frac{a}{2}, \frac{b}{2}\right)\ 
	_{2}F_1\left(r, \frac{b}{2}; \frac{a + b}{2}; 1-\kappa\right)} 
	\int_0^{1/v} \frac{u^{\frac{a + \rank}{2}-1}(1-vu)^{\frac{b}{2}-1}}
	{\left[(1-R^2) + R^2 u\right]^{\frac{n-1}{2}}\left[ \kappa +
            (1-\kappa) v u \right]^r} ~du 
\end{equation}
For simplification, we denote $x = 1 - 1/\kappa$
and $w = 1 - (1 - vu)/(1 - xvu)$. 
By change of variable,
\[
 u = \frac{w}{v(1-x+xw)}, \quad
\frac{du}{dw} = \frac{1-x}{v (1-x+xw)^2},
\] 
and the integral in \eqref{eq:normal_marlik_robust2} is
\begin{align*}
&	\int_0^{1/v} \frac{u^{\frac{a + \rank}{2}-1}(1-vu)^{\frac{b}{2}-1}}
	{\left[(1-R^2) + R^2 u\right]^{\frac{n-1}{2}}\left[ \kappa + (1-\kappa)vu \right]^r} ~du \\
=~& \int_0^{1} \frac{\left[ \frac{w}{v(1-x+xw)} \right]^{\frac{a + \rank}{2}-1}
	\left[ \frac{(1-x)(1-w)}{1-x+xw} \right]^{\frac{b}{2}-1}\frac{1-x}{v (1-x+xw)^2}}
	{\left\{\frac{(1-R^2)v(1-x) + [(1-R^2)vx + R^2]w}{v(1-x+xw)}\right\}^{\frac{n-1}{2}}
	\left( \frac{1}{1-x+xw} \right)^r} ~dw \\
=~& \frac{(1-x)^\frac{b}{2} ~ v^{\frac{n-1-a -\rank}{2}}}
	{\left[ (1-R^2)v(1-x) \right]^{\frac{n-1}{2}} ~ (1-x)^{\frac{a + b + \rank + 1 - n-2r}{2}}}
	\int_0^{1} \frac{w^{\frac{a + \rank}{2}-1} (1-w)^{\frac{b}{2}-1}}
	{\left[ 1 - \frac{(1-R^2)vx + R^2}{(1-R^2)v(x-1)}w \right]^{\frac{n-1}{2}} 
	\left(1 - \frac{x}{x-1}w \right)^{\frac{a + b + \rank + 1 - n-2r}{2}} }~dw\\
=~&  \frac{\kappa^{\frac{a + \rank - 2r}{2}}~ v^{-\frac{a + \rank}{2}}}{(1-R^2)^{\frac{n-1}{2}} }
	B\left(\frac{a + \rank}{2}, \frac{b}{2}\right) \cdot\\
&	F_1\left( \frac{a + \rank}{2}; \frac{a + b + \rank + 1 - n-2r}{2}, \frac{n-1}{2}; \frac{a + b+ \rank}{2}; 
	1-\kappa, \frac{(1-R^2)v(1-\kappa) - R^2\kappa}{(1-R^2)v} \right).
\end{align*}

\end{proof}

%%%%%%%%%%%%%%%%%%%%%%%%%%%%%%%%%%%%%%%%%%%%%%%%%%
\subsection{Derivation of \eqref{eq:de_marlik}}\label{PROOFeq:de_marlik}

\begin{proof}
Similar to \eqref{eq:marlik_fixedg}, we apply integrated Laplace approximation  to obtain 
$p (\mathbf{Y} \mid \phi, \MM, g)$, then marginalize $\phi$ out as follows.
\begin{align*}
&p (\mathbf{Y} \mid \MM, g) 
 	= \int_0^{\infty } p (\mathbf{Y} \mid \phi, \MM, g) p(\phi) d\phi\\
\propto~&  \int_0^{\infty } p (\mathbf{Y} \mid \hat{\alpha}_\MM, \hat{\boldsymbol\beta}_\MM, \phi, \MM)
  	\left[ \phi\mathcal{J}_n(\hat{\alpha}_\MM) \right]^{-\frac{1}{2}}
  	(1+g)^{-\frac{p_{\MM}}{2}} e^{ -\frac{\phi Q_{\MM}}{2(1+g)} } \phi^{-1}d\phi\\
\propto~& \left[ \mathcal{J}_n(\hat{\alpha}_\MM) \right]^{-\frac{1}{2}} (1+g)^{-\frac{p_{\MM}}{2}}
  	\int_0^{\infty } \phi^{\frac{n-1}{2}-1} 
   	e^{ \phi \left\{ -\frac{ Q_{\MM}}{2(1+g)} + \sum_{i=1}^n \left[
	Y_i (\hat{\theta}_i - t_i) - b(\hat{\theta}_i) + b(t_i)
	\right] \right\} }d\phi\\
\propto~& \left[ \mathcal{J}_n(\hat{\alpha}_\MM) \right]^{-\frac{1}{2}} (1+g)^{-\frac{p_{\MM}}{2}}
  	\left\{ \frac{ Q_{\MM}}{2(1+g)} - \sum_{i=1}^n \left[
	Y_i (\hat{\theta}_i - t_i) - b(\hat{\theta}_i) + b(t_i)
	\right] \right\}^{-\frac{n-1}{2}} \\
\propto~& \frac{\left[ \mathcal{J}_n(\hat{\alpha}_\MM) \right]^{-\frac{1}{2}} u^{\frac{p_{\MM}}{2}}}
  	{\left\{ u Q_{\MM} + 2\sum_{i=1}^n \left[
	Y_i (t_i - \hat{\theta}_i) - b(t_i) + b(\hat{\theta}_i) 
	\right] \right\}^{\frac{n-1}{2}} }.
\end{align*}
Here, the last step replaces $g$ with $u = 1/(1+g)$.
\end{proof}

%%%%%%%%%%%%%%%%%%%%%%%%%%%%%%%%%%%%%%%%%%%%%%%%%%
\subsection{Proof of Model Selection Consistency}\label{PROOFth:selection_CHg}

We first show a lemma about a non-central $\chi^2$ distribution, 
which is useful to prove some of the following lemmas and theorems.
Here the symbol $\chi^2_k(m)$ denotes a non-central $\chi^2$ distribution 
with degrees of freedom $k$ and non-centrality parameter $m$.

%%%%% Convergence lemma 1: non-central chi-squared %%%%%
\begin{lemma}\label{lemma:non-central_chi-squared}
If a sequence of random variables $\{X_n: n = 1, 2, \ldots\}$  have independent
non-central $\chi^2$ distributions:
$X_n \sim \chi^2_k(nA_n)$, where random variables 
$A_n \stackrel{\text{D}}{\longrightarrow} a_0\in \mathbb{R}^+\cup\{0\}$, 
then as $n\longrightarrow\infty$,
$X_n/n \stackrel{\text{P}}{\longrightarrow} a_0$.
\end{lemma}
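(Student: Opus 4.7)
The plan is to reduce the problem to elementary convergences via a stochastic representation, exploiting that convergence in probability to a constant is equivalent to convergence in distribution. Since the statement concerns only the marginal distributions of $X_n/n$, I would, without loss of generality, realize the $X_n$'s on a common probability space as
\[
X_n = \bigl(Z + \sqrt{n A_n}\bigr)^2 + V,
\]
where $Z \sim \text{N}(0,1)$, $V \sim \chi^2_{k-1}$, and $Z$, $V$, $A_n$ are mutually independent. Conditioning on $A_n$ recovers the prescribed $\chi^2_k(nA_n)$ marginal, since the squared non-central normal $(Z+\sqrt{nA_n})^2$ contributes one degree of freedom with non-centrality $nA_n$ and the independent $\chi^2_{k-1}$ contributes the remaining $k-1$ degrees of freedom.

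Expanding the square gives
\[
\frac{X_n}{n} = \frac{Z^2}{n} + 2 Z \sqrt{\frac{A_n}{n}} + A_n + \frac{V}{n}.
\]
Next I would handle each of the four summands. The terms $Z^2/n$ and $V/n$ tend to $0$ almost surely. Because $a_0$ is deterministic, $A_n \stackrel{d}{\longrightarrow} a_0$ automatically upgrades to $A_n \stackrel{P}{\longrightarrow} a_0$; in particular the sequence $\{A_n\}$ is tight, so $\sqrt{A_n/n} \stackrel{P}{\longrightarrow} 0$, and hence $2Z\sqrt{A_n/n} \stackrel{P}{\longrightarrow} 0$. Assembling the four pieces via Slutsky's theorem then yields $X_n/n \stackrel{P}{\longrightarrow} a_0$.

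The main subtlety is the permissibility of this particular joint construction. It is allowed precisely because convergence in probability to a deterministic limit coincides with convergence in distribution and therefore depends only on the marginal laws of $X_n/n$. If one preferred to work with an arbitrary, externally prescribed coupling of $X_n$ and $A_n$, an equivalent argument would proceed by a conditional Chebyshev bound: using $\mathbb{E}(X_n \mid A_n) = k + nA_n$ and $\mathbb{V}(X_n \mid A_n) = 2k + 4 n A_n$, one would first truncate $A_n$ at a large level $M$ (justified by tightness) to control $P(|X_n/n - A_n| > \epsilon)$ by a deterministic $O(1/n)$ term plus an arbitrarily small tail, and then combine with $A_n \stackrel{P}{\longrightarrow} a_0$ via the triangle inequality to reach the same conclusion.
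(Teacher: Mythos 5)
Your proof is correct, but it takes a genuinely different route from the paper's. The paper works with characteristic functions: conditioning on $A_n$ it writes $\phi_{X_n/n}(t) = (1-2it/n)^{-k/2}\,\mathbb{E}\bigl[\exp\{itA_n/(1-2it/n)\}\bigr]$, then passes to the limit in the expectation via the Portmanteau lemma (the integrand being a bounded continuous function of $A_n$, which converges in probability to $a_0$), so that $\phi_{X_n/n}(t)\to e^{ita_0}$ pointwise; convergence in distribution to the constant $a_0$, hence in probability, follows. You instead exploit the fact that convergence in probability to a constant is a purely distributional statement, re-realize each $X_n$ through the representation $(Z+\sqrt{nA_n})^2+V$ with $Z\sim\mathrm{N}(0,1)$ and $V\sim\chi^2_{k-1}$ independent of $A_n$, and finish by an elementary term-by-term decomposition plus Slutsky; your remark justifying the coupling is precisely what makes this legitimate, and your fallback argument via the conditional moments $\mathbb{E}(X_n\mid A_n)=k+nA_n$ and $\mathbb{V}(X_n\mid A_n)=2k+4nA_n$ with truncation of $A_n$ gives the same conclusion for an arbitrary prescribed joint law of $(X_n,A_n)$ without any re-coupling. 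Comparing the two: the paper's characteristic-function route needs no structural representation, works for any $k>0$, and matches the style of the other consistency proofs in the appendix (e.g.\ the shrinkage-factor argument), while yours is more elementary and makes the mechanism transparent, namely that $X_n/n$ equals $A_n$ up to noise of order $O_P(n^{-1/2})$. The only caveat in your main construction is that writing a non-central $\chi^2_k$ as one shifted squared normal plus an independent central $\chi^2_{k-1}$ requires $k\ge 1$; this is harmless since the lemma is only invoked with positive integer degrees of freedom, and your Chebyshev variant needs no such restriction.
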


\begin{proof}
For any $n \in \mathbb{N}$,
the characteristic function of $X_n / n$ evaluated at $t \in \mathbb{R}$ is
\begin{align*}
\phi_{X_n/n}(t) & = \mathbb{E}\left( e^{itX_n / n} \right)
	= \mathbb{E}_{A_n}\left[ \mathbb{E}\left( e^{itX_n / n} \mid A_n \right) \right]\\
& =	\mathbb{E}_{A_n}\left[ \exp\left(\frac{itA_n}{1 - 2it/n}\right)(1 - 2it/n)^{-\frac{k}{2}} \right]
   = (1 - 2it/n)^{-\frac{k}{2}} \cdot \mathbb{E}_{A_n}\left[ \exp\left(\frac{itA_n}{1 - 2it/n}\right) \right].
\end{align*}
Denote a complex valued random variable $B_n = A_n / (1 - 2it/n)$.
Since the limit of $A_n$ is a constant, for the series $\{A_n: n \in \mathbb{N}\}$,
convergence in distribution is equivalent to convergence in probability. 
Because of the continuous mapping theorem,
$B_n \stackrel{\text{P}}{\longrightarrow} a_0$, or equivalently, convergence in distribution.
Denote the bounded and continuous function $h(B_n) = \exp\left(itB_n\right)$,
then according to Portmanteau lemma,
$\mathbb{E}\left[ h(B_n) \right] {\longrightarrow} \mathbb{E}\left[ h(a_0) \right] = h(a_0)$.
So for any $t\in \mathbb{R}$,
\[
\lim_{n\rightarrow \infty} \phi_{X_n/n}(t) = \lim_{n\rightarrow \infty}(1 - 2it/n)^{-k/2} \cdot
 \lim_{n\rightarrow \infty}\mathbb{E}\left[ h(B_n) \right] = h(a_0) = \exp\left(ita_0\right),
\]
where the limit is the characteristic function of a degenerated distribution at $a_0$.
Therefore, $X_n/n$ converge in distribution to a constant $a_0$,
which implies convergence in probability.

\end{proof}

%%%%%%%%%%%%%%%%%%%%%%%%%%%%%%%%%%%%%%%%%%%%%%%%%%
In order to show the asymptotic performance of the Bayes factor $\BF_{\MT:\MM}$,
we first study asymptotic behaviors of the terms in the Bayes factors in the following lemmas. 
%%%%%%%%%%%%%%%%%%%%
%% Lemma: Lambda
%%%%%%%%%%%%%%%%%%%%
When testing nested models, 
the log likelihood ratio between $\MT$ and $\MM$ converges in distribution to 
a central (non-central) $\chi^2$ distribution, when the smaller (larger) model is true. 
The following lemma studies asymptotic behaviors of the likelihood ratio,
which does not require 
models $\MM$ and $\MT$ to be nested.

%%%%%%%%%% begin: lemma %%%%%%%%%%%%%
\begin{lemma}%[Asymptotic behavior of $\Lambda_{\MT: \MM}$]
\label{lemma:Lambda}
Denote the the likelihood ratio by
\begin{equation}\label{eq:Lambda}
\Lambda_{\MT: \MM}
  \stackrel{\triangle}{=} 
  \frac{p(\mathbf{Y}| \hat{\alpha}_{\MT}, \hat{\boldsymbol\beta}_{\MT}, \MT)}
              {p(\mathbf{Y}| \hat{\alpha}_{\MM}, \hat{\boldsymbol\beta}_{\MM}, \MM)}
              = \exp\left( \frac{z_\MT - z_\MM}{2} \right)
\end{equation}
As the sample size $n$ increases,
\begin{enumerate}[label = \arabic*)]
\item if $\MT \subset \MM$, then $\Lambda_{\MT: \MM} = O_P(1)$.
\item if $\MT \not\subset \MM$, then $\Lambda_{\MT: \MM} = O_P\left(e^{c_\MM n}\right)$, 
where $c_\MM$ is a positive constant.
\end{enumerate}
%In addition, under the flat prior $p(\alpha) \propto 1$, these properties also hold.
\end{lemma}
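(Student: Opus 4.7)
The plan is to treat the two cases separately, with both relying on the MLE consistency and weak LLN results that already appear in the proof of Lemma \ref{lemma:order_Jalpha_Jbeta}. In both cases the key identity is
\[
\log \Lambda_{\MT:\MM} = \sum_{i=1}^n \left[ \log p(Y_i \mid \hat{\eta}_{\MT, i}) - \log p(Y_i \mid \hat{\eta}_{\MM, i}) \right],
\]
which expresses the log-likelihood ratio as a sum of per-observation log-ratios evaluated at MLEs.

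\emph{Case 1 ($\MT \subset \MM$).} Here $\MM$ is correctly specified. Since its parameter space contains that of $\MT$, trivially $\Lambda_{\MT:\MM} \leq 1$, so $\log \Lambda_{\MT:\MM}$ is bounded above. The lower bound follows from the classical Wilks theorem: under $\MT$ and the regularity conditions in Appendix \ref{section:assumptions},
\[
-2\log \Lambda_{\MT:\MM} = z_\MM - z_\MT \stackrel{d}{\longrightarrow} \chi^2_{p_\MM - p_\MT},
\]
so $-2 \log \Lambda_{\MT:\MM} = O_P(1)$ and therefore $\Lambda_{\MT:\MM} = O_P(1)$.

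\emph{Case 2 ($\MT \not\subset \MM$).} Now $\MM$ is misspecified. I will show $n^{-1}\log\Lambda_{\MT:\MM} \stackrel{\text{P}}{\longrightarrow} c_\MM$ for some constant $c_\MM > 0$. From the argument leading to \eqref{eq:asym_MLE_MT} extended to the misspecified setting, $(\hat{\alpha}_\MM, \hat{\b}_\MM) \stackrel{\text{P}}{\longrightarrow} (\alpha^*_\MM, \b^*_\MM)$, where the latter is the pseudo-true value (the KL-minimizer within $\MM$), while $(\hat{\alpha}_\MT, \hat{\b}_\MT) \stackrel{\text{P}}{\longrightarrow} (\alpha^*_\MT, \b^*_\MT)$. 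Continuity of $\log p(Y \mid \eta)$ in $\eta$ and a uniform-in-compact argument, together with the smoothness assumptions on $b(\cdot)$ and $\theta(\cdot)$, permit replacing MLEs inside the sum by their limits with an $o_P(n)$ error. Applying the weak LLN \eqref{eq:weak_LLN} to the resulting sum of independent terms yields
\[
\frac{1}{n}\log\Lambda_{\MT:\MM} - \frac{1}{n}\sum_{i=1}^n \mathrm{KL}\bigl(p(\cdot \mid \eta^*_{\MT,i}) \,\|\, p(\cdot \mid \eta^*_{\MM,i})\bigr) \stackrel{\text{P}}{\longrightarrow} 0.
\]

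\emph{Main obstacle.} The crux is showing the limit $c_\MM$ of the average KL divergence is strictly positive. Since $\MT \not\subset \MM$, at least one column of $\X_\MT$ lies outside $C(\one_n, \X_\MM)$; combined with the identifiability of $\MT$, this forces $\boldsymbol{\eta}^*_\MT - \boldsymbol{\eta}^*_\MM$ to be a nonzero vector in $\mathbb{R}^n$. The design assumption $\lambda_{\min}(\X^T \X/n) \geq c_0 > 0$ together with boundedness of the $\mathbf{x}_i$'s then forces the fraction of indices $i$ with $|\eta^*_{\MT,i} - \eta^*_{\MM,i}|$ exceeding a fixed positive threshold to stay bounded away from zero as $n \to \infty$. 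Because distinct values of $\eta$ induce distinct exponential-family distributions (by strict monotonicity of $b'\circ \theta$), on those indices the per-observation KL divergence is bounded below by a fixed positive constant, yielding $c_\MM > 0$ and hence $\log \Lambda_{\MT:\MM} = c_\MM n + o_P(n)$, so $\Lambda_{\MT:\MM} = O_P(e^{c_\MM n})$.
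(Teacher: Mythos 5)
Your proposal is correct, and Case 1 coincides with the paper's argument (bounded above by $1$ since the models are nested, bounded below in probability by Wilks' theorem). For Case 2, however, you take a genuinely different route. The paper stays within likelihood-ratio-test asymptotics: for nested $\MM \subset \MT$ it invokes the power-calculation results of Self et al.\ (1992) and Shieh (2000) to get a non-central $\chi^2$ limit for $z_\MT - z_\MM$ with non-centrality $\Psi$, shows $\Psi/n$ tends to a positive constant via a Taylor expansion of $b(\cdot)$ with $b''>0$, and then handles non-nested $\MM$ by introducing a supermodel $\MM'$ containing both sets of predictors and writing $\Lambda_{\MT:\MM} = \Lambda_{\MT:\MM'}\cdot\Lambda_{\MM':\MM}$. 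You instead give a direct misspecified-model (White-type) argument: replace the MLEs by the pseudo-true values, apply the weak LLN to the per-observation log-ratios, and identify $c_\MM$ as the limiting average Kullback--Leibler divergence, proving its positivity from the design conditions ($\lambda_{\min}(\X^T\X/n)$ bounded below, bounded rows, $\one_n \notin C(\X)$) together with injectivity of $\eta \mapsto p(\cdot\mid\eta)$. Your route is more self-contained, treats nested and non-nested $\MM \not\supset \MT$ uniformly without the auxiliary model $\MM'$, and directly delivers $n^{-1}\log\Lambda_{\MT:\MM} \stackrel{P}{\to} c_\MM$, which is exactly what the consistency proofs downstream use; the paper's route buys the sharper distributional statement (a non-central $\chi^2$ limit) by outsourcing the hard work to cited power results. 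Two steps you should be aware are only sketched, though at a level of rigor comparable to the paper's own: the uniform replacement of $(\hat\alpha_\MM,\hat\b_\MM)$ by the pseudo-true values with $o_P(n)$ error (the paper covers the corresponding convergence by citing Self and Mauritsen 1988 and van der Vaart Theorem 5.7), and the positivity argument, which implicitly needs the smallest eigenvalue of $[\one_n,\X]^T[\one_n,\X]/n$ bounded away from zero and the nonzeroness of the true coefficients on the omitted predictors --- the analogue of the unproved claim in the paper that $\lim \Psi/n > 0$.
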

%%%%%%%%%% end: lemma %%%%%%%%%%%%%
\begin{proof}

In the first case where $\MM \supset \MT$, from the well-known results of likelihood ratio test,
$z_\MM - z_\MT$ has a central chi-square distribution 
$\chi^2_{p_{\MM} - p_\MT}$.
Therefore, the limiting distribution of the log-likelihood ratio does not depend on $n$, i.e.,
$\Lambda_{\MT: \MM} = O_P(1)$.

In the second case where $\MM \not\supset \MT$, 
we first examine the sub-case where $\MM \subset \MT$.
%Without loss of generality, we assume the space spanned by 
%the first $p_\MM$ columns of $\mathbf{X}_{\MT}$
%and $\mathbf{1}_n$ equals the space spanned by $\mathbf{X}_{\MM}$ and $\mathbf{1}_n$, i.e.,
%$\mathcal{C}(\mathbf{1}_n, \mathbf{X}_{1,\MT}, \ldots, \mathbf{X}_{p_\MM, \MT}) 
%= \mathcal{C}(\mathbf{1}_n, \mathbf{X}_\MM)$.
%For notation simplicity, we denote the parameters as $\psi_\MM = (\alpha, \boldsymbol\beta_\MM)$, 
%the log-likelihood as $l_\MM(\psi_\MM)  = \log p(\mathbf{Y} \mid \psi_\MM, \MM)$,
%and the $i$-th row of the design matrix $[\mathbf{1}_n, \mathbf{X}_\MM]$ 
%as $\mathbf{x}_{\MM,i}$.
According to the power calculation results for GLM in \citet{Self_etal_1992}
and \citet{Shieh_2000},
when testing nested models, if the larger model is true, 
then we have that $z_\MT - z_\MM$ converges in distribution 
to a non-central $\chi^2$ of degrees of freedom $p_\MT - p_{\MM}$.
The non-centrality parameter $\Psi$ is approximately
\[
\Psi  \approx \sum_{i=1}^n  b^{\prime} (\theta_{\MT, i}^*)
  \left( \theta_{i, \MT}^* - \theta_{i, \MM}^* \right) - \left[ b(\theta_{i, \MT}^*) - b(\theta_{i, \MM}^*) \right],
\]
where $\theta_{i, \MM}^* = \theta(\eta_{i, \MM}^*)$, for $i = 1, \ldots, n$.
By a Taylor expansion, there exist a $\tilde{\theta}_i$ between $\theta_{\MT, i}^*$ and $\theta_{\MM, i}^*$, such that
$b(\theta_{i, \MM}^*) = b(\theta_{i, \MT}^*) + b^{\prime} (\theta_{\MT, i}^*)
  \left( \theta_{i, \MT}^* - \theta_{i, \MM}^* \right) + b''(\tilde{\theta}_i)\left( \theta_{i, \MT}^* - \theta_{i, \MM}^* \right)^2/2$.
This combined with the assumption $b''(\cdot) > 0$ gives that 
$\lim_{n \rightarrow \infty}\Psi/n$ converges to a positive constant $c_\MM$.
Then by Lemma \ref{lemma:non-central_chi-squared},
$(z_\MT - z_\MM)/n\stackrel{P}{\longrightarrow} c_\MM$,
and hence $\Lambda_{\MT: \MM} = O_P(e^{c_\MM n})$.

In the case where $\MM$ and $\MT$ are not nested, 
we introduce a third model $\MM^{\prime}$ which includes all the predictors in both $\MM$ and $\MT$.
Using a similar method as in \citet{Self_etal_1992}, we can treat $\MM^{\prime}$ also as the true model (although with some redundant predictors) when comparing with $\MM$ and easily show that 
$ \Lambda_{\MM^{\prime} : \MM}$ also has a non-central $\chi^2$ distribution.
Hence we decompose $\Lambda_{\MT : \MM} 
= \Lambda_{\MT : \MM^{\prime}} \cdot \Lambda_{\MM^{\prime} : \MM}$.
Since both pairs $(\MT, \MM^{\prime})$ and $(\MM^{\prime} : \MM)$ 
are nested models, we can apply the previous results twice:
$ \Lambda_{\MT : \MM^{\prime}} = O_P(1)$ and 
$ \Lambda_{\MM^{\prime} : \MM}  = O_P(e^{c_\MM n})$.
Therefore, we can conclude that $\Lambda_{\MT : \MM} 
= O_P(1) \cdot O_P(e^{c_\MM n})= O_P(e^{c_\MM n})$. 
\end{proof}

%% transitive, from Lambda to Omega
The Bayes factors contain 
the Wald statistics $Q_\MT$ and $Q_\MM$.
We next study their asymptotic behaviors.

%%%%%%%%%%%%%%%%%%%%
%% Lemma: Q
%%%%%%%%%%%%%%%%%%%%

%%%%%%%%%% begin: lemma %%%%%%%%%%%%%
\begin{lemma}%[Asymptotic behavior of $Q_{\MT}$ and $Q_{\MM}$]
\label{lemma:Q}  The Wald statistic 
$Q_{\MM}  = O_P(n^{\xi_\MM})$,
where $0 \leq \xi_\MM \leq 1$. In particular,
\begin{enumerate}[label = \arabic*)]
\item If $\MT \neq \Mnull$, then for any $\MM \supset \MT$, $\xi_\MM = 1$.
\item if $\MT = \Mnull$, then for any model $\MM$, $\xi_\MM = 0$.
%and by the definition of $Q$ under the null model, $Q_{\MT} \stackrel{\text{d}}{=} O(1)$.
\end{enumerate}
\end{lemma}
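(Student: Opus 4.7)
The plan is to write $Q_\MM = \hat{\b}_\MM^T \mathcal{J}_n(\hat{\b}_\MM) \hat{\b}_\MM$ as in \eqref{eq:Q} and combine three ingredients: (i) consistency of the MLE, which guarantees $\hat{\b}_\MM \stackrel{P}{\to} \b_\MM^*$ for some limit $\b_\MM^*$ (possibly zero, as obtained from the Kullback--Leibler projection onto $\MM$, cf.\ \citet{Self_Mauritsen_1988}); (ii) the order bound on the observed information from Lemma \ref{lemma:order_Jalpha_Jbeta}, namely $\mathcal{J}_n(\hat{\b}_\MM) = O_P(n^{\tau_\MM})$ with $0\le\tau_\MM\le 1$, which moreover satisfies $\mathcal{J}_n(\hat{\b}_\MM)/n \stackrel{P}{\to} \mathcal{I}_n(\b_\MM^*)/n$ (a positive definite limit) whenever $\MM\supset\MT$; and (iii) asymptotic normality of $\hat{\b}_\MM$ via $\mathcal{I}_n(\b_\MM^*)^{1/2}(\hat{\b}_\MM - \b_\MM^*)\stackrel{d}{\to} N(\zero,\I_{p_\MM})$.

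For the general bound $Q_\MM = O_P(n^{\xi_\MM})$ with $\xi_\MM\le 1$, I would simply use a Cauchy--Schwarz style bound: since $\hat{\b}_\MM$ is bounded in probability (by consistency) and $\mathcal{J}_n(\hat{\b}_\MM) = O_P(n^{\tau_\MM})$ by Lemma \ref{lemma:order_Jalpha_Jbeta} with $\tau_\MM\le 1$, the quadratic form $Q_\MM$ is at most $O_P(n^{\tau_\MM})$. The lower bound $\xi_\MM\ge 0$ is immediate because $\mathcal{J}_n(\hat{\b}_\MM)$ is positive semi-definite, hence $Q_\MM\ge 0$.

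For case (1), suppose $\MT\neq\Mnull$ and $\MM\supset\MT$. By MLE consistency $\hat{\b}_\MM \stackrel{P}{\to} \b_\MM^* = \b_\MT^*\neq\zero$ (with zeros filled in for coordinates outside $\MT$). By Lemma \ref{lemma:order_Jalpha_Jbeta}, $n^{-1}\mathcal{J}_n(\hat{\b}_\MM)$ converges in probability to a positive definite matrix $\mathbf{H}_\MM$. Applying Slutsky's theorem to
\begin{equation*}
\frac{Q_\MM}{n} = \hat{\b}_\MM^T\!\left[\frac{\mathcal{J}_n(\hat{\b}_\MM)}{n}\right]\!\hat{\b}_\MM \stackrel{P}{\longrightarrow} (\b_\MT^*)^T \mathbf{H}_\MM \,\b_\MT^* > 0,
\end{equation*}
establishes $Q_\MM = O_P(n)$ and, crucially, that $Q_\MM$ is bounded away from zero in probability, so $\xi_\MM = 1$. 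For case (2), suppose $\MT=\Mnull$. Then every model contains $\MT$, so $\b_\MM^* = \zero$ for all $\MM$, and the usual Wald limit theory yields
\begin{equation*}
Q_\MM \;=\; \bigl\|\mathcal{J}_n(\hat{\b}_\MM)^{1/2}\hat{\b}_\MM\bigr\|^2 \;\stackrel{d}{\longrightarrow}\; \chi^2_{p_\MM},
\end{equation*}
so $Q_\MM = O_P(1)$, i.e., $\xi_\MM = 0$.

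The main technical obstacle is verifying that the asymptotic normality of $\hat{\b}_\MM$ centered at $\b_\MM^*$ survives the substitution of $\mathcal{J}_n(\hat{\b}_\MM)$ for $\mathcal{I}_n(\b_\MM^*)$ in the quadratic form; this is handled by the second part of Lemma \ref{lemma:order_Jalpha_Jbeta} which provides $n^{-1}[\mathcal{J}_n(\hat{\b}_\MM) - \mathcal{I}_n(\b_\MM^*)] \stackrel{P}{\to} \zero$ on the set $\MM\supset\MT$, combined with Slutsky. A minor bookkeeping issue is that in case (2), when $\MM\not\supset\MT = \Mnull$ is vacuous, so every $\MM$ satisfies the regularity conditions underlying classical MLE theory; thus the $\chi^2_{p_\MM}$ limit is uniform across $\MM$.
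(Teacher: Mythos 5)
Your proof is correct and follows the same skeleton as the paper's: split on whether $\MM \supset \MT$, use consistency of $\hat{\b}_\MM$ toward its (KL-projection) limit $\b_\MM^*$, and control $\mathcal{J}_n(\hat{\b}_\MM)$ through Lemma \ref{lemma:order_Jalpha_Jbeta}. The one genuine difference is in case (1): the paper first shows $\mathcal{J}_n(\hat{\b}_\MM)^{1/2}(\hat{\b}_\MM-\b_\MM^*)\stackrel{d}{\to}\text{N}(\zero,\I_{p_\MM})$, identifies the limit law of $Q_\MM$ as a non-central $\chi^2_{p_\MM}$ with non-centrality $\b_\MM^{*T}\mathcal{I}_n(\b_\MM^*)\b_\MM^*=O(n)$, and then invokes its Lemma \ref{lemma:non-central_chi-squared} (on $\chi^2_k(nA_n)$ with random $A_n$) to conclude $Q_\MM/n$ converges in probability; you bypass that machinery entirely by applying Slutsky/continuous mapping directly to $Q_\MM/n=\hat{\b}_\MM^T[\mathcal{J}_n(\hat{\b}_\MM)/n]\hat{\b}_\MM$, which is more elementary, while the paper's route additionally delivers the limiting distribution of $Q_\MM$ (not needed for this lemma). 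One small bookkeeping caveat: Lemma \ref{lemma:order_Jalpha_Jbeta} and its proof give only that $\mathcal{J}_n(\hat{\b}_\MM)/n$ is bounded in probability, asymptotically agrees with $\mathcal{I}_n(\hat{\b}_\MM)/n$, and stays positive definite (quadratic forms bounded away from zero); since the design is a fixed triangular array, convergence to a single limit matrix $\mathbf{H}_\MM$ is not actually established. Your conclusion survives verbatim if you phrase it as ``$Q_\MM/n$ is bounded in probability and bounded away from zero in probability,'' which is all that $\xi_\MM=1$ requires. Your handling of the general bound for $\MM\not\supset\MT$ (boundedness of $\hat{\b}_\MM$ times $\mathcal{J}_n(\hat{\b}_\MM)=O_P(n^{\tau_\MM})$) and of case (2) (central $\chi^2_{p_\MM}$ limit, with the same justification as the paper for replacing $\mathcal{I}_n$ by $\mathcal{J}_n$) matches the paper's argument.
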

%%%%%%%%%% end: lemma %%%%%%%%%%%%%
\begin{proof}
%We will show the asymptotic results in two steps: 
%under model $\MT$ and under model $\MM$.
For any $\MM \supset \MT$, we have shown in the proof of 
Lemma \ref{lemma:order_Jalpha_Jbeta} 
 that the MLE $\hat{\boldsymbol\beta}_\MM$ converges in probability  
to the true value $\b_\MM^*$, and 
$\mathcal{J}_n(\hat{\b}_\MM) / n$ is a finite positive definite matrix 
and converges to $\mathcal{I}_n(\b_\MM^*) / n$ in probability.
By Lemma \ref{lemma:order_Jalpha_Jbeta} and 
Slutsky's theorem, we can rewrite the asymptotic normality \eqref{eq:asym_MLE_MT} as
\[
\mathcal{J}_n(\hat{\b}_\MM)^{\frac{1}{2}}\left( \hat{\b}_\MM - \b_\MM^* \right)
\stackrel{d}{\longrightarrow} \text{N}(0, \mathbf{I}_{p_\MM}).
\]
Therefore, $Q_\MM = \hat{\b}_\MM^T \mathcal{J}_n(\hat{\b}_\MM) \hat{\b}_\MM$ converges 
in distribution to a non-central $\chi^2$ random variable 
with degrees of freedom $p_\MM$ and non-centrality parameter
$\b_\MM^{*T} \mathcal{I}_n(\b_\MM^*)\b_\MM^*$, which is $O(n)$ 
if $\b_\MM^* \neq \mathbf{0}$, and zero otherwise. 
Since $\b_\MM^* = \b_\MT^*$ in the sense that all entries in 
$\b_\MM^*$ that correspond to predictors not in $\MT$ are filled with zero, 
$\b_\MM^* = \mathbf{0}$ is equivalent to $\MT = \Mnull$.
Therefore, by Lemma \ref{lemma:non-central_chi-squared},
if $\MT \neq \Mnull$, then $Q_\MM = O_P(n)$; 
if $\MT = \Mnull$, then $Q_\MM = O_P(1)$.

For any $\MM \not\supset \MT$, 
since convergence in probability is
preserved under addition and multiplication \citep[pp.\ 175]{Resnick_1999},
we have
$Q_\MM - \b_\MM^{*T} \mathcal{J}_n(\hat{\b}_\MM)\b_\MM^*
\stackrel{P}{\longrightarrow} 0$, i.e., $Q_\MM$ is at most
on the same order of $\mathcal{J}_n(\hat{\b}_\MM)$.
By Lemma \ref{lemma:order_Jalpha_Jbeta}, 
we have $\xi_\MM = \tau_\MM$ if $\b_\MM^* \neq \mathbf{0}$, 
and $\xi_\MM = 0$ if $\b_\MM^* = \mathbf{0}$.
\end{proof}

Based on the results of Lemma \ref{lemma:Q}, 
the next lemma discusses the asymptotic properties of 
$\Omega_{\MT: \MM}^{\text{CH}}$, a term that appears in the Bayes factor
under the CH prior.

%%%%%%%%%%%%%%%%%%%%
%% Lemma: Omega_CHg
%%%%%%%%%%%%%%%%%%%%

%%%%%%%%%% begin: lemma %%%%%%%%%%%%%
\begin{lemma}%[Asymptotic behavior of $\Omega_{\MT: \MM}^{\text{CH}}$]
\label{lemma:Omega_CHg}  
Under the CH prior, denote the term in $\BF_{\MT:\MM}$: 
\begin{equation}\label{eq:Omega_CHg}
\Omega_{\MT: \MM}^{\text{CH}} \stackrel{\triangle}{=}
  \frac{B\left( \frac{a+p_{\MT}}{2}, \frac{b}{2} \right)
  \  _{1}F_{1}\left( \frac{a+p_{\MT}}{2}, \frac{a+b+p_{\MT}}{2}, -\frac{s + Q_{\MT}}{2} \right)}
  {B\left( \frac{a+p_{\MM}}{2}, \frac{b}{2} \right)
  \  _{1}F_{1}\left( \frac{a+p_{\MM}}{2}, \frac{a+b+p_{\MM}}{2}, -\frac{s + Q_{\MM}}{2} \right)}.
\end{equation}

\begin{enumerate}[label = \arabic*)]
\item If $\MT \neq \Mnull$, 
then as $n$ increases,
\[
\Omega_{\MT: \MM}^{\text{CH}} = 
\begin{cases}
O_P\left(n^{\frac{\xi_\MM p_\MM - p_\MT - a(1-\xi_\MM)}{2}}\right)
&	\text{ if } b \text{ is fixed, and } s  \text{ is fixed}\\
O_P\left(n^{\frac{p_\MM - p_\MT}{2}}\right)
&	\text{ if } b  = O(n) \text{, or } s = O(n)\\
\end{cases}
\]
In particular, if $\MM \supset \MT$, 
then $\Omega_{\MT: \MM}^{\text{CH}} = O_P\left(n^{\frac{p_\MM - p_\MT}{2}}\right)$ for all $b$ and $s$.

\item If $\MT = \Mnull$, then as $n$ increases,
\[
\Omega_{\MT: \MM}^{\text{CH}} = 
\begin{cases}
O_P\left( 1 \right)
&	\text{ if } b \text{ is fixed, and } s  \text{ is fixed}\\
O_P\left(n^{\frac{p_\MM - p_\MT}{2}}\right)
&	\text{ if } b  = O(n) \text{, or } s = O(n)\\
\end{cases}
\]
\end{enumerate}

%In addition, this result also holds with model specific hyper parameters $a_\MM, b_\MM, s_\MM$
%that are independent of $n$.
\end{lemma}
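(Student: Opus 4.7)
The plan is to recognize that both numerator and denominator in $\Omega_{\MT:\MM}^{\text{CH}}$ arise from the normalizing constant of the posterior CH density, and rewrite the ratio as a ratio of Euler-type integrals. Using the integral representation of $_1F_1$ (see Appendix \ref{subsection:special_functions}),
\[
B\!\left(\tfrac{a+p_\MM}{2}, \tfrac{b}{2}\right) {}_1F_1\!\left(\tfrac{a+p_\MM}{2}, \tfrac{a+b+p_\MM}{2}, -\tfrac{s+Q_\MM}{2}\right) = \int_0^1 u^{\frac{a+p_\MM}{2}-1}(1-u)^{\frac{b}{2}-1} e^{-\frac{s+Q_\MM}{2}u}\,du,
\]
so $\Omega_{\MT:\MM}^{\text{CH}}$ equals the same integral with $\MM$ replaced by $\MT$ in numerator, divided by the $\MM$ integral. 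The problem reduces to computing the leading order of integrals of the form $I(\alpha,\beta,\lambda)=\int_0^1 u^{\alpha-1}(1-u)^{\beta-1}e^{-\lambda u}\,du$ for various regimes of $\beta,\lambda$.

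The core technical step is to establish the asymptotic estimate
\[
I(\alpha,\beta,\lambda) \,\sim\, \Gamma(\alpha)\,\bigl[\lambda+\beta-1\bigr]^{-\alpha}
\]
when $\alpha>0$ is fixed and the effective rate $\lambda+\beta$ tends to infinity, while $I(\alpha,\beta,\lambda)=\Theta(1)$ when $\lambda+\beta$ stays bounded. I would prove this by the change of variable $v=(\lambda+\beta-1)u$, factoring $(1-u)^{\beta-1}=\exp\{(\beta-1)\log(1-u)\}=\exp\{-(\beta-1)u\}\cdot(1+O(u^2(\beta-1)))$ for $u$ near $0$, and verifying by dominated convergence that the mass concentrates near the origin so the integrand behaves like $v^{\alpha-1}e^{-v}$ up to a Jacobian of $(\lambda+\beta-1)^{-\alpha}$. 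A separate argument (bounding $(1-u)^{\beta-1}\le 1$ and using that $u^{\alpha-1}e^{-\lambda u}$ is integrable) handles the regime where $\lambda+\beta$ is bounded.

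With this estimate in hand, Case (1), subcase $b,s$ fixed, follows immediately: by Lemma \ref{lemma:Q} the effective rate in the numerator is $s+Q_\MT+b-2=\Theta_P(n)$ (since $\MT\supset\MT$ gives $\xi_\MT=1$), so the numerator is $\Theta_P(n^{-(a+p_\MT)/2})$; the denominator has effective rate $\Theta_P(n^{\xi_\MM})$ when $\xi_\MM>0$ and $\Theta_P(1)$ when $\xi_\MM=0$, which in both sub-cases gives denominator $=\Theta_P(n^{-\xi_\MM(a+p_\MM)/2})$. Dividing yields the exponent $\bigl[\xi_\MM(a+p_\MM)-(a+p_\MT)\bigr]/2 = \bigl[\xi_\MM p_\MM - p_\MT - a(1-\xi_\MM)\bigr]/2$, as claimed. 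When instead $b=O(n)$ or $s=O(n)$, the effective rate becomes $\Theta_P(n)$ in \emph{both} numerator and denominator no matter the value of $\xi_\MM$, giving ratio $O_P\bigl(n^{(p_\MM-p_\MT)/2}\bigr)$; in particular when $\MM\supset\MT$ Lemma \ref{lemma:Q} gives $\xi_\MM=1$ so the two formulas agree. Case (2), where $\MT=\Mnull$ forces $p_\MT=0$ and $Q_\MT=0$ and (by Lemma \ref{lemma:Q}) $Q_\MM=O_P(1)$, is handled identically: with $b,s$ fixed both integrals are $\Theta_P(1)$ giving $O_P(1)$, while with $b=O(n)$ or $s=O(n)$ both rates are $\Theta_P(n)$ giving $O_P(n^{p_\MM/2})$.

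The main obstacle is establishing the leading-order integral estimate uniformly over the mixed regime in which $\beta=b/2$ and $\lambda=(s+Q)/2$ may grow together at possibly different rates (and one of them may stay bounded). The subtlety is that the remainder in the expansion $(1-u)^{\beta-1}\approx e^{-(\beta-1)u}$ is small only on the region $u=O((\beta\vee\lambda)^{-1/2})$, so one must split $[0,1]$ into a core near $0$ and a tail, bound the tail contribution by $e^{-c(\lambda+\beta)}$ for some $c>0$, and invoke a dominated-convergence argument on the rescaled core. Once this uniform estimate is secured, the remaining algebra is bookkeeping and the proposition follows by direct division.
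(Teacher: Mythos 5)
Your proposal reaches the correct conclusions, but it gets there by a different technical route than the paper. The paper's proof works directly on the $_{1}F_{1}$ factors, quoting standard asymptotic expansions case by case: Abramowitz--Stegun (13.1.5) for a large negative argument (the $b,s$ fixed scenario, giving the $(s+Q_\MM)^{-(a+p_\MM)/2}$ behavior and hence the exponent $[\xi_\MM p_\MM - p_\MT - a(1-\xi_\MM)]/2$), Slater's (4.3.3) and (4.3.7) for a large second parameter (the $b=O(n)$ scenario, where the $_{1}F_{1}$ ratio tends to a constant), and then Stirling's formula on the ratio of Beta functions to extract the $n^{(p_\MM-p_\MT)/2}$ rate; the $s=O(n)$ scenario is absorbed into the first expansion since $s+Q$ is then $O(n)$ regardless of $\xi_\MM$. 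You instead fold $B\bigl(\tfrac{a+p}{2},\tfrac b2\bigr)\,{}_{1}F_{1}$ into a single Euler integral and prove one unified Laplace/Watson-type estimate $I(\alpha,\beta,\lambda)\sim\Gamma(\alpha)(\lambda+\beta-1)^{-\alpha}$ driven by the combined rate $\lambda+\beta$, which simultaneously covers all the paper's scenarios (only $\lambda$ large, only $\beta$ large, both large, both bounded) and makes the exponent bookkeeping identical to the paper's in every case, including $\MT=\Mnull$ via $Q_\MM=O_P(1)$ from Lemma \ref{lemma:Q}. What the paper's route buys is brevity and rigor-by-citation; what yours buys is a self-contained, uniform argument that avoids juggling three separate expansions plus Stirling, at the cost of having to actually carry out the core--tail split and dominated-convergence step you only sketch (the tail bound $e^{-c(\lambda+\beta)}$ needs slight care when $b/2<1$, where the singularity at $u=1$ is integrable and the bound still goes through). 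Two small points to tighten: you need $Q_\MT=\Theta_P(n)$, not just $O_P(n)$, for the numerator upper bound --- this does hold because the noncentrality parameter in Lemma \ref{lemma:Q} grows like $n$ when $\MT\neq\Mnull$; and for the denominator only the upper bound $Q_\MM=O_P(n^{\xi_\MM})$ is available, which is exactly the direction needed for the stated $O_P$ bound on $\Omega_{\MT:\MM}^{\text{CH}}$, so your occasional use of $\Theta_P$ there overstates what Lemma \ref{lemma:Q} provides but does not affect the conclusion.
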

%%%%%%%%%% end: lemma %%%%%%%%%%%%%
\begin{proof}
We first show Case 1) where $\MT \neq \Mnull$,  by Lemma \ref{lemma:Q}, 
$\xi_{\MT} = 1$.
We consider the following three scenarios about 
parameters $b$ and $s$ being fixed or $O(n)$.

\emph{Scenario 1: Both $b, s$ are fixed.}
By \citet{Abramowitz_Stegun_1970} formula (13.1.5), 
%the limit of the Confluent Hypergeometric $_{1}F_1(a, b, s)$ function for large $|s|$ when $s$ is negative
%can be approximated by
\begin{equation}\label{eq:1F1_large_negative_x}
 _{1} F_1 (a, b, s) = \frac{\Gamma(b)}{\Gamma(b-a)} (-s)^{-a} [1 + O(|s|^{-1})], \textrm{ when } \text{Real}(s)< 0.
\end{equation}
Continuous mapping theorem suggests that
for any model $\MM$ whose $Q_{\MM} = O_P(n^{\xi_\MM})$, 
\begin{equation}\label{eq:Omega_CHg2}
\Omega^{\text{CH}}_{\MT : \MM}  
 \approx
	\frac{ \Gamma\left( \frac{a+p_\MT}{2} \right) \left(\frac{s+Q_{\MT}}{2} \right)^{-\frac{a+p_\MT}{2}}}
 	{  \Gamma\left( \frac{a+p_{\MM}}{2} \right) \left(\frac{s+Q_{\MM}}{2}\right)^{-\frac{a+p_\MM}{2}} } 
 \propto  
	\frac{\left(s+Q_{\MT}\right)^{-\frac{a+p_\MT}{2}}}
	{\left(s+Q_{\MM}\right)^{-\frac{a+p_\MM}{2}}}
= O_P\left(n^{\frac{\xi_\MM p_\MM - p_\MT - a(1-\xi_\MM)}{2}}\right).
\end{equation}

\emph{Scenario 2: $b$ is fixed, and $s = O(n)$.}
Since $s + Q_\MT = O(n)$ and $s + Q_\MM = O(n)$, then by \eqref{eq:Omega_CHg2}, 
$\Omega^{\text{CH}}_{\MT : \MM} = O_P\left(n^{\frac{p_\MM - p_\MT}{2}}\right)$.

\emph{Scenario 3: $b = O(n)$.}
Lemma \ref{lemma:Q} indicates that $Q_{\MM}$ is between $O_P(1)$ and $O_P(n)$.
By \citet {Slater_1960}  formula (4.3.3): 
if $b$ is large, and $a, s$ are bounded, then %the limit of $_{1}F_1$ function is bounded,
\begin{equation}\label{eq:1F1_limit_large_b}
 _{1} F_1 (a, b, s) = 1 + O(|b|^{-1})  \text{ is bounded};
\end{equation}
and by \citet {Slater_1960} formulas (4.3.7): 
if $b$ is large, $s = by$, and $a, y$ are bounded, then
\begin{equation}\label{eq:1F1_limit_large_b_s}
 _{1} F_1 (a, b, s) = (1 - y)^{-a} \left[ 1 - \frac{a(a+1)}{2b}\left( \frac{y}{1-y} \right)^2 + O(|b|^{-2}) \right]
 \text{ is also bounded}.
\end{equation}
Therefore, under the CH prior when parameter $b=O(n)$, 
\begin{align*}
\Omega^{\text{CH}}_{\MT : \MM} & = \frac{B\left( \frac{a+p_\MT}{2}, \frac{b}{2} \right)
  \  _{1}F_{1}\left( \frac{a+p_\MT}{2}, \frac{a+b+p_\MT}{2}, -\frac{s+Q_{\MT}}{2} \right)}
  {B\left( \frac{a+p_{\MM}}{2}, \frac{b}{2} \right)\  _{1}F_{1}
  \left( \frac{a+p_{\MM}}{2}, \frac{a+b+p_{\MM}}{2}, -\frac{s+Q_{\MM}}{2} \right)}
\stackrel{\text{P}}{\longrightarrow} C \cdot \frac{B\left( \frac{a+p_\MT}{2}, \frac{b}{2} \right)}
  {B\left( \frac{a+p_{\MM}}{2}, \frac{b}{2} \right)}.
\end{align*}
According to the Stirling's Formula
$\Gamma(n) = e^{-n}n^{n-\frac{1}{2}}(2\pi)^{\frac{1}{2}}(1 + O(n^{-1}))$,
the above ratio becomes $O_P\left(n^{\frac{p_{\MM} - p_\MT}{2}}\right)$.

Next we examine Case 2) where $\MT = \Mnull$. 
In this case, Lemma \ref{lemma:Q} suggests that
both $Q_\MT$ and $Q_\MM$ are on the same order $O_P(1)$.
Hence in Scenario 1, where both $b$ and $s$ are fixed,
$\Omega^{\text{CH}}_{\MT:\MM} = O_P(1)$;
In Scenario 2, since both $s + Q_\MT$ and $s + Q_\MM$ are on the order of $O_P(n)$, 
the same deviation and result as in Case 1) Scenario 2 apply.
In Scenario 3,  both $s + Q_\MT$ and $s + Q_{\MM}$ 
are $O_P(1)$ if $s$ is fixed, and $O_P(n)$ if $s = O(n)$, so the same derivation and result
as in Case 1) Scenario 3 apply.
\end{proof}

%%%%%%%%%%%%%%%%%%%%
%% Lemma: Omega_Robust
%%%%%%%%%%%%%%%%%%%%

%%%%%%%%%% begin: lemma %%%%%%%%%%%%%
\begin{lemma}%[Asymptotic behavior of $\Omega_{\MT: \MM}^{\text{CH}}$]
\label{lemma:Omega_Robust}  
Under the robust prior, denote the term in $\BF_{\MT:\MM}$:
\begin{equation}\label{eq:Omega_Robust}
\Omega_{\MT: \MM}^{\text{R}} \stackrel{\triangle}{=} 
\left(\frac{ p_\MM+1}{ p_\MT+1}\right)^{\frac{1}{2}}
  \cdot  \frac{Q_\MT^{-\frac{p_\MT+1}{2}}}{Q_\MM^{-\frac{p_\MM+1}{2}}} 
  \cdot \frac{\gamma \left(\frac{p_\MT+1}{2} , \frac{Q_\MT(p_\MT+1)}{2(n+1)} \right)}
  {\gamma \left(\frac{p_\MM+1}{2} , \frac{Q_\MM(p_\MM+1)}{2(n+1)} \right)}.
\end{equation}
As the sample size $n$ increases, $\Omega_{\MT: \MM}^{\text{R}}  = O_P\left(n^{\frac{p_\MM - p_\MT}{2}}\right)$.

\end{lemma}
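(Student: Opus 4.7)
The plan is to exploit the identity
\[
\gamma(a,s) \;=\; \frac{s^{a}}{a}\, {_1}F_1(a,a+1,-s),
\]
which follows by substituting $t=su$ in the definition of the incomplete Gamma function. This identity lets us cancel the awkward factors $Q_\MT^{-(p_\MT+1)/2}$ and $Q_\MM^{(p_\MM+1)/2}$ in \eqref{eq:Omega_Robust} against the implicit $s^{a}$ factors hidden inside $\gamma(a,s)$. Doing so is essential because $Q_\MT$ can be arbitrarily small (indeed $Q_\MT=0$ when $\MT=\Mnull$), so working with $Q_\MT^{-(p_\MT+1)/2}$ directly would obscure the cancellation.

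After the substitution, setting $x_\MM = Q_\MM(p_\MM+1)/[2(n+1)]$ and $x_\MT = Q_\MT(p_\MT+1)/[2(n+1)]$, the Wald statistics disappear from the algebraic prefactors and one obtains
\[
\Omega_{\MT:\MM}^{\text R} \;=\; C_{p_\MT,p_\MM}\,[2(n+1)]^{(p_\MM-p_\MT)/2}\,
\frac{{_1}F_1\!\left(\tfrac{p_\MT+1}{2},\tfrac{p_\MT+3}{2},-x_\MT\right)}
     {{_1}F_1\!\left(\tfrac{p_\MM+1}{2},\tfrac{p_\MM+3}{2},-x_\MM\right)},
\]
where $C_{p_\MT,p_\MM}$ is a constant depending only on the fixed model dimensions. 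The factor $[2(n+1)]^{(p_\MM-p_\MT)/2}$ already carries the advertised rate $n^{(p_\MM-p_\MT)/2}$, so the remainder of the proof reduces to establishing that the ratio of the two confluent hypergeometric values is $O_P(1)$.

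This last step uses the integral representation ${_1}F_1(a,a+1,-s)=a\!\int_0^1 u^{a-1}e^{-su}\,du$ for $s\ge 0$, which yields the simultaneous bounds $e^{-s}\le {_1}F_1(a,a+1,-s)\le 1$. By Lemma \ref{lemma:Q}, $Q_\MM = O_P(n^{\xi_\MM})$ with $\xi_\MM\le 1$, so $x_\MM = O_P(n^{\xi_\MM - 1}) = O_P(1)$, and likewise $x_\MT = O_P(1)$. Consequently the numerator is bounded above by $1$ deterministically, while the denominator is bounded below by $e^{-x_\MM}$, and $e^{-x_\MM}$ is bounded away from zero in probability because $x_\MM$ is bounded in probability. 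Hence the ratio is $O_P(1)$ and $\Omega_{\MT:\MM}^{\text R} = O_P(n^{(p_\MM-p_\MT)/2})$.

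The only point that might require care is the edge case $\MT=\Mnull$, where $p_\MT=0$ and $Q_\MT=0$ make the raw expression \eqref{eq:Omega_Robust} indeterminate of the form $0\cdot\infty$; the identity-based rewriting above handles this cleanly since $_1F_1(1/2,3/2,0)=1$. Otherwise the argument is uniform over all configurations of $\MM$ and $\MT$ allowed by Lemma \ref{lemma:Q}, and does not require splitting into subcases as was needed for $\Omega_{\MT:\MM}^{\text{CH}}$, because the robust prior's natural scaling $g=O(n)$ is already built into the arguments $x_\MM,x_\MT$.
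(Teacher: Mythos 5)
Your proposal is correct and follows essentially the same route as the paper's proof: both convert the incomplete Gamma functions via $\gamma(a,s)=\frac{s^a}{a}\,{_1}F_1(a,a+1,-s)$, cancel the $Q_\MM,Q_\MT$ prefactors to expose the $(n+1)^{(p_\MM-p_\MT)/2}$ rate, and conclude by noting the ratio of ${_1}F_1$ values is bounded in probability since $Q_\MM/n$ and $Q_\MT/n$ are $O_P(1)$. Your explicit bounds $e^{-s}\le {_1}F_1(a,a+1,-s)\le 1$ and the remark on the $\MT=\Mnull$ edge case merely make explicit what the paper asserts more tersely.
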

%%%%%%%%%% end: lemma %%%%%%%%%%%%%
\begin{proof}
By \citet{Abramowitz_Stegun_1970} formula (6.5.12),
the incomplete Gamma function $\gamma(a, s) = \int_0^{s} t^{a-1} e^{-t}dt$ 
can be expressed using the $_{1}F_1$ function
\begin{equation}\label{eq:incomplete_Gamma}
\gamma(a, s) =  {_1}F_{1}(a, a+1, -s) \frac{s^a}{a}.
\end{equation}
Therefore, \eqref{eq:Omega_Robust} becomes
\[
\left(\frac{ p_\MM+1}{ p_\MT+1}\right)^{\frac{1}{2}}
  \cdot  \frac{Q_\MT^{-\frac{p_\MT+1}{2}}}{Q_\MM^{-\frac{p_\MM+1}{2}}} 
  \cdot \frac{\left(\frac{p_\MT+1}{2}\right)^{-1}  \left(\frac{Q_\MT(p_\MT+1)}{2(n+1)}\right)^{\frac{p_\MT+1}{2}}
  \ _{1}F_{1} \left( \frac{p_\MT+1}{2}, \frac{p_\MT+3}{2}, -\frac{Q_\MT(p_\MT+1)}{2(n+1)} \right)}
  {\left(\frac{p_\MM+1}{2}\right)^{-1}  \left(\frac{Q_\MM(p_\MM+1)}{2(n+1)}\right)^{\frac{p_\MM+1}{2}}
  \ _{1}F_{1} \left( \frac{p_\MM+1}{2}, \frac{p_\MM+3}{2}, -\frac{Q_\MM(p_\MM+1)}{2(n+1)} \right)} 
\]
Since $_1F_1(a, b, 0) = 1$, and both $Q_\MT/n, Q_\MM/n$ are bounded, the ratio between the $_1F_1$ functions is bounded
as $n$ increases. Therefore we further simplify
$\Omega_{\MT: \MM}^{\text{R}}  \propto (n+1)^{\frac{p_\MM - p_\MT}{2}} 
	= O_P\left(n^{\frac{p_\MM - p_\MT}{2}}\right)$.
This result holds no matter whether $\MT=\Mnull$ or not.
\end{proof}

%%%%%%%%%% begin: lemma %%%%%%%%%%%%%
\begin{lemma}%[Asymptotic behavior of $\Omega_{\MT: \MM}^{\text{CH}}$]
\label{lemma:Omega_Intrinsic}  
Under the intrinsic prior, denote the term in $\BF_{\MT:\MM}$:
\begin{equation*}\label{eq:Omega_Intrinsic}
\Omega_{\MT: \MM}^{\text{I}} \stackrel{\triangle}{=} 
\frac{ \left( \frac{n+ p_\MM + 1}{p_\MM + 1} \right)^{\frac{p_\MM}{2}} 
	~ e^{\frac{Q_\MM\left(p_\MM + 1\right)}{2\left(n + p_\MM + 1\right)}} 
	~ B\left(\frac{p_\MT + 1}{2}, \frac{1}{2}  \right)
	~ \Phi_1\left( \frac{1}{2}, 1, \frac{p_\MT + 2}{2}, \frac{Q_\MT(p_\MT + 1)}{2(n + p_\MT + 1)}, -\frac{p_\MT + 1}{n} \right) }
{ \left( \frac{n+ p_\MT + 1}{p_\MT + 1} \right)^{\frac{p_\MT}{2}} 
	~ e^{\frac{Q_\MT\left(p_\MT + 1\right)}{2\left(n + p_\MT + 1\right)}}
	~ B\left(\frac{p_\MM + 1}{2}, \frac{1}{2}  \right)
	~ \Phi_1\left( \frac{1}{2}, 1, \frac{p_\MM + 2}{2}, \frac{Q_\MM(p_\MM + 1)}{2(n + p_\MM + 1)}, -\frac{p_\MM + 1}{n} \right) }
\end{equation*}
As the sample size $n$ increases, $\Omega_{\MT: \MM}^{\text{I}}  = O_P\left(n^{\frac{p_\MM - p_\MT}{2}}\right)$.
\end{lemma}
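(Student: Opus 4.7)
The plan is to decompose $\Omega_{\MT: \MM}^{\text{I}}$ into four factors and bound each in turn: (i) the ratio of power terms $(n+p_\MM+1)^{p_\MM/2}/(n+p_\MT+1)^{p_\MT/2}$, modulo the constants $(p_\MM+1)^{-p_\MM/2}$ and $(p_\MT+1)^{-p_\MT/2}$; (ii) the ratio of exponentials $\exp\{Q_\MM(p_\MM+1)/[2(n+p_\MM+1)]\}/\exp\{Q_\MT(p_\MT+1)/[2(n+p_\MT+1)]\}$; (iii) the ratio of Beta functions, which depend only on the fixed model dimensions; and (iv) the ratio of $\Phi_1$ functions.

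First I would handle factor (i) directly: since $(n+p+1)/(p+1) = n/(p+1) + 1 = O(n)$ for any fixed $p$, the leading-order contribution is $(n/(p_\MM+1))^{p_\MM/2}/(n/(p_\MT+1))^{p_\MT/2} = O(n^{(p_\MM-p_\MT)/2})$, which provides exactly the claimed rate. Factor (iii) is a ratio of Beta functions evaluated at fixed arguments and hence is an $O(1)$ constant.

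For factor (ii), I would use Lemma \ref{lemma:Q} to conclude $Q_\MM = O_P(n^{\xi_\MM})$ with $\xi_\MM \leq 1$, so that $Q_\MM(p_\MM+1)/[2(n+p_\MM+1)] = O_P(1)$, and likewise for $\MT$. Hence each exponential is $O_P(1)$, and their ratio is $O_P(1)$.

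The main technical step, and the only one that really uses the special-function machinery, is factor (iv): I need to show that $\Phi_1(1/2, 1, (p+2)/2, x_n, y_n) = O_P(1)$ for both $p \in \{p_\MM, p_\MT\}$, where $x_n = Q(p+1)/[2(n+p+1)]$ is $O_P(1)$ and $y_n = -(p+1)/n \to 0$. Using the integral representation from Appendix \ref{subsection:special_functions},
\begin{equation*}
\Phi_1\!\left(\tfrac{1}{2}, 1, \tfrac{p+2}{2}, x_n, y_n\right) = \frac{1}{B(\tfrac{p+1}{2}, \tfrac{1}{2})} \int_0^1 u^{-1/2}(1-u)^{(p-1)/2}(1-y_n u)^{-1} e^{x_n u}\, du,
\end{equation*}
I would observe that on $u\in[0,1]$ the factor $(1-y_n u)^{-1} = (1 + (p+1)u/n)^{-1} \in [n/(n+p+1),1]$ is bounded, and $e^{x_n u} \leq e^{|x_n|}$ is bounded in probability because $x_n$ is. Dominated convergence (or a direct uniform bound) then yields $\Phi_1 \to {_1}F_1(1/2,(p+2)/2,x)$ in probability, which is finite for any finite $x$; in particular the ratio in factor (iv) is $O_P(1)$. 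The hard part is this uniform control: one must verify that the $(1-y_n u)^{-1}$ singularity at $u=1/(y_n\cdot\text{const})$ stays safely outside $[0,1]$ for all sufficiently large $n$, which is immediate here since $y_n<0$ implies $1-y_n u \geq 1$ on $[0,1]$.

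Combining the four bounds gives $\Omega_{\MT:\MM}^{\text{I}} = O_P(n^{(p_\MM-p_\MT)/2}) \cdot O_P(1) \cdot O(1) \cdot O_P(1) = O_P(n^{(p_\MM-p_\MT)/2})$, as claimed. Note that unlike the CH case (Lemma \ref{lemma:Omega_CHg}), no split between $\MT \neq \Mnull$ and $\MT=\Mnull$ is required because the truncation built into the intrinsic prior already forces $g$ to scale with $n$, making the rate uniform in the truth.
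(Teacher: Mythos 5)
Your proof is correct and follows essentially the same route as the paper's: since $p_\MT, p_\MM$ are fixed and $Q_\MT/n, Q_\MM/n$ are bounded in probability, every factor other than the ratio of powers of $(n+p+1)/(p+1)$ is $O_P(1)$, so the rate $O_P\left(n^{\frac{p_\MM-p_\MT}{2}}\right)$ comes from that ratio alone. The paper states this in one line, while you additionally verify boundedness of the $\Phi_1$ terms via the integral representation (noting $y_n<0$ keeps the factor $(1-y_nu)^{-1}$ bounded), which is a harmless elaboration of the same argument.
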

%%%%%%%%%% end: lemma %%%%%%%%%%%%%
\begin{proof}
Since $p_\MT, p_\MM$ are bounded,
and $Q_\MT/n, Q_\MM/n$ are bounded in probability, as $n \rightarrow \infty$,
\[
\Omega_{\MT: \MM}^{\text{I}} 
\stackrel{\text{P}}{\longrightarrow} C \cdot \frac{ \left( \frac{n+ p_\MM + 1}{p_\MM + 1} \right)^{\frac{p_\MM}{2}} }
  {\left( \frac{n+ p_\MT + 1}{p_\MT + 1} \right)^{\frac{p_\MT}{2}} } 
  = O_P\left(n^{\frac{p_\MM - p_\MT}{2}}\right).
\]
\end{proof}

%%%%%%%%%% begin: lemma %%%%%%%%%%%%%
\begin{lemma}%[Asymptotic behavior of $\Omega_{\MT: \MM}^{\text{CH}}$]
\label{lemma:Omega_LEB}  
Under the local EB, denote the term in $\BF_{\MT:\MM}$: 
\begin{equation}\label{eq:Omega_LEB}
\Omega_{\MT: \MM}^{\text{LEB}} \stackrel{\triangle}{=}
  \frac{\max\left\{ \exp\left(-\frac{Q_\MT}{2}\right), 
  \left(\frac{Q_\MT}{p_\MT}\right)^{-\frac{p_\MT}{2}}\exp\left(-\frac{p_\MT}{2}\right)\right\}}
  {\max\left\{ \exp\left(-\frac{Q_\MM}{2}\right), 
  \left(\frac{Q_\MM}{p_\MM}\right)^{-\frac{p_\MM}{2}}\exp\left(-\frac{p_\MM}{2}\right)\right\}}.
\end{equation}

\begin{enumerate}[label = \arabic*)]
\item If $\MT \neq \Mnull$, 
then as $n$ increases,
$\Omega_{\MT: \MM}^{\text{LEB}} = O_P\left( n^{\frac{\xi_\MM p_\MM - p_\MT}{2} }\right)$.
In particular, if $\MM \supset \MT$, 
then $\Omega_{\MT: \MM}^{\text{LEB}} = O_P\left(n^{\frac{p_\MM - p_\MT}{2}}\right)$.

\item If $\MT = \Mnull$, then as $n$ increases,
$\Omega_{\MT: \MM}^{\text{LEB}} = O_P\left( 1 \right)$.
\end{enumerate}
\end{lemma}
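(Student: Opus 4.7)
The proof naturally splits on whether $\MT = \Mnull$, and in each case I will bound the numerator and denominator of $\Omega_{\MT:\MM}^{\text{LEB}}$ using the orders of the Wald statistics from Lemma \ref{lemma:Q}. A preliminary simplification is the elementary inequality $(Q/p)^{-p/2}\exp(-p/2) \geq \exp(-Q/2)$ for all $Q, p > 0$, which follows from $(x - 1) - \log x \geq 0$ applied at $x = Q/p$. Consequently, whenever $p > 0$ the second argument of each $\max$ dominates the first, so both numerator and denominator of $\Omega_{\MT:\MM}^{\text{LEB}}$ reduce to the term $(Q/p)^{-p/2}\exp(-p/2)$.

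For Case 1 ($\MT \neq \Mnull$) I first handle the numerator. Since $\MT \supset \MT$, Lemma \ref{lemma:Q} gives $\xi_\MT = 1$, and a closer look at its proof shows that $Q_\MT/n$ converges in probability to a strictly positive limit, because the non-centrality parameter constructed there is proportional to $\b_\MT^{*T} \mathcal{I}_n(\b_\MT^*) \b_\MT^*/n$, which is bounded below by a positive constant given the design assumptions in Appendix \ref{section:assumptions} and $\b_\MT^* \neq \mathbf{0}$. The numerator therefore equals $(Q_\MT/p_\MT)^{-p_\MT/2}\exp(-p_\MT/2)$ and is of exact order $n^{-p_\MT/2}$ in probability. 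For the denominator I split on $\xi_\MM$: when $\xi_\MM > 0$, the event $Q_\MM > p_\MM$ holds with probability tending to one, so the $\max$ equals its second argument, and the $O_P$ upper bound $P(Q_\MM \leq C n^{\xi_\MM}) \geq 1 - \epsilon$ yields $(Q_\MM/p_\MM)^{-p_\MM/2} \geq (C/p_\MM)^{-p_\MM/2} n^{-\xi_\MM p_\MM/2}$ on that event, bounding the denominator below by a constant multiple of $n^{-\xi_\MM p_\MM/2}$; when $\xi_\MM = 0$, $Q_\MM = O_P(1)$ makes $\exp(-Q_\MM/2)$ bounded below in probability by a positive constant, so the denominator is bounded below in probability, which matches $n^{-\xi_\MM p_\MM/2} = 1$. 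Dividing the numerator upper bound by the denominator lower bound gives $\Omega_{\MT:\MM}^{\text{LEB}} = O_P(n^{(\xi_\MM p_\MM - p_\MT)/2})$, and the specialization to $\MM \supset \MT$ is immediate from $\xi_\MM = 1$.

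Case 2 ($\MT = \Mnull$) is immediate: $p_\MT = 0$ and $Q_\MT = 0$ make the numerator equal to $1$, while Lemma \ref{lemma:Q} gives $Q_\MM = O_P(1)$, so $\exp(-Q_\MM/2)$ is bounded below in probability by a positive constant and the denominator is of order $1$ in probability from below, giving $\Omega_{\MT:\MM}^{\text{LEB}} = O_P(1)$. The main technical obstacle is the lower bound on the denominator in Case 1 when $\xi_\MM > 0$, since Lemma \ref{lemma:Q} provides only an $O_P$ upper bound on $Q_\MM$ rather than a sharp two-sided rate. The argument succeeds precisely because smaller values of $Q_\MM$ enlarge the polynomial term $(Q_\MM/p_\MM)^{-p_\MM/2}$ of the $\max$, so an $O_P$ upper bound on $Q_\MM$ translates directly into a lower bound on the dominant term of the denominator, which is exactly what the rate calculation requires.
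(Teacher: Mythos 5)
Your proof is correct and takes essentially the same route as the paper's: reduce each $\max$ to the relevant term, then insert the orders of $Q_\MT$ and $Q_\MM$ from Lemma \ref{lemma:Q}, splitting on $\xi_\MM>0$ versus $\xi_\MM=0$ and on whether $\MT=\Mnull$. Your two refinements---noting that $(Q/p)^{-p/2}e^{-p/2}\geq e^{-Q/2}$ always holds, and converting the one-sided $O_P$ bound on $Q_\MM$ into a lower bound on the denominator while supplying the order-$n$ lower bound on $Q_\MT$ needed to bound the numerator---only make explicit steps that the paper states more loosely, without changing the argument.
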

%%%%%%%%%% end: lemma %%%%%%%%%%%%%
\begin{proof}
Case 2) is straightforward, because when $\MT = \Mnull$,
$Q_\MT = O_P(1)$ and $Q_\MM = O_P(1)$. 
Now let us focus on Case 1). 
In \eqref{eq:Omega_LEB}, the numerator equals $\exp(-Q_\MT/2)$
if and only if $Q_\MT \leq p_\MT$, and 
the denominator follows the same rule when we replacing 
$\MT$ with $\MM$. 
Since $\MT \neq \Mnull$, $Q_\MM = O_P(n)$ is greater
than $p_\MM$ for large $n$.
Hence the numerator of \eqref{eq:Omega_LEB}
is proportional to $\left(Q_\MT/p_\MT\right)^{-\frac{p_\MT}{2}}\exp\left(-p_\MT / 2\right)
= O_P(n^{-\frac{p_\MT}{2}})$.
For model $\MM$ whose $Q_\MM = O_P(n^{\xi_\MM})$, 
if $\xi_\MM > 0$, then when $n$ is large enough, 
$Q_\MM > p_\MM$, so the denominator is $O_P(n^{-\frac{\xi_\MM p_\MM}{2}})$.
If $\xi_\MM = 0$, then the denominator is $O_P(1)$,
which can also be written as $O_P(n^{-\frac{\xi_\MM p_\MM}{2}})$.
\end{proof}

We now examine the model selection consistency.
%under the CH prior and the Robust prior.
%In the case of fixed $g$, we can give  a counter example where 
%the selection consistency does not hold.

%%%%%%%%%% begin: theorem %%%%%%%%%%%%%
\noindent {\bf Proof of Theorem \ref{th:selection_CHg}}
\begin{proof}
By Lemma \ref{lemma:order_Jalpha_Jbeta}, 
$\mathcal{J}_n(\hat{\alpha}_{\MM}) = O_P(n^{\tau_\MM})$,
where $0 \leq \tau_\MM \leq 1$, and $\tau_\MM =1$ if $\MM \supset \MT$.
Hence,
\[
\left[ \frac{ \mathcal{J}_n(\hat{\alpha}_{\MT}) }
              { \mathcal{J}_n(\hat{\alpha}_{\MM}) }
\right]^{-\frac{1}{2}} = O_P\left( n^{-\frac{1 - \tau_\MM}{2}} \right).
\]

%See Appendix \ref{PROOFth:selection_CHg_fixed_a}
For the CH prior, 
\begin{equation}\label{eq:BF_chg}
\BF_{\MT : \MM} = \left[ \frac{ \mathcal{J}_n(\hat{\alpha}_{\MT}) }
              { \mathcal{J}_n(\hat{\alpha}_{\MM}) }\right]^{-\frac{1}{2}} \cdot
\Lambda_{\MT: \MM} \cdot \Omega_{\MT: \MM}^{\text{CH}}\cdot [1 + O_P(1/n)].
\end{equation}
We first consider the case where both $b$ and $s$ are fixed,
by using the results in Lemma \ref{lemma:Lambda} and \ref{lemma:Omega_CHg}.
In the case where $\MT \neq \Mnull$, for any non-true model $\MM \supset\MT$, 
then $p_\MM > p_\MT$, $\tau_\MM = 1$, and $\xi_\MM = 1$, hence
\[
\BF_{\MT : \MM} 
= O_P(1)\cdot O_P(1)\cdot O_P\left(n^{\frac{p_\MM - p_\MT}{2}}\right)
\cdot [1 + O_P(1/n)] \stackrel{\text{P}}{\longrightarrow} \infty.
\]
On the other hand, if $\MM \not\supset \MT$, then 
\[
\BF_{\MT : \MM} 
=	 O_P\left( n^{-\frac{1 - \tau_\MM}{2}} \right) \cdot 
	O_P\left(e^{c_\MM n}\right)\cdot 
	O_P\left(n^{\frac{\xi_\MM p_\MM - p_\MT - a(1-\xi_\MM)}{2}}\right) 
	\cdot [1 + O_P(1/n)]  \stackrel{\text{P}}{\longrightarrow} \infty.
\]
In contrast, if $\MT = \Mnull$, then for any model $\MM$, since $\MM \supset \MT$,
$\tau_\MM = 1$.
So the Bayes factor
\[
\BF_{\MT : \MM}
= O_P\left(1\right)\cdot  O_P\left(1\right)\cdot O_P\left(1\right)\cdot [1 + O_P(1/n)]
\] is bounded,
which suggests the selection consistency does not hold when $\MT = \Mnull$.

Next consider the case where $b = O(n)$ or $s = O(n)$.
For any model $\MM  \not\supset \MT$, the proof is similar as above. 
If $\MM \supset \MT$, then $\tau_\MM = 1$ and $p_\MM > p_\MT$, so 
\[
\BF_{\MT : \MM} 
= 	O_P(1)\cdot O_P(1)\cdot 
	O_P\left(n^{\frac{p_\MM - p_\MT}{2}}\right)\cdot [1 + O_P(1/n)] \stackrel{\text{P}}{\longrightarrow} \infty,
\]
which holds even when $\MT = \Mnull$.

For the robust prior, the intrinsic prior, and local EB, their Bayes factor are given by \eqref{eq:BF_chg}, with 
$\Omega_{\MT: \MM}^{\text{CH}}$ replaced by $\Omega_{\MT: \MM}^{\text{R}}$,
$\Omega_{\MT: \MM}^{\text{I}}$, and $\Omega_{\MT: \MM}^{\text{LEB}}$, respectively.
By Lemma \ref{lemma:Omega_Robust}, \ref{lemma:Omega_Intrinsic}, and \ref{lemma:Omega_LEB}, 
the proofs are similar
to the CH prior, hence omitted.
\end{proof}

\subsection{Proof to Proposition \ref{proposition:CH-g_intrinsic}}\label{PROOFproposition:CH-g_intrinsic}
\begin{proof}
If $b = O(n)$\, then by \eqref{eq:1F1_limit_large_b} or \eqref{eq:1F1_limit_large_b_s},
\begin{align}\label{eq:prior_mean_1/g}
\mathbb{E}(1/g)  
& =   \frac{B\left(  \frac{a}{2} + 1, \frac{b}{2}-1 \right)
  \  _{1}F_{1}\left( \frac{a}{2} + 1, \frac{a+b}{2}, -\frac{s}{2} \right)}
  {B\left( \frac{a}{2}, \frac{b}{2} \right)
  \  _{1}F_{1}\left( \frac{a}{2}, \frac{a+b}{2}, -\frac{s}{2} \right)} \\ \nonumber
& \propto   \frac{B\left(  \frac{a}{2} + 1, \frac{b}{2}-1 \right)
  %e^{s}\left[  1 + O(|b|^{-1}) \right]
  }
  {B\left( \frac{a}{2}, \frac{b}{2} \right)
  %e^{s}\left[  1 + O(|b|^{-1}) \right]
  } 
  \longrightarrow  \frac{a}{b-2} = O(1/n).
\end{align}
If $b$ is fixed and $s = O(n)$, then by \eqref{eq:1F1_large_negative_x} and 
\eqref{eq:prior_mean_1/g},
\[
\mathbb{E}(1/g) 
 \approx   \frac{B\left(  \frac{a}{2} + 1, \frac{b}{2}-1 \right)
  \Gamma\left( \frac{b}{2} \right)\left(\frac{s}{2}\right)^{\frac{a}{2}}
  }
  {B\left( \frac{a}{2}, \frac{b}{2} \right)
  \Gamma\left( \frac{b}{2} -1 \right) \left(\frac{s}{2}\right)^{\frac{a}{2}+1}
  } 
  \propto \frac{1}{s} = O(1/n).
\]

\end{proof}

%%%%%%%%%%%%%%%%%%%%%%%%%%%%%%%%%%%%%%%%%%%%%%%%%%
\subsection{Proof of Proposition \ref{prop:consistency_shrinkage_factor1}}\label{PROOFprop:consistency_shrinkage_factor1}

\begin{proof}
For the CH prior, 
according to \eqref{eq:upost}, the conditional posterior of $z = 1-u$ is
\begin{equation}\label{eq:z_post}
z \mid \mathbf{Y}, \MM \convd  
  \text{CH}\left(\frac{b}{2}, \frac{a + p_\MM}{2},  -\frac{s + Q_\MM}{2} \right),
\end{equation}
and its characteristic function is
\begin{align}
\label{eq:chf_z} \phi_z(t) & = \mathbb{E}\left( e^{itz} \right)
 = \int \frac{z^{\frac{b}{2}-1} (1-z)^{\frac{a+p_{\MT}}{2}-1} e^{\left(\frac{s+Q_{\MT}}{2} + it\right)z}}{B(\frac{b}{2}, \frac{a+p_{\MT}}{2})\  _{1}F_{1}(\frac{b}{2}, \frac{a+b+p_{\MT}}{2}, \frac{s+Q_{\MT}}{2})} dz \nonumber 
 = \frac{\  _{1}F_{1}(\frac{b}{2}, \frac{a+b+p_{\MT}}{2}, \frac{s+Q_{\MT}}{2}+it)}{\  _{1}F_{1}(\frac{b}{2}, \frac{a+b+p_{\MT}}{2}, \frac{s+Q_{\MT}}{2})} \nonumber%\\
\end{align}
Lemma \ref{lemma:Q} shows that if $\MT \neq \Mnull$, then $s + Q_\MT = O_P(n)$. 
If $b = O(1)$,
then by \eqref{eq:1F1_limit_positive_s}
and the continuous mapping theorem,
 for any $t \in \mathbb{R}$, as $n$ goes in to infinity,
%the characteristic function converges in probability to
\begin{equation*}
 \phi_z(t)  
\longrightarrow  \frac{ \exp(\frac{s+Q_{\MT}}{2}+it) \cdot (\frac{s+Q_{\MT}}{2}+it)^{-\frac{a+p_{\MT}}{2}} }{\exp(\frac{s+Q_{\MT}}{2}) \cdot (\frac{s+Q_{\MT}}{2})^{-\frac{a+p_{\MT}}{2}} } 
\stackrel{P}{\longrightarrow}  \exp(it).
\end{equation*}
If $b = O(n)$, then using formula \eqref{eq:1F1_limit_large_b_s},
we can obtain the same limit.

For the robust prior, we examine the characteristic function of $u = 1-z$.
Based on \eqref{eq:u_post_robust},
\begin{align*}
\phi_u(t)  & = \mathbb{E}\left( e^{itu} \right)
 = \frac{\displaystyle\int_0^{\frac{p_\MT + 1}{n + 1}}
	u^{\frac{p_\MT+1}{2}-1} e^{\left(it - \frac{Q_{\MT}}{2} \right)u}du}
	{\displaystyle\int_0^{\frac{p_\MT + 1}{n + 1}}
	u^{\frac{p_\MT+1}{2}-1} e^{- \frac{Q_{\MT}u}{2}}du}\\ 
& = \frac{\gamma\left(\frac{p_\MT + 1}{2},  \frac{(Q_{\MT}-2it)(p_\MT + 1)}{2(n + 1)}\right)}
	{\gamma\left(\frac{p_\MT + 1}{2}, \frac{Q_{\MT}(p_\MT + 1)}{2(n + 1)}\right)}
	\cdot\left(\frac{Q_{\MT}-2it}{Q_{\MT}}  \right)^{-\frac{p_\MT+1}{2}}. 
\end{align*}
Since $Q_\MT = O_P(n)$, for any fixed $t\in \mathbb{R}$, 
the ratio of the incomplete Gamma functions goes to $1$, and 
so does the second fraction. Therefore, $\phi_u(t)  \stackrel{\text{P}}{\longrightarrow} 1$,
which is the characteristic function of the degenerate distribution at $0$.

For the intrinsic prior, by \eqref{eq:tCCH_post} and Table \ref{tb:tCCHg_parameters}, 
the conditional posterior of $u$ is
\begin{equation}\label{eq:u_post_instrinsic}
u \mid \mathbf{Y}, \MT \sim 
\text{tCCH}\left( \frac{p_\MT + 1}{2}, \frac{1}{2}, 1, \frac{Q_\MT}{2}, 
\frac{n + p_\MT + 1}{p_\MT + 1}, \frac{n + p_\MT + 1}{n} \right),
\end{equation}
and hence its characteristic function for any $t \in \mathbb{R}$ is
\begin{equation}\label{eq:chfcn_intrinsic}
\phi_u(t)  
= 	\exp\left\{\frac{ it (p_\MT + 1)}{n + p_\MT + 1} \right\}
	\frac{\Phi_1\left(\frac{1}{2}, 1, \frac{p_\MT + 2}{2}, 
	\frac{(Q_\MT - 2it)(p_\MT + 1)}{2(n + p_\MT + 1)}, -\frac{p_\MT + 1}{n}  \right)}
	{\Phi_1\left(\frac{1}{2}, 1, \frac{p_\MT + 2}{2}, 
	\frac{Q_\MT(p_\MT + 1)}{2(n + p_\MT + 1)}, -\frac{p_\MT + 1}{n}  \right)}.
\end{equation}
Since $Q_\MT = O_P(n)$ and
\[
\frac{(Q_\MT - 2it)(p_\MT + 1)}{2(n + p_\MT + 1)} 
- \frac{Q_\MT(p_\MT + 1)}{2(n + p_\MT + 1)} \stackrel{\text{P}}{\longrightarrow} 0,
\]
by continuous mapping theorem, 
the ratio of the two $\Phi_1$ functions in \eqref{eq:chfcn_intrinsic} converges
to one in probability.
Therefore, under the intrinsic prior, 
$\phi_u(t) \stackrel{\text{P}}{\longrightarrow} 1$.
\end{proof}

%%%%%%%%%%%%%%%%%%%%%%%%%%%%%%%%%%%%%%%%%%%%%%%%%%
\subsection{Proof of Theorem \ref{thm:consistency_post_beta}}\label{PROOFthm:consistency_post_beta}

\begin{proof}

For the CH prior, we will prove the BMA estimation consistency in two steps: 
1) $\MT \neq \Mnull$ and 2) $\MT = \Mnull$.
When $\MT \neq \Mnull$, the model selection consistency always holds,
so we just need to show the estimation consistency under the true model $\MT$.
For notation simplicity, we denote $\boldsymbol\Sigma_{n, \MM} 
= \mathcal{J}_n(\hat{\boldsymbol\beta}_{\MM})^{-1}$.
According to  \eqref{eq:betapost} and \eqref{eq:z_post}, the characteristic function
of the posterior distribution $p(\boldsymbol\beta_{\MT} \mid \MT, \mathbf{Y})$ is
\begin{align*}
 \phi_{\boldsymbol\beta_{\MT}}(\mathbf{t})
& = \int e^{i \mathbf{t}^T \boldsymbol\beta_{\MT}}\ 
  p(\boldsymbol\beta_{\MT} | \MT, \mathbf{Y})\ d\boldsymbol\beta_{\MT}\\
& 
= \int e^{i \mathbf{t}^T \boldsymbol\beta_{\MT}} 
  \left\{ \int p(\boldsymbol\beta_{\MT} 
  | z,\MT, \mathbf{Y})\ p(z | \MT, \mathbf{Y}) dz \right\} d\boldsymbol\beta_{\MT}\\
 & = \int \left\{ \int e^{i \mathbf{t}^T \boldsymbol\beta_{\MT}}\ 
  p(\boldsymbol\beta_{\MT} 
  | z, \MT, \mathbf{Y})\ d\boldsymbol\beta_{\MT} \right\} p(z | \MT, \mathbf{Y}) dz \\
&
= \int  e^{ z(i\mathbf{t}^T \hat{\boldsymbol\beta}_{\MT}
   - \frac{1}{2}  \mathbf{t}^T \boldsymbol\Sigma_{n,\MT}  \mathbf{t}) } \ p(z| \MT, \mathbf{Y}) dz
\end{align*}
In the above calculation, the integrand 
$e^{i \mathbf{t}^T \boldsymbol\beta_{\MT}}$ has a bounded modulus,
so according to Fubini's Theorem, the two integrals
(with respect to $z$ and $\boldsymbol\beta_{\MT}$) can be interchanged.
Since $Q_\MT = O_P(n)$ and $\boldsymbol\Sigma_{n,\MT} = O_P(n^{-1})$, 
using methods similar to the proof of Proposition \ref{prop:consistency_shrinkage_factor1}
and asymptotic normality of MLE,
we can show that for any vector $\mathbf{t}$,
\begin{align*}
\phi_{\boldsymbol\beta_{\MT}}(\mathbf{t})
\longrightarrow
  e^{i\mathbf{t}^T \hat{\boldsymbol\beta}_{\MT} 
  - \frac{1}{2}  \mathbf{t}^T \boldsymbol\Sigma_{n,\MT}  \mathbf{t}}
\stackrel{P}{\longrightarrow} e^{i\mathbf{t}^T \boldsymbol\beta_{\MT}^*}.
\end{align*}

On the other hand, when $\MT = \Mnull$,
under the CH prior model selection consistency does not hold if 
both $b$ and $s$ are fixed.
Hence we need to examine the limit of posterior distribution of $\boldsymbol\beta_\MM$ 
under all models.
Under any model $\MM$, the true model is nested in it, so
the MLE of the coefficient $\hat{\boldsymbol\beta}_\MM$ 
converges to the true parameters $\mathbf{0}$ in probability as $n$ goes to infinity.
Since the modulus of $e^{i \mathbf{t}^T \boldsymbol\beta_{\MM}}$ is bounded by a constant $1$, 
which is integrable if regarded as a function of $z$,
so according to the dominated convergence theorem, 
the characteristic function of the posterior distribution $p(\boldsymbol\beta_{\MM} \mid \mathbf{Y}, \MM)$ 
evaluated at any vector $\mathbf{t} \in \mathbb{R}^p$ is
\begin{align*}
\phi_{\boldsymbol\beta_{\MM}}(\mathbf{t}) 
& = \int  e^{ z(i\mathbf{t}^T \hat{\boldsymbol\beta}_{\MM} 
  - \frac{1}{2}  \mathbf{t}^T \boldsymbol\Sigma_{n,\MM}  \mathbf{t}) } \ p(z\mid \MM, \mathbf{Y}) dz\\
& \stackrel{P}{\longrightarrow}\int  \left[ e^{ z(i\mathbf{t}^T \mathbf{0} 
 - \frac{1}{2}  \mathbf{t}^T \mathbf{0} \mathbf{t}) } \right]  p(z\mid \MM, \mathbf{Y}) dz =1.
\end{align*}
%Therefore, mean and covariance
%of the posterior of $\boldsymbol\beta_{\MM}$ under any model converges to $\mathbf{0}$ in probability.

For the robust and intrinsic priors, model selection consistency always holds. 
So we just need to consider under $\MT$.
Based on \eqref{eq:u_post_robust} and \eqref{eq:u_post_instrinsic},
proofs similar to the above proof of the CH prior can show that
either $\MT \neq \Mnull$ or $\MT =\Mnull$, 
the characteristic function of $p(\boldsymbol\beta_{\MT} \mid \MT, \mathbf{Y})$ 
converges to $e^{i\mathbf{t}^T \boldsymbol\beta_{\MT}^*}$ or $1$ in probability,
 respectively.
\end{proof}

\section{Test-Based Bayes Factors}\label{section:TBF} %% Section 3
%%%%%%%%%%%%%%%%%%%%%%%%%%%%%%%%%%%%%%%%%%%%%%%%%%%%%

\subsection{Test-Based Bayes Factor under the $g$-Prior}\label{subsection:TBF}
In Bayesian hypothesis testing, 
while the traditional Bayes factor computes the ratio between marginal likelihoods of data
(referred to as data-based BF, or DBF in short), another type of Bayes factor,
defined as the ratio between marginal likelihoods of a test statistic, has also been 
introduced \citep{Johnson_2005, Johnson_2008}. 
In particular, based on the likelihood ratio statistic, 
the test-based Bayes factor (TBF) has been applied 
in model selection under the $g$-prior
\citep{Hu_Johnson_2009, Held_etal_2015, Held_etal_2016},
where models with high TBFs are preferable. 

To compute the TBF based on the likelihood ratio deviance $z_\MM$ \eqref{eq:z}, 
first, asymptotic theory \citep{Davidson_Lever_1970} suggests that
the limit distribution of $z_\MM$ under the null model $\Mnull$
and under a local alternative model $\MM$ are central and non-central
Chi-squares, respectively,
\[
z_\MM \mid \Mnull \sim \chi^2_{p_\MM}, \quad
z_\MM \mid \MM \sim \chi^2_{p_\MM}(\lambda_\MM), \ \text{where }
\lambda_\MM= \b_\MM^T \mathcal{I}_n(\b_\MM = \zero) \b_\MM.
\]
%Here, local alternative means that under model $\MM$, 
%the true value of $\b_\MM$ is within a vanishing neighborhood of the null value, i.e.,  
%$\b_\MM = O(n^{-1/2})$.
Then, as $p(z_\MM \mid \MM, \b_\MM)$ depends on $\b_\MM$ through the 
non-centrality parameter $\lambda_\MM$, integrating $\b_\MM$ out under 
its prior density yields the marginal likelihood $p(z_\MM \mid \MM)$.
Last, the TBF is defined as the ratio
\begin{equation}%\label{eq:TBF_g}
\text{TBF}_{\MM:{\Mnull}} 
= \frac{p(z_\MM \mid \MM)}{p(z_{\MM} \mid \Mnull)}
= \frac{\int p(z_\MM \mid \b_\MM, \MM) p(\b_\MM \mid \MM) d \b_\MM}{p(z_{\MM} \mid \Mnull)}.
\end{equation}

To conduct model selection in GLMs, \citet{Held_etal_2015} derive the TBF under 
the $g$-prior \eqref{eq:g-prior_SBH2011}, in whose density, $\b_\MM$ appears in the format of $\lambda_\MM$.
Thus the conjugacy permits a tractable marginal likelihood $p(z_\MM \mid \MM)$
as a Gamma distribution.
Therefore, the resulting TBF has a closed form expression as in \eqref{eq:TBF_g}.

\subsection{Comparing Data-Based and Test-Based Bayes Factors}\label{subsection:TBF_DBF}

The TBF \eqref{eq:TBF_g} has a similar expression to the DBF \eqref{eq:DBF_g}.
In fact, the two Bayes factors would be the same
if $z_\MM = Q_\MM$ and $\mathcal{J}_n( \hat{\alpha}_{\Mnull}) 
= \mathcal{J}_n( \hat{\alpha}_{\MM})$. 
Naturally, it is interesting to examine
how different the two Bayes factors are.

We compare DBF \eqref{eq:DBF_g} and TBF \eqref{eq:TBF_g} 
empirically through a logistic regression toy example,
with $g = n$ and a single covariate generated from independent standard normal distributions.
With the intercept set to $\alpha = 0.5$, three scenarios are studied with different coefficients
$\beta = 0, 20/\sqrt{n}, 2$, which correspond to the null, local alternative, and alternative, 
respectively. To study asymptotics, various sample sizes $n = 100, 500, 1000, 5000$ are taken.
For each combination of $\beta$ and $n$, 100 independent datasets are generated.
To obtain an accurate approximation to the DBF, 
in addition to the integrated Laplace approximation (ILA) formula  \eqref{eq:DBF_g},
we also implement importance sampling (IS), which can be viewed as a 
gold standard if the number of samples drawn is  large. 
Here we draw $m=10000$ samples $\alpha^{(t)}, \beta^{(t)}$, independently from 
Student-$t$ distributions with degrees of freedom $4$, with location and scale parameters  
matching those in the corresponding conditional posteriors 
\eqref{eq:betapost}, \eqref{eq:alphapost}.

\begin{figure}
\begin{center}
\includegraphics[width = \textwidth]{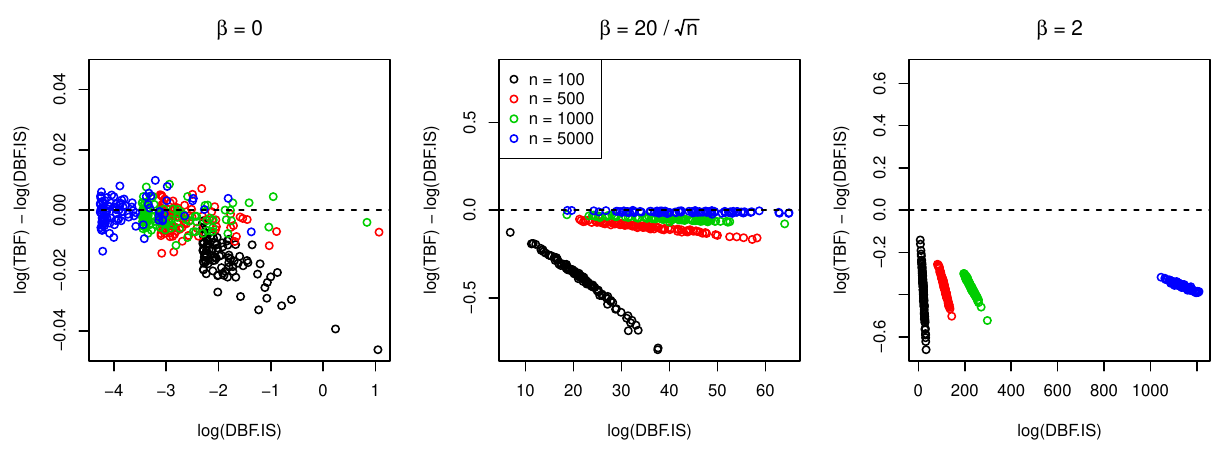}
\includegraphics[width = \textwidth]{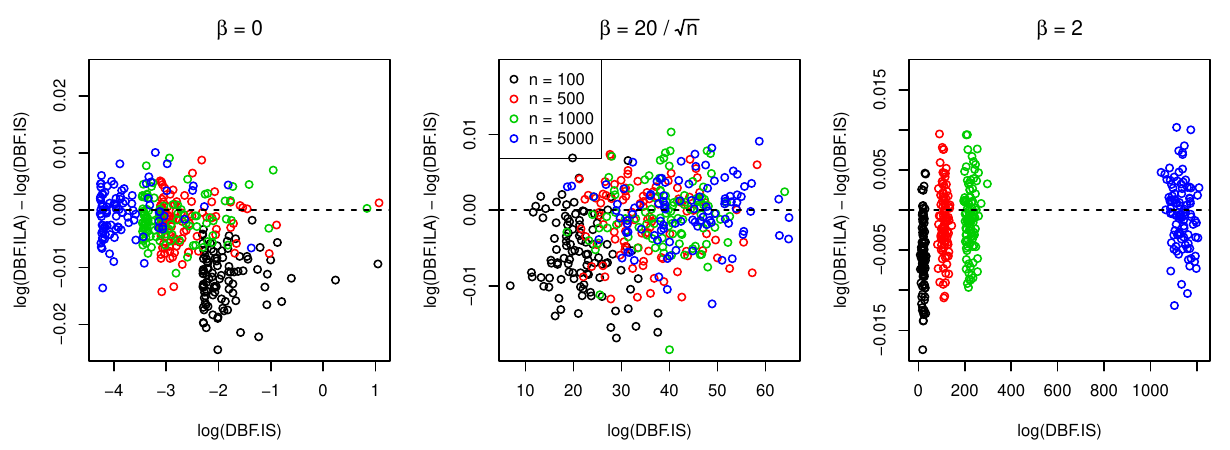}
\includegraphics[width = \textwidth]{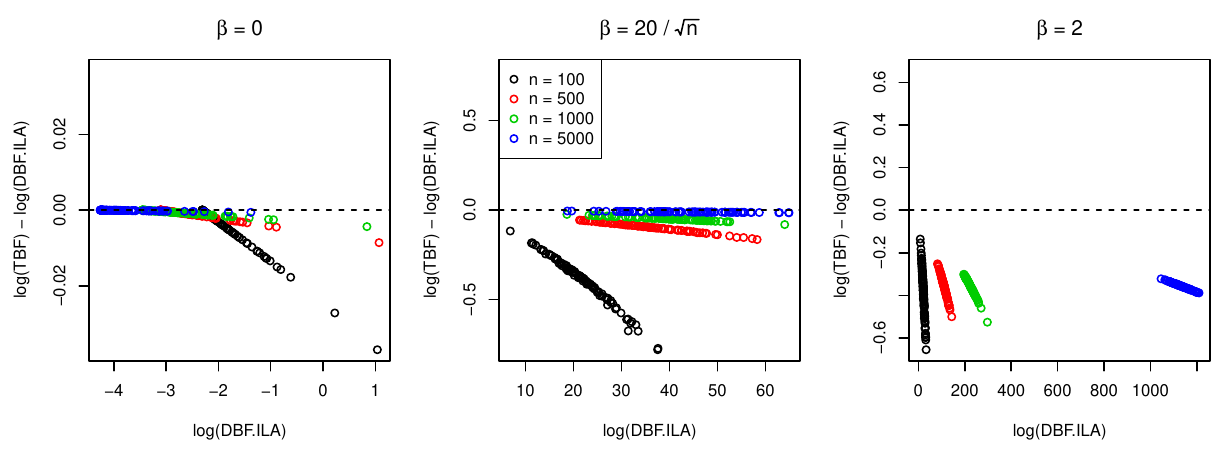}
\end{center}
\caption{From top to bottom: TBF versus DBF approximated by IS, 
DBF approximated by ILA vs DBF approximated by IS, and TBF versus DBF approximated by ILA.
From left to right: the null, local alternative, and alternative hypotheses. 
}
\label{fig:DBF_TBF}
\end{figure}

\begin{figure}
\begin{center}
\includegraphics[width = \textwidth]{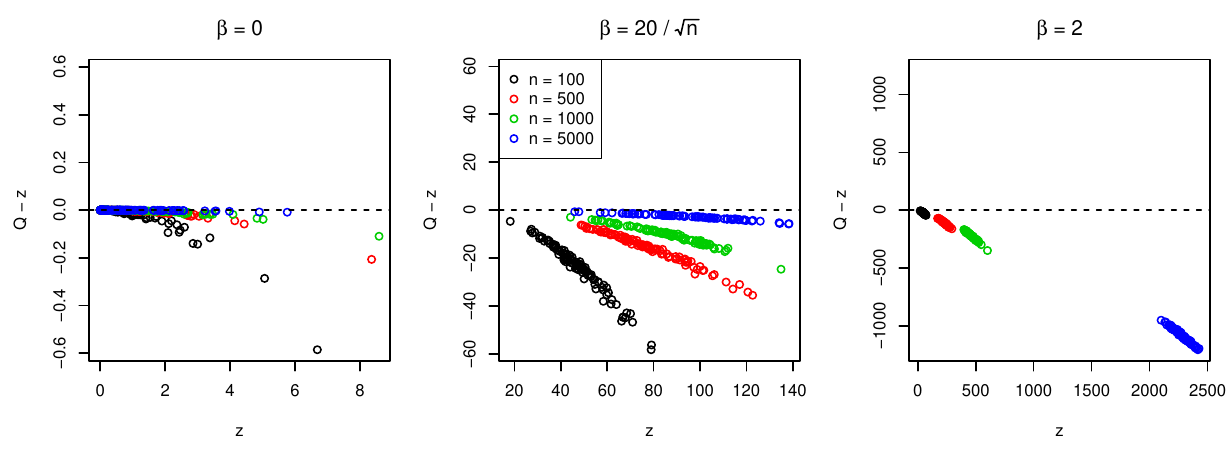}
\end{center}
\caption{Wald statistic $Q_\MM$ versus the deviance $z_\MM$.}
\label{fig:Q_z}
\end{figure}

Figure \ref{fig:DBF_TBF} shows that when the null or the local alternative is true, 
TBF \eqref{eq:TBF_g} is asymptotically the same
as the DBF computed  under either IS or ILA \eqref{eq:DBF_g}. 
In contrast, when the alternative is true, TBF differs from DBF by a relatively small 
but systematic amount.
Comparison between the Wald statistic $Q_\MM$ \eqref{eq:Q} and the deviance $z_\MM$ \eqref{eq:z} 
suggests a similar phenomenon (Figure \ref{fig:Q_z}).
They are asymptotically the same under the null or local alternative, but different under the alternative.

In addition to the similarity between the two Bayes factors under $g$-priors, 
we notice that as a function of $g$, the test-based marginal likelihood 
would have the same kernel 
$p(z_\MM \mid \MM) \propto (1+g)^{-p_\MM / 2} \exp\left( -z_\MM / [2(1+g)] \right)$
as its data-based counterpart \eqref{eq:marlik_fixedg}
if $z_\MM = Q_\MM$.
Therefore, all empirical Bayes and fully Bayes approaches on $g$,
discussed in Section \ref{subsection:EB} and Section \ref{section:mix_g},
can be readily applied to test-based methods with minimal changes.
\citet{Held_etal_2015} apply local empirical Bayes, 
$p(z_\MM \mid \MM) = \max_{g\geq 0} p(z_\MM \mid g, \MM)$,
and fully Bayes, $p(z_\MM \mid \MM) = \int p(z_\MM \mid g, \MM) p(g) dg$
to compute marginal likelihoods for TBFs.  
However, we find that these optimized and integrated versions of TBF may 
no longer be coherent, in the sense that results change with the choice of the baseline model.
Elaborating, when testing nested models $\MM_1 \subset \MM_2$, 
\[
\text{TBF}_{\MM_2:{\MM_1}} \neq \frac{\text{TBF}_{\MM_2:{\Mnull}}}{ \text{TBF}_{\MM_1:{\Mnull}}},
\] 
if one computes the left hand side TBF under baseline $\MM_1$,
but computes the right hand side TBFs under baseline $\Mnull$.
The main reason for this incoherence is that for model $\MM$,
unlike the data-based marginal likelihood which only depends on $\MM$ itself, 
the test statistic $z_\MM$ also depends on the baseline model.
On the other hand, coherence exists for the TBF \eqref{eq:TBF_g} under fixed $g$,
since $z_{\MM_2:\MM_1} = z_{\MM_2:\Mnull} - z_{\MM_1:\Mnull}$ \citep{Johnson_2008}.
Hence, change of baseline models does not affect the results of the TBF under fixed $g$, 
which is also the case with the DBF.

%%%%%%%%%%%%%%%%%%%%%%%%%%%%%%%%%%%%%%%%%%%%%%%%%%%%%%%%
\section{Additional Simulation Examples}\label{section:poisson}
%%%%%%%%%%%%%%%%%%%%%%%%%%%%%%%%%%%%%%%%%%%%%%%%%%%%%%%%

We first include some additional results from the logistic regression simulation example
that are examined in Section \ref{section:simulation} (see Table \ref{tb:selection_logistic_averagesize})
and then introduce a different simulation study on 
Poisson regressions.

\begin{table}[ht]
\centering
\caption[Logistic regression: model selection accuracy under 0-1 loss.]
{%Proper prior on $\alpha_\MM$. 
Logistic regression simulation example: 
average size of selected models, out of 100 realizations.
} \label{tb:selection_logistic_averagesize}
\begin{tabular}{| l | rr|rr|rr|rr | rr | rr|}
  \hline
$p$ 			& \multicolumn{8}{c|}{20}	& \multicolumn{4}{c|}{100}	 \\ \hline
$p(\MM)$		&  \multicolumn{8}{c|}{Uniform}	&  \multicolumn{2}{c|}{Uniform}	 &  \multicolumn{2}{c|}{BB$(1,1)$}	\\ \hline
$p_\MT$ 		& \multicolumn{2}{c|}{0}	& \multicolumn{2}{c|}{5}	& \multicolumn{2}{c|}{10}	& \multicolumn{2}{c|}{20}	& \multicolumn{2}{c|}{5}	& \multicolumn{2}{c|}{5} \\ 
$r$ 				& 0		& 0.75	& 0		& 0.75	& 0		& 0.75	& 0		& 0.75	& 0		& 0.75		& 0		& 0.75		 \\ 
  \hline
%CH$(a=1/2,b=n)$ & 0 & 0 & 5 & 4 & 10 & 8 & 17 & 13 & 18 & 15 & 5 & 3 \\ 
%  CH$(a=1,b=n)$ & 0 & 0 & 5 & 5 & 10 & 8 & 17 & 13 & 19 & 16 & 5 & 3 \\ 
%  CH$(a=1/2,b=n/2)$ & 0 & 0 & 6 & 5 & 10 & 9 & 17 & 14 & 26 & 21 & 5 & 3 \\ 
%  CH$(a=1,b=n/2)$ & 0 & 0 & 6 & 5 & 10 & 9 & 17 & 14 & 27 & 22 & 5 & 3 \\ 
%  Beta-prime & 0 & 0 & 5 & 4 & 10 & 8 & 17 & 13 & 18 & 16 & 5 & 3 \\ 
%  ZS adapted & 0 & 0 & 5 & 5 & 10 & 8 & 17 & 13 & 14 & 9 & 4 & 3 \\ 
%  Benchmark & 0 & 0 & 6 & 6 & 11 & 10 & 18 & 15 & 26 & 21 & 5 & 3 \\ 
%  Robust & 0 & 0 & 6 & 5 & 11 & 9 & 18 & 14 & 34 & 31 & 17 & 7 \\ 
%  Intrinsic & 0 & 0 & 6 & 5 & 11 & 9 & 18 & 14 & 32 & 30 & 11 & 4 \\ 
%  Hyper-$g/n$ & 0 & 1 & 6 & 5 & 11 & 10 & 18 & 15 & 67 & 57 & 96 & 75 \\ 
%  DBF, $g=n$ & 0 & 0 & 5 & 4 & 9 & 7 & 15 & 11 & 7 & 5 & 5 & 3 \\ 
%  TBF, $g=n$ & 0 & 0 & 5 & 4 & 9 & 7 & 15 & 11 & 7 & 5 & 5 & 3 \\ 
%  Jeffreys & 3 & 3 & 6 & 6 & 11 & 10 & 18 & 15 & 69 & 60 & 97 & 86 \\ 
%  Hyper-$g$ & 4 & 4 & 6 & 6 & 11 & 10 & 18 & 15 & 69 & 60 & 99 & 91 \\ 
%  Uniform & 4 & 4 & 7 & 6 & 12 & 10 & 18 & 15 & 70 & 61 & 100 & 96 \\ 
%  Local EB & 19 & 19 & 6 & 6 & 11 & 10 & 18 & 15 & 70 & 60 & 99 & 90 \\ 
%  AIC & 3 & 3 & 8 & 7 & 12 & 11 & 18 & 15 & 35 & 34 & 7 & 4 \\ 
%  BIC & 0 & 0 & 5 & 4 & 9 & 7 & 15 & 11 & 7 & 5 & 5 & 3 \\ \hline
CH$(a=1/2,b=n)$ & 0 & 0 & 5 & 4 & 10 & 8 & 17 & 13 & 17 & 15 & 5 & 3 \\ 
  CH$(a=1,b=n)$ & 0 & 0 & 5 & 5 & 10 & 8 & 17 & 13 & 18 & 15 & 5 & 3 \\ 
  CH$(a=1/2,b=n/2)$ & 0 & 0 & 6 & 5 & 10 & 9 & 17 & 14 & 25 & 20 & 5 & 3 \\ 
  CH$(a=1,b=n/2)$ & 0 & 0 & 6 & 5 & 10 & 9 & 17 & 14 & 26 & 22 & 5 & 3 \\ 
  Beta-prime & 0 & 0 & 5 & 4 & 10 & 8 & 17 & 13 & 19 & 15 & 5 & 3 \\ 
  ZS adapted & 0 & 0 & 5 & 5 & 10 & 8 & 17 & 13 & 18 & 15 & 5 & 3 \\ 
  Benchmark & 0 & 0 & 6 & 6 & 11 & 10 & 18 & 15 & 27 & 24 & 5 & 3 \\ 
  Robust & 0 & 0 & 6 & 5 & 11 & 9 & 18 & 14 & 34 & 30 & 21 & 10 \\ 
  Intrinsic & 0 & 0 & 6 & 5 & 11 & 9 & 18 & 14 & 32 & 30 & 14 & 5 \\ 
  Hyper-$g/n$ & 0 & 1 & 6 & 5 & 11 & 10 & 18 & 15 & 69 & 56 & 99 & 80 \\ 
  DBF, $g=n$ & 0 & 0 & 5 & 4 & 9 & 7 & 15 & 11 & 7 & 5 & 5 & 3 \\ 
  TBF, $g=n$ & 0 & 0 & 5 & 4 & 9 & 7 & 15 & 11 & 7 & 5 & 5 & 3 \\ 
  Jeffreys & 3 & 3 & 6 & 6 & 11 & 10 & 18 & 15 & 70 & 60 & 99 & 91 \\ 
  Hyper-$g$ & 4 & 4 & 6 & 6 & 11 & 10 & 18 & 15 & 70 & 61 & 100 & 93 \\ 
  Uniform & 4 & 4 & 7 & 6 & 12 & 10 & 18 & 15 & 70 & 61 & 100 & 97 \\ 
  Local EB & 19 & 19 & 6 & 6 & 11 & 10 & 18 & 15 & 71 & 60 & 100 & 96 \\ 
  AIC & 3 & 3 & 8 & 7 & 12 & 11 & 18 & 15 & 34 & 34 & 6 & 4 \\ 
  BIC & 0 & 0 & 5 & 4 & 9 & 7 & 15 & 11 & 7 & 5 & 5 & 3 \\ 
   \hline
\end{tabular}
\end{table}

The simulation setup of the Poisson regression example is similar to that
of the logistic regression in Section \eqref{section:simulation}.
True values of coefficients (including the intercept) are set to one-fifth of those
in the logistic regression, to avoid occasional extremely large values in $\mathbf{Y}$.
Tables \ref{tb:selection_poisson_top}-\ref{tb:selection_poisson_bma}
display model selection and parameter estimation performance. 
Comparison among priors on $\b_\MM$ leads to similar conclusions to 
the logistic regression example. 
For the Poisson regression, overall model selection accuracy is not as high
as the logistic regression when $\MT \neq \Mnull$, which is
likely due to the smaller magnitude of coefficients.

\setlength{\tabcolsep}{5pt}

\begin{table}[ht]
\centering
\caption[Poisson regression: model selection accuracy under 0-1 loss.]
{%Proper prior on $\alpha_\MM$. 
Poisson regression simulation example:
number of times the true model are selected out of 100 realizations.
Column-wise maximum is in bold type.} \label{tb:selection_poisson_top}
\begin{tabular}{| l | rr|rr|rr|rr | rr | rr|}
  \hline
$p$ 			& \multicolumn{8}{c|}{20}	& \multicolumn{4}{c|}{100}	 \\ \hline
$p(\MM)$		&  \multicolumn{8}{c|}{Uniform}	&  \multicolumn{2}{c|}{Uniform}	 &  \multicolumn{2}{c|}{BB$(1,1)$}	\\ \hline
$p_\MT$ 		& \multicolumn{2}{c|}{0}	& \multicolumn{2}{c|}{5}	& \multicolumn{2}{c|}{10}	& \multicolumn{2}{c|}{20}	& \multicolumn{2}{c|}{5}	& \multicolumn{2}{c|}{5} \\ 
$r$ 				& 0		& 0.75	& 0		& 0.75	& 0		& 0.75	& 0		& 0.75	& 0		& 0.75		& 0		& 0.75		 \\ 
  \hline
%CH$(a=1/2,b=n)$ & 94 & 92 & 10 & {\bf 2} & 10 & 0 & 0 & 0 & 4 & 0 & 1 & 0 \\ 
%  CH$(a=1,b=n)$ & 87 & 89 & 10 & {\bf 2} & 10 & 0 & 0 & 0 & 5 & {\bf 2} & 1 & 0 \\ 
%  CH$(a=1/2,b=n/2)$ & 91 & 89 & {\bf 11} & {\bf 2} & 10 & 0 & 0 & 0 & 5 & 1 & 1 & 0 \\ 
%  CH$(a=1,b=n/2)$ & 82 & 85 & {\bf 11} & {\bf 2} & 9 & 0 & 0 & 0 & 1 & 0 & 2 & 0 \\ 
%  Beta-prime & 94 & 92 & 10 & {\bf 2} & 10 & 0 & 0 & 0 & 6 & 0 & 1 & 0 \\ 
%  ZS adapted & 95 & {\bf 94} & 10 & {\bf 2} & 11 & 0 & 0 & 0 & 1 & 0 & 0 & 0 \\ 
%  Benchmark & {\bf 97} & 93 & 7 & 0 & 12 & {\bf 1} & 0 & 0 & 4 & 0 & 1 & 0 \\ 
%  Robust & 91 & 89 & 9 & {\bf 2} & 11 & 0 & 0 & 0 & 3 & 0 & 3 & 0 \\ 
%  Intrinsic & 85 & 88 & 8 & {\bf 2} & 12 & {\bf 1}  & 0 & 0 & 2 & 0 & 3 & 0 \\ 
%  Hyper-$g/n$ & 84 & 87 & 9 & 0 & 12 & {\bf 1}  & 0 & 0 & 1 & 0 & 3 & 0 \\ 
%  DBF, $g=n$ & 84 & 88 & 7 & 0 & 8 & 0 & 0 & 0 & {\bf 15} & 0 & 1 & 0 \\ 
%  TBF, $g=n$ & 84 & 88 & 7 & 0 & 8 & 0 & 0 & 0 & 13 & 0 & 1 & 0 \\ 
%  Jeffreys & 0 & 0 & 7 & 1 & 12 & 1 & 0 & 0 & 0 & 0 & 3 & 0 \\ 
%  Hyper-$g$ & 6 & 7 & 7 & 0 & {\bf 13} & {\bf 1}  & 0 & 0 & 1 & 0 & 3 & 0 \\ 
%  Uniform & 4 & 2 & 7 & 0 & {\bf 13} & {\bf 1}  & 0 & 0 & 1 & 0 & 3 & 0 \\ 
%  Local EB & 0 & 0 & 7 & 0 & {\bf 13} & {\bf 1}  & 0 & 0 & 1 & 0 & 3 & 0 \\ 
%  AIC & 4 & 4 & 3 & 0 & 6 & {\bf 1}  & {\bf 1}  & 0 & 0 & 0 & {\bf 11} & 0 \\ 
%  BIC & 84 & 88 & 7 & 0 & 8 & 0 & 0 & 0 & 12 & 0 & 1 & 0 \\ \hline
  CH$(a=1/2,b=n)$ & 94 & 92 & 10 & {\bf 2} & 10 & 0 & 0 & 0 & 2 & 0 & 1 & 0 \\ 
  CH$(a=1,b=n)$ & 87 & 89 & 10 & {\bf 2} & 10 & 0 & 0 & 0 & 11 & 1 & 1 & 0 \\ 
  CH$(a=1/2,b=n/2)$ & 91 & 89 & {\bf 11} & {\bf 2} & 10 & 0 & 0 & 0 & 3 & 0 & 1 & 0 \\ 
  CH$(a=1,b=n/2)$ & 82 & 85 & {\bf 11} & {\bf 2} & 9 & 0 & 0 & 0 & 5 & {\bf 2} & 2 & 0 \\ 
  Beta-prime & 94 & 92 & 10 & {\bf 2} & 10 & 0 & 0 & 0 & 7 & 0 & 1 & 0 \\ 
  ZS adapted & 87 & 89 & 10 & {\bf 2} & 11 & 0 & 0 & 0 & 6 & 0 & 1 & 0 \\ 
  Benchmark & {\bf 97} & {\bf 93} & 7 & 0 & 12 & {\bf 1} & 0 & 0 & 4 & 0 & 1 & 0 \\ 
  Robust & 91 & 89 & 9 & {\bf 2} & 11 & 0 & 0 & 0 & 1 & 0 & 3 & 0 \\ 
  Intrinsic & 85 & 88 & 8 & {\bf 2} & 12 & {\bf 1} & 0 & 0 & 1 & 0 & 3 & 0 \\ 
  Hyper-$g/n$ & 84 & 87 & 9 & 0 & 12 & {\bf 1} & 0 & 0 & 1 & 0 & 3 & 0 \\ 
  DBF, $g=n$ & 84 & 88 & 7 & 0 & 8 & 0 & 0 & 0 & 11 & 0 & 1 & 0 \\ 
  TBF, $g=n$ & 84 & 88 & 7 & 0 & 8 & 0 & 0 & 0 & {\bf 14} & 0 & 1 & 0 \\ 
  Jeffreys & 0 & 0 & 7 & 1 & 12 & 1 & 0 & 0 & 0 & 0 & 3 & 0 \\ 
  Hyper-$g$ & 6 & 7 & 7 & 0 & {\bf 13} & {\bf 1} & 0 & 0 & 0 & 0 & 3 & 0 \\ 
  Uniform & 4 & 2 & 7 & 0 & {\bf 13} & {\bf 1} & 0 & 0 & 1 & 1 & 3 & 0 \\ 
  Local EB & 0 & 0 & 7 & 0 & {\bf 13} & {\bf 1} & 0 & 0 & 0 & 0 & 3 & 0 \\ 
  AIC & 4 & 4 & 3 & 0 & 6 & {\bf 1} & {\bf 1} & 0 & 0 & 0 & {\bf 8} & 0 \\ 
  BIC & 84 & 88 & 7 & 0 & 8 & 0 & 0 & 0 & 13 & 1 & 1 & 0 \\ 
   \hline
\end{tabular}
\end{table}

\begin{table}[ht]
\centering
\caption[Poisson regression: model selection accuracy under 0-1 loss.]
{%Proper prior on $\alpha_\MM$. 
Poisson regression simulation example: 
average size of selected models, out of 100 realizations.
} \label{tb:selection_poisson_averagesize}
\begin{tabular}{| l | rr|rr|rr|rr | rr | rr|}
  \hline
$p$ 			& \multicolumn{8}{c|}{20}	& \multicolumn{4}{c|}{100}	 \\ \hline
$p(\MM)$		&  \multicolumn{8}{c|}{Uniform}	&  \multicolumn{2}{c|}{Uniform}	 &  \multicolumn{2}{c|}{BB$(1,1)$}	\\ \hline
$p_\MT$ 		& \multicolumn{2}{c|}{0}	& \multicolumn{2}{c|}{5}	& \multicolumn{2}{c|}{10}	& \multicolumn{2}{c|}{20}	& \multicolumn{2}{c|}{5}	& \multicolumn{2}{c|}{5} \\ 
$r$ 				& 0		& 0.75	& 0		& 0.75	& 0		& 0.75	& 0		& 0.75	& 0		& 0.75		& 0		& 0.75		 \\ 
  \hline
%CH$(a=1/2,b=n)$ & 0 & 0 & 4 & 3 & 9 & 5 & 13 & 7 & 12 & 7 & 3 & 2 \\ 
%  CH$(a=1,b=n)$ & 0 & 0 & 4 & 3 & 9 & 5 & 13 & 7 & 13 & 7 & 3 & 2 \\ 
%  CH$(a=1/2,b=n/2)$ & 0 & 0 & 5 & 3 & 9 & 6 & 13 & 8 & 16 & 9 & 3 & 2 \\ 
%  CH$(a=1,b=n/2)$ & 0 & 0 & 5 & 3 & 9 & 6 & 13 & 8 & 16 & 11 & 3 & 2 \\ 
%  Beta-prime & 0 & 0 & 4 & 3 & 9 & 5 & 13 & 7 & 12 & 7 & 3 & 2 \\ 
%  ZS adapted & 0 & 0 & 5 & 3 & 9 & 5 & 13 & 7 & 3 & 2 & 2 & 0 \\ 
%  Benchmark & 0 & 0 & 5 & 4 & 10 & 7 & 14 & 9 & 16 & 7 & 3 & 1 \\ 
%  Robust & 0 & 0 & 5 & 3 & 9 & 6 & 14 & 8 & 21 & 12 & 3 & 2 \\ 
%  Intrinsic & 0 & 0 & 5 & 3 & 10 & 6 & 14 & 8 & 21 & 14 & 3 & 2 \\ 
%  Hyper-$g/n$ & 0 & 0 & 5 & 4 & 9 & 7 & 14 & 9 & 25 & 33 & 3 & 4 \\ 
%  DBF, $g=n$ & 0 & 0 & 4 & 2 & 8 & 5 & 12 & 6 & 6 & 4 & 3 & 2 \\ 
%  TBF, $g=n$ & 0 & 0 & 4 & 2 & 8 & 5 & 12 & 6 & 6 & 4 & 3 & 2 \\ 
%  Jeffreys & 3 & 3 & 5 & 4 & 10 & 7 & 14 & 10 & 28 & 36 & 3 & 11 \\ 
%  Hyper-$g$ & 3 & 4 & 5 & 5 & 10 & 7 & 15 & 10 & 30 & 38 & 3 & 15 \\ 
%  Uniform & 4 & 4 & 6 & 5 & 10 & 7 & 15 & 10 & 31 & 39 & 3 & 25 \\ 
%  Local EB & 19 & 19 & 5 & 5 & 10 & 7 & 15 & 10 & 33 & 73 & 3 & 68 \\ 
%  AIC & 3 & 3 & 7 & 6 & 11 & 8 & 16 & 11 & 30 & 29 & 4 & 2 \\ 
%  BIC & 0 & 0 & 4 & 2 & 8 & 5 & 12 & 6 & 6 & 4 & 3 & 2 \\ \hline
  CH$(a=1/2,b=n)$ & 0 & 0 & 4 & 3 & 9 & 5 & 13 & 7 & 12 & 7 & 3 & 2 \\ 
  CH$(a=1,b=n)$ & 0 & 0 & 4 & 3 & 9 & 5 & 13 & 7 & 13 & 8 & 3 & 2 \\ 
  CH$(a=1/2,b=n/2)$ & 0 & 0 & 5 & 3 & 9 & 6 & 13 & 8 & 16 & 10 & 3 & 2 \\ 
  CH$(a=1,b=n/2)$ & 0 & 0 & 5 & 3 & 9 & 6 & 13 & 8 & 17 & 10 & 3 & 2 \\ 
  Beta-prime & 0 & 0 & 4 & 3 & 9 & 5 & 13 & 7 & 13 & 7 & 3 & 2 \\ 
  ZS adapted & 0 & 0 & 4 & 3 & 9 & 5 & 13 & 7 & 13 & 7 & 3 & 2 \\ 
  Benchmark & 0 & 0 & 5 & 4 & 10 & 7 & 14 & 9 & 17 & 7 & 3 & 1 \\ 
  Robust & 0 & 0 & 5 & 3 & 9 & 6 & 14 & 8 & 20 & 14 & 3 & 2 \\ 
  Intrinsic & 0 & 0 & 5 & 3 & 10 & 6 & 14 & 8 & 22 & 13 & 3 & 2 \\ 
  Hyper-$g/n$ & 0 & 0 & 5 & 4 & 9 & 7 & 14 & 9 & 24 & 31 & 3 & 4 \\ 
  DBF, $g=n$ & 0 & 0 & 4 & 2 & 8 & 5 & 12 & 6 & 5 & 3 & 3 & 2 \\ 
  TBF, $g=n$ & 0 & 0 & 4 & 2 & 8 & 5 & 12 & 6 & 6 & 4 & 3 & 2 \\ 
  Jeffreys & 2 & 3 & 5 & 4 & 10 & 7 & 14 & 10 & 29 & 36 & 3 & 18 \\ 
  Hyper-$g$ & 3 & 4 & 5 & 5 & 10 & 7 & 15 & 10 & 30 & 37 & 3 & 24 \\ 
  Uniform & 4 & 4 & 6 & 5 & 10 & 7 & 15 & 10 & 30 & 38 & 3 & 34 \\ 
  Local EB & 19 & 19 & 5 & 5 & 10 & 7 & 15 & 10 & 32 & 74 & 3 & 76 \\ 
  AIC & 3 & 3 & 7 & 6 & 11 & 8 & 16 & 11 & 30 & 28 & 4 & 2 \\ 
  BIC & 0 & 0 & 4 & 2 & 8 & 5 & 12 & 6 & 5 & 3 & 3 & 2 \\ 
   \hline
\end{tabular}
\end{table}

\begin{table}[ht]
\centering
\caption[Poisson regression: model selection accuracy under 0-1 loss.]
{
Poisson regression simulation example: 
$1000$ times the average SSE
$=\sum_{j=1}^p (\tilde{\beta}_j - \beta_{j, \MT}^*)^2$ of 100 realizations.
%where $\tilde{\beta}_j$ is the Bayesian model averaging estimator.
Column-wise minimum is in bold type.
} \label{tb:selection_poisson_bma}
\begin{tabular}{| l | rr|rr|rr|rr | rr | rr|}
 \hline
$p$ 			& \multicolumn{8}{c|}{20}	& \multicolumn{4}{c|}{100}	 \\ \hline
$p(\MM)$		&  \multicolumn{8}{c|}{Uniform}	&  \multicolumn{2}{c|}{Uniform}	 &  \multicolumn{2}{c|}{BB$(1,1)$}	\\ \hline
$p_\MT$ 		& \multicolumn{2}{c|}{0}	& \multicolumn{2}{c|}{5}	& \multicolumn{2}{c|}{10}	& \multicolumn{2}{c|}{20}	& \multicolumn{2}{c|}{5}	& \multicolumn{2}{c|}{5} \\ 
$r$ 				& 0		& 0.75	& 0		& 0.75	& 0		& 0.75	& 0		& 0.75	& 0		& 0.75		& 0		& 0.75		 \\ 
  \hline
%CH$(a=1/2,b=n)$ & 5 & 8 & {\bf 24} & {\bf 61} & 34 & 120 & 58 & 198 & 65 & 137 & 37 & 104 \\ 
%  CH$(a=1,b=n)$ & 6 & 9 & {\bf 24} &  {\bf 61}  & 34 & 120 & 58 & 197 & 68 & 138 & 37 & 99 \\ 
%  CH$(a=1/2,b=n/2)$ & 7 & 11 & 24 &  {\bf 61}  & {\bf 33} & 116 & 56 & 188 & 75 & 147 & 36 & 98 \\ 
%  CH$(a=1,b=n/2)$ & 7 & 13 & 25 &  {\bf 61}  & {\bf 33} & 115 & 56 & 187 & 73 & 148 & 36 & 94 \\ 
%  Beta-prime & 5 & 8 & {\bf 24} &  {\bf 61}  & 34 & 120 & 58 & 197 & 67 & 140 & 37 & 104 \\ 
%  ZS adapted & 6 & 10 & {\bf 24} & 62 & 34 & 121 & 56 & 199 & 63 & 104 & 65 & 241 \\ 
%  Benchmark & 12 & 30 & 26 & 68 & 36 & 114 & 203 & 175 & 77 & 144 & 36 & 147 \\ 
%  Robust & 7 & 13 & 25 & 63 & {\bf 33} & 115 & 51 & 183 & 90 & 189 & 36 & 98 \\ 
%  Intrinsic & 8 & 14 & 25 & 64 & 34 & 117 & 52 & 184 & 96 & 199 & 36 & 95 \\ 
%  Hyper-$g/n$ & 15 & 34 & 26 & 68 & 34 & 114 & 53 & 177 & 106 & 292 & 36 & 114 \\ 
%  DBF, $g=n$ & 5 & 6 & 25 & 63 & 37 & 132 & 68 & 231 & {\bf 40} & 84 & 39 & 101 \\ 
%  TBF, $g=n$ & 5 & 6 & 25 & 63 & 37 & 132 & 68 & 231 & 41 & {\bf 83} & 39 & 101 \\ 
%  Jeffreys & 15 & 34 & 26 & 69 & 34 & 113 & 52 & 174 & 113 & 296 & 35 & 154 \\ 
%  Hyper-$g$ & 14 & 33 & 26 & 70 & 34 & 113 & 52 & 173 & 114 & 299 & 35 & 211 \\ 
%  Uniform & 14 & 32 & 26 & 70 & 34 & 113 & 51 & 172 & 118 & 299 & 35 & 272 \\ 
%  Local EB & {\bf 2} & {\bf 4} & 26 & 71 & 33 & {\bf 108} & 51 & {\bf 168} & 99 & 256 & {\bf 34} & 204 \\ 
%  AIC & 17 & 40 & 28 & 74 & 34 & 115 & {\bf 46} & 171 & 121 & 287 & 38 & {\bf 82} \\ 
%  BIC & 5 & 6 & 25 & 63 & 37 & 132 & 68 & 231 & 41 & {\bf 83} & 39 & 101 \\ \hline
  CH$(a=1/2,b=n)$ & 5 & 8 & {\bf 24} & {\bf 61} & 34 & 120 & 58 & 198 & 66 & 132 & 37 & 103 \\ 
  CH$(a=1,b=n)$ & 6 & 9 & {\bf 24} & {\bf 61} & 34 & 120 & 58 & 197 & 66 & 134 & 37 & 98 \\ 
  CH$(a=1/2,b=n/2)$ & 7 & 11 & {\bf 24} & {\bf 61} & {\bf 33} & 116 & 56 & 188 & 75 & 148 & 36 & 97 \\ 
  CH$(a=1,b=n/2)$ & 7 & 13 & {\bf 24} & {\bf 61} & {\bf 33} & 115 & 55 & 187 & 77 & 135 & 36 & 94 \\ 
  Beta-prime & 5 & 8 & {\bf 24} & {\bf 61} & 34 & 120 & 58 & 197 & 66 & 132 & 37 & 103 \\ 
  ZS adapted & 6 & 9 & {\bf 24} & {\bf 61} & 34 & 119 & 55 & 197 & 66 & 125 & 37 & 99 \\ 
  Benchmark & 8 & 18 & 26 & 65 & {\bf 33} & {\bf 108} & 51 & 170 & 74 & 150 & 36 & 133 \\ 
  Robust & 7 & 13 & 25 & 63 & {\bf 33} & 115 & 51 & 183 & 88 & 182 & 36 & 97 \\ 
  Intrinsic & 8 & 14 & 25 & 63 & {\bf 33} & 115 & 51 & 182 & 90 & 183 & 35 & 94 \\ 
  Hyper-$g/n$ & 5 & 12 & 25 & 65 & {\bf 33} & 109 & 52 & 172 & 84 & 162 & 36 & 97 \\ 
  DBF, $g=n$ & 5 & 6 & 25 & 63 & 37 & 132 & 68 & 231 & {\bf 40} & {\bf 83} & 39 & 101 \\ 
  TBF, $g=n$ & 5 & 6 & 25 & 63 & 37 & 132 & 68 & 231 & {\bf 40} & 84 & 39 & 101 \\ 
  Jeffreys & 4 & 9 & 26 & 67 & {\bf 33} & {\bf 108} & 51 & 169 & 87 & 165 & 35 & 97 \\ 
  Hyper-$g$ & 4 & 7 & 26 & 68 & {\bf 33} & {\bf 108} & 51 & {\bf 168} & 87 & 164 & {\bf 34} & 112 \\ 
  Uniform & 3 & 7 & 26 & 70 & {\bf 33} & {\bf 108} & 51 & {\bf 168} & 87 & 164 & {\bf 34} & 121 \\ 
  Local EB & {\bf 2} & {\bf 4} & 26 & 71 & {\bf 33} & {\bf 108} & 51 & {\bf 168} & 99 & 256 & {\bf 34} & 222 \\ 
  AIC & 17 & 40 & 28 & 74 & 34 & 115 & {\bf 46} & 171 & 120 & 284 & 37 & {\bf 79} \\ 
  BIC & 5 & 6 & 25 & 63 & 37 & 132 & 68 & 231 & {\bf 40} & 84 & 39 & 100 \\ 
   \hline
\end{tabular}
\end{table}

\end{document}